\newtheorem{theorem}{Theorem}
\newtheorem{lemma}[theorem]{Lemma}
\newtheorem{definition}[theorem]{Definition}
\newtheorem{observation}[theorem]{Observation}
\newcommand{\labeltarget}[1]{\Hy@raisedlink{\hypertarget{#1}{}}}
\setlist[enumerate]{nosep,topsep=0em}
\setlist[enumerate,1]{label=(\roman*), leftmargin=2.2em}
\setlist[itemize]{nosep,topsep=0.1em}
\newcommand\appendtographicspath[1]{%
  \g@addto@macro\Ginput@path{#1}%
}
      \name{author}{1}{}{%
        {{hash=2bd6f137693236784702d60381663ecc}{%
           family={Adjiashvili},
           familyi={A\bibinitperiod},
           given={D.},
           giveni={D\bibinitperiod}}}%
      }
      \name{author}{3}{}{%
        {{hash=29eb2139a42aba87a1c83705c2c1d970}{%
           family={Byrka},
           familyi={B\bibinitperiod},
           given={J.},
           giveni={J\bibinitperiod}}}%
        {{hash=38a2d822c8bdccff8c37876b3745583e}{%
           family={Grandoni},
           familyi={G\bibinitperiod},
           given={F.},
           giveni={F\bibinitperiod}}}%
        {{hash=d2d910f493ea0fa65d9dc3e7f7320b04}{%
           family={Jabal\bibnamedelima Ameli},
           familyi={J\bibinitperiod\bibinitdelim A\bibinitperiod},
           given={A.},
           giveni={A\bibinitperiod}}}%
      }
      \name{author}{2}{}{%
        {{hash=af103a71de13714b5ada5c88681dab8c}{%
           family={Cheriyan},
           familyi={C\bibinitperiod},
           given={J.},
           giveni={J\bibinitperiod}}}%
        {{hash=7605b60e61210edb67ac210db54a3ab8}{%
           family={Gao},
           familyi={G\bibinitperiod},
           given={Z.},
           giveni={Z\bibinitperiod}}}%
      }
      \name{author}{2}{}{%
        {{hash=af103a71de13714b5ada5c88681dab8c}{%
           family={Cheriyan},
           familyi={C\bibinitperiod},
           given={J.},
           giveni={J\bibinitperiod}}}%
        {{hash=7605b60e61210edb67ac210db54a3ab8}{%
           family={Gao},
           familyi={G\bibinitperiod},
           given={Z.},
           giveni={Z\bibinitperiod}}}%
      }
      \name{author}{4}{}{%
        {{hash=af103a71de13714b5ada5c88681dab8c}{%
           family={Cheriyan},
           familyi={C\bibinitperiod},
           given={J.},
           giveni={J\bibinitperiod}}}%
        {{hash=ffedab1d748cb0d63d6b25303a9e7eef}{%
           family={Karloff},
           familyi={K\bibinitperiod},
           given={H.},
           giveni={H\bibinitperiod}}}%
        {{hash=1d0e7bb04d059786f75e7b0980ed7704}{%
           family={Khandekar},
           familyi={K\bibinitperiod},
           given={R.},
           giveni={R\bibinitperiod}}}%
        {{hash=387c636434a080ffe055471c993c4fbd}{%
           family={K{ö}nemann},
           familyi={K\bibinitperiod},
           given={J.},
           giveni={J\bibinitperiod}}}%
      }
      \name{author}{3}{}{%
        {{hash=af103a71de13714b5ada5c88681dab8c}{%
           family={Cheriyan},
           familyi={C\bibinitperiod},
           given={J.},
           giveni={J\bibinitperiod}}}%
        {{hash=6ec0714f9027d48a3d13a6a19ff499a7}{%
           family={Jord{á}n},
           familyi={J\bibinitperiod},
           given={T.},
           giveni={T\bibinitperiod}}}%
        {{hash=6c06426f10eb237c91b9e29d1601fe79}{%
           family={Ravi},
           familyi={R\bibinitperiod},
           given={R.},
           giveni={R\bibinitperiod}}}%
      }
      \name{author}{2}{}{%
        {{hash=1585ea8f3adc309afbe540be7a8cee49}{%
           family={Cohen},
           familyi={C\bibinitperiod},
           given={N.},
           giveni={N\bibinitperiod}}}%
        {{hash=b1b58c0f347d5ff53d4342f5372b2662}{%
           family={Nutov},
           familyi={N\bibinitperiod},
           given={Z.},
           giveni={Z\bibinitperiod}}}%
      }
      \name{author}{3}{}{%
        {{hash=af103a71de13714b5ada5c88681dab8c}{%
           family={Cheriyan},
           familyi={C\bibinitperiod},
           given={J.},
           giveni={J\bibinitperiod}}}%
        {{hash=74b321773e282ddc11f23aca7e00e56f}{%
           family={Seb{ő}},
           familyi={S\bibinitperiod},
           given={A.},
           giveni={A\bibinitperiod}}}%
        {{hash=17db72bd8c176fb35dbbc64ee43629f7}{%
           family={Szigeti},
           familyi={S\bibinitperiod},
           given={Z.},
           giveni={Z\bibinitperiod}}}%
      }
      \name{author}{3}{}{%
        {{hash=9913f2fd88a6db28bf12d2a5d39b0ee0}{%
           family={Cecchetto},
           familyi={C\bibinitperiod},
           given={F.},
           giveni={F\bibinitperiod}}}%
        {{hash=86435d42fde1aef89473f886121a79eb}{%
           family={Traub},
           familyi={T\bibinitperiod},
           given={V.},
           giveni={V\bibinitperiod}}}%
        {{hash=159a9da7df6e4b5c4e08c591850259b4}{%
           family={Zenklusen},
           familyi={Z\bibinitperiod},
           given={R.},
           giveni={R\bibinitperiod}}}%
      }
      \name{author}{4}{}{%
        {{hash=347c54452fca0292812c7e7ab9c70218}{%
           family={Even},
           familyi={E\bibinitperiod},
           given={G.},
           giveni={G\bibinitperiod}}}%
        {{hash=bbdff7daae75a919debfeba64791c251}{%
           family={Feldman},
           familyi={F\bibinitperiod},
           given={J.},
           giveni={J\bibinitperiod}}}%
        {{hash=0cb1a8ab3b06a57ccb322b17d6ddd1fb}{%
           family={Kortsarz},
           familyi={K\bibinitperiod},
           given={G.},
           giveni={G\bibinitperiod}}}%
        {{hash=b1b58c0f347d5ff53d4342f5372b2662}{%
           family={Nutov},
           familyi={N\bibinitperiod},
           given={Z.},
           giveni={Z\bibinitperiod}}}%
      }
      \name{author}{4}{}{%
        {{hash=f196886882907c27f48160664952a877}{%
           family={Fiorini},
           familyi={F\bibinitperiod},
           given={S.},
           giveni={S\bibinitperiod}}}%
        {{hash=ac4e6ea1f9ba9111a82d25b7087d4ceb}{%
           family={Groß},
           familyi={G\bibinitperiod},
           given={M.},
           giveni={M\bibinitperiod}}}%
        {{hash=387c636434a080ffe055471c993c4fbd}{%
           family={Könemann},
           familyi={K\bibinitperiod},
           given={J.},
           giveni={J\bibinitperiod}}}%
        {{hash=f9651db0fbc375f52c0ed3a42371d346}{%
           family={Sanità},
           familyi={S\bibinitperiod},
           given={L.},
           giveni={L\bibinitperiod}}}%
      }
      \name{author}{2}{}{%
        {{hash=15d86a77fdaec061fbd9d31a5fe96657}{%
           family={Frederickson},
           familyi={F\bibinitperiod},
           given={G.\bibnamedelimi N.},
           giveni={G\bibinitperiod\bibinitdelim N\bibinitperiod}}}%
        {{hash=416a4b59255cd5c2b8bc4f33528d475b}{%
           family={J{á}J{á}},
           familyi={J\bibinitperiod},
           given={J.},
           giveni={J\bibinitperiod}}}%
      }
      \name{author}{2}{}{%
        {{hash=83f32acf22e5a03bc5bbdddc49028373}{%
           family={Frank},
           familyi={F\bibinitperiod},
           given={A.},
           giveni={A\bibinitperiod}}}%
        {{hash=5e8d3866cb3df6fce0856b67e92f347a}{%
           family={Tardos},
           familyi={T\bibinitperiod},
           given={{É}.},
           giveni={É\bibinitperiod}}}%
      }
      \name{author}{3}{}{%
        {{hash=38a2d822c8bdccff8c37876b3745583e}{%
           family={Grandoni},
           familyi={G\bibinitperiod},
           given={F.},
           giveni={F\bibinitperiod}}}%
        {{hash=9456eb5bbba3b0950a0793782e292b40}{%
           family={Kalaitzis},
           familyi={K\bibinitperiod},
           given={C.},
           giveni={C\bibinitperiod}}}%
        {{hash=159a9da7df6e4b5c4e08c591850259b4}{%
           family={Zenklusen},
           familyi={Z\bibinitperiod},
           given={R.},
           giveni={R\bibinitperiod}}}%
      }
      \name{author}{6}{}{%
        {{hash=b064c6042cf492d2a9c396f25aa9c366}{%
           family={Goemans},
           familyi={G\bibinitperiod},
           given={M.\bibnamedelimi X.},
           giveni={M\bibinitperiod\bibinitdelim X\bibinitperiod}}}%
        {{hash=b1bf114c2ff3ceccfeeff0b2c606178c}{%
           family={Goldberg},
           familyi={G\bibinitperiod},
           given={A.\bibnamedelimi V.},
           giveni={A\bibinitperiod\bibinitdelim V\bibinitperiod}}}%
        {{hash=722be60142159ee8f30d5abfe780858b}{%
           family={Plotkin},
           familyi={P\bibinitperiod},
           given={S.},
           giveni={S\bibinitperiod}}}%
        {{hash=660ccfc039ad1168a15976bb59aeb62f}{%
           family={Shmoys},
           familyi={S\bibinitperiod},
           given={D.\bibnamedelimi B.},
           giveni={D\bibinitperiod\bibinitdelim B\bibinitperiod}}}%
        {{hash=5e8d3866cb3df6fce0856b67e92f347a}{%
           family={Tardos},
           familyi={T\bibinitperiod},
           given={É.},
           giveni={É\bibinitperiod}}}%
        {{hash=46fe91d24ac1c8530d9e60ed85cccf1e}{%
           family={Williamson},
           familyi={W\bibinitperiod},
           given={D.\bibnamedelimi P.},
           giveni={D\bibinitperiod\bibinitdelim P\bibinitperiod}}}%
      }
      \name{author}{4}{}{%
        {{hash=b02fa04bca18e977dd057458761e6d06}{%
           family={Gr{ö}pl},
           familyi={G\bibinitperiod},
           given={C.},
           giveni={C\bibinitperiod}}}%
        {{hash=d6629c6db7786a757697df0598188ca5}{%
           family={Hougardy},
           familyi={H\bibinitperiod},
           given={S.},
           giveni={S\bibinitperiod}}}%
        {{hash=cc1d858cd0bc8b4cf53ce4513d6bcee2}{%
           family={Nierhoff},
           familyi={N\bibinitperiod},
           given={T.},
           giveni={T\bibinitperiod}}}%
        {{hash=60c04ac0182479f5652329a9ee3e35e5}{%
           family={Pr{ö}mel},
           familyi={P\bibinitperiod},
           given={H.\bibnamedelimi J.},
           giveni={H\bibinitperiod\bibinitdelim J\bibinitperiod}}}%
      }
      \name{editor}{2}{}{%
        {{hash=135300138dc121012503ad88e3792631}{%
           family={Cheng},
           familyi={C\bibinitperiod},
           given={Xiu\bibnamedelima Zhen},
           giveni={X\bibinitperiod\bibinitdelim Z\bibinitperiod}}}%
        {{hash=fcb129e27cb1af98a46a2eea34dbc582}{%
           family={Du},
           familyi={D\bibinitperiod},
           given={Ding-Zhu},
           giveni={D\bibinithyphendelim Z\bibinitperiod}}}%
      }
      \name{author}{2}{}{%
        {{hash=b064c6042cf492d2a9c396f25aa9c366}{%
           family={Goemans},
           familyi={G\bibinitperiod},
           given={M.\bibnamedelimi X.},
           giveni={M\bibinitperiod\bibinitdelim X\bibinitperiod}}}%
        {{hash=46fe91d24ac1c8530d9e60ed85cccf1e}{%
           family={Williamson},
           familyi={W\bibinitperiod},
           given={D.\bibnamedelimi P.},
           giveni={D\bibinitperiod\bibinitdelim P\bibinitperiod}}}%
      }
      \name{author}{3}{}{%
        {{hash=9b55880a8786897467b8a7053d7311df}{%
           family={Hunkenschröder},
           familyi={H\bibinitperiod},
           given={C.},
           giveni={C\bibinitperiod}}}%
        {{hash=c65b078d6a2509344cd026cbb06fcce1}{%
           family={Vempala},
           familyi={V\bibinitperiod},
           given={S.},
           giveni={S\bibinitperiod}}}%
        {{hash=db6f8570ec4ed34a694824b42d41440c}{%
           family={Vetta},
           familyi={V\bibinitperiod},
           given={A.},
           giveni={A\bibinitperiod}}}%
      }
      \name{author}{2}{}{%
        {{hash=927a71598ccf630505908a26c86369d9}{%
           family={Iglesias},
           familyi={I\bibinitperiod},
           given={J.},
           giveni={J\bibinitperiod}}}%
        {{hash=6c06426f10eb237c91b9e29d1601fe79}{%
           family={Ravi},
           familyi={R\bibinitperiod},
           given={R.},
           giveni={R\bibinitperiod}}}%
      }
      \name{author}{1}{}{%
        {{hash=94142c8aa4d40e61b42a8a32ede3aff8}{%
           family={Jain},
           familyi={J\bibinitperiod},
           given={K.},
           giveni={K\bibinitperiod}}}%
      }
      \name{author}{3}{}{%
        {{hash=0cb1a8ab3b06a57ccb322b17d6ddd1fb}{%
           family={Kortsarz},
           familyi={K\bibinitperiod},
           given={G.},
           giveni={G\bibinitperiod}}}%
        {{hash=c5a1ee081787e2cd582abefdaf0663c9}{%
           family={Krauthgamer},
           familyi={K\bibinitperiod},
           given={R.},
           giveni={R\bibinitperiod}}}%
        {{hash=30a2f39f6df29fe3b406d3114067bbc8}{%
           family={Lee},
           familyi={L\bibinitperiod},
           given={J.\bibnamedelimi R.},
           giveni={J\bibinitperiod\bibinitdelim R\bibinitperiod}}}%
      }
      \name{author}{2}{}{%
        {{hash=0cb1a8ab3b06a57ccb322b17d6ddd1fb}{%
           family={Kortsarz},
           familyi={K\bibinitperiod},
           given={G.},
           giveni={G\bibinitperiod}}}%
        {{hash=b1b58c0f347d5ff53d4342f5372b2662}{%
           family={Nutov},
           familyi={N\bibinitperiod},
           given={Z.},
           giveni={Z\bibinitperiod}}}%
      }
      \name{author}{2}{}{%
        {{hash=0cb1a8ab3b06a57ccb322b17d6ddd1fb}{%
           family={Kortsarz},
           familyi={K\bibinitperiod},
           given={G.},
           giveni={G\bibinitperiod}}}%
        {{hash=b1b58c0f347d5ff53d4342f5372b2662}{%
           family={Nutov},
           familyi={N\bibinitperiod},
           given={Z.},
           giveni={Z\bibinitperiod}}}%
      }
      \name{author}{2}{}{%
        {{hash=a38d3293f3e7f96c4b68213aed40e840}{%
           family={Khuller},
           familyi={K\bibinitperiod},
           given={S.},
           giveni={S\bibinitperiod}}}%
        {{hash=976b1c8f842748ace2534e3a65ddedf1}{%
           family={Thurimella},
           familyi={T\bibinitperiod},
           given={R.},
           giveni={R\bibinitperiod}}}%
      }
      \name{author}{2}{}{%
        {{hash=a38d3293f3e7f96c4b68213aed40e840}{%
           family={Khuller},
           familyi={K\bibinitperiod},
           given={S.},
           giveni={S\bibinitperiod}}}%
        {{hash=a222b8e576f27d1829e63dcff493fe7c}{%
           family={Vishkin},
           familyi={V\bibinitperiod},
           given={U.},
           giveni={U\bibinitperiod}}}%
      }
      \name{author}{3}{}{%
        {{hash=b81c3e0eab25be926263b55988f9f2b6}{%
           family={Lau},
           familyi={L\bibinitperiod},
           given={L.\bibnamedelimi C.},
           giveni={L\bibinitperiod\bibinitdelim C\bibinitperiod}}}%
        {{hash=6c06426f10eb237c91b9e29d1601fe79}{%
           family={Ravi},
           familyi={R\bibinitperiod},
           given={R.},
           giveni={R\bibinitperiod}}}%
        {{hash=804237a4343b0803e007d06a54a6fa3e}{%
           family={Singh},
           familyi={S\bibinitperiod},
           given={M.},
           giveni={M\bibinitperiod}}}%
      }
      \name{author}{1}{}{%
        {{hash=16cb953932abf3a0655d176c23f33438}{%
           family={Megiddo},
           familyi={M\bibinitperiod},
           given={N.},
           giveni={N\bibinitperiod}}}%
      }
      \name{author}{1}{}{%
        {{hash=6be84b3bbb388fb564436d4a74bb751a}{%
           family={Nagamochi},
           familyi={N\bibinitperiod},
           given={H.},
           giveni={H\bibinitperiod}}}%
      }
      \name{author}{1}{}{%
        {{hash=b1b58c0f347d5ff53d4342f5372b2662}{%
           family={Nutov},
           familyi={N\bibinitperiod},
           given={Z.},
           giveni={Z\bibinitperiod}}}%
      }
      \name{author}{1}{}{%
        {{hash=b1b58c0f347d5ff53d4342f5372b2662}{%
           family={Nutov},
           familyi={N\bibinitperiod},
           given={Z.},
           giveni={Z\bibinitperiod}}}%
      }
      \name{author}{2}{}{%
        {{hash=74b321773e282ddc11f23aca7e00e56f}{%
           family={Seb{ő}},
           familyi={S\bibinitperiod},
           given={A.},
           giveni={A\bibinitperiod}}}%
        {{hash=f5804a2ef553ee00908ef2f139cccc9f}{%
           family={Vygen},
           familyi={V\bibinitperiod},
           given={J.},
           giveni={J\bibinitperiod}}}%
      }
      \name{author}{2}{}{%
        {{hash=46fe91d24ac1c8530d9e60ed85cccf1e}{%
           family={Williamson},
           familyi={W\bibinitperiod},
           given={D.\bibnamedelimi P.},
           giveni={D\bibinitperiod\bibinitdelim P\bibinitperiod}}}%
        {{hash=660ccfc039ad1168a15976bb59aeb62f}{%
           family={Shmoys},
           familyi={S\bibinitperiod},
           given={D.\bibnamedelimi B.},
           giveni={D\bibinitperiod\bibinitdelim B\bibinitperiod}}}%
      }
      \name{author}{1}{}{%
        {{hash=f2148fd31fa0180fe151b0b5f491f844}{%
           family={Zelikovsky},
           familyi={Z\bibinitperiod},
           given={A.},
           giveni={A\bibinitperiod}}}%
      }
\patchcmd\blx@bblinput{\blx@blxinit}
                      {\blx@blxinit
                      }{}{\fail}
\definecolor{darkblue}{rgb}{0,0,0.38}
\definecolor{darkred}{rgb}{0.8,0,0}
\definecolor{darkgreen}{rgb}{0.1,0.35,0}
\DeclareMathOperator{\apex}{apex}
\DeclareMathOperator{\argmin}{argmin}
\DeclareMathOperator{\slack}{slack}
\newcommand\OPT{\ensuremath{\mathrm{OPT}}}
\newcommand\Drop{\ensuremath{\mathrm{Drop}}}
\renewcommand{\epsilon}{\varepsilon}
\let\@@pmod\pmod
\DeclareRobustCommand{\pmod}{\@ifstar\@pmods\@@pmod}
\def\@pmods#1{\mkern8mu({\operator@font mod}\mkern 6mu#1)}
\let\@@mod\mod
\DeclareRobustCommand{\mod}{\@ifstar\@mods\@@mod}
\def\@mods#1{\mkern8mu{\operator@font mod}\mkern 6mu#1}
\def\Cscr{\mathcal{C}}
\def\@fnsymbol#1{\ensuremath{\ifcase#1\or *\or %
\ddagger\or
    \mathsection\or \mathparagraph\or \|\or **\or \dagger\dagger
    \or \ddagger\ddagger \else\@ctrerr\fi}}
\title{A Better-Than-2 Approximation for Weighted Tree Augmentation\thanks{
This project received funding from Swiss National Science Foundation grant 200021\_184622 and the European Research Council (ERC) under the European Union's Horizon 2020 research and innovation programme (grant agreement No 817750).
}} 
\author{
Vera Traub\thanks{
Department of Mathematics, ETH Zurich, Zurich, Switzerland.
Email: \href{mailto:vera.traub@ifor.math.ethz.ch}%
{vera.traub@ifor.math.ethz.ch}.
}
\and
Rico Zenklusen\thanks{
Department of Mathematics, ETH Zurich, Zurich, Switzerland.
Email: \href{mailto:ricoz@ethz.ch}%
{ricoz@ethz.ch}.}
}
\date{}
\begin{document}

\maketitle
\thispagestyle{empty}
\addtocounter{page}{-1}
\enlargethispage{-1cm}

\begin{abstract}
We present an approximation algorithm for Weighted Tree Augmentation with approximation factor $1+\ln 2 + \epsilon < 1.7$. This is the first algorithm beating the longstanding factor of $2$, which can be achieved through many standard techniques.
\end{abstract}
 
\begin{tikzpicture}[overlay, remember picture, shift = {(current page.south east)}]
\begin{scope}[shift={(-1.1,2.5)}]
\def\hd{2.5}
\node at (-2*\hd,0) {\includegraphics[height=0.5cm]{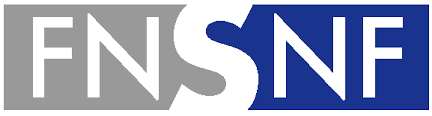}};
\node at (-\hd,0) {\includegraphics[height=1.0cm]{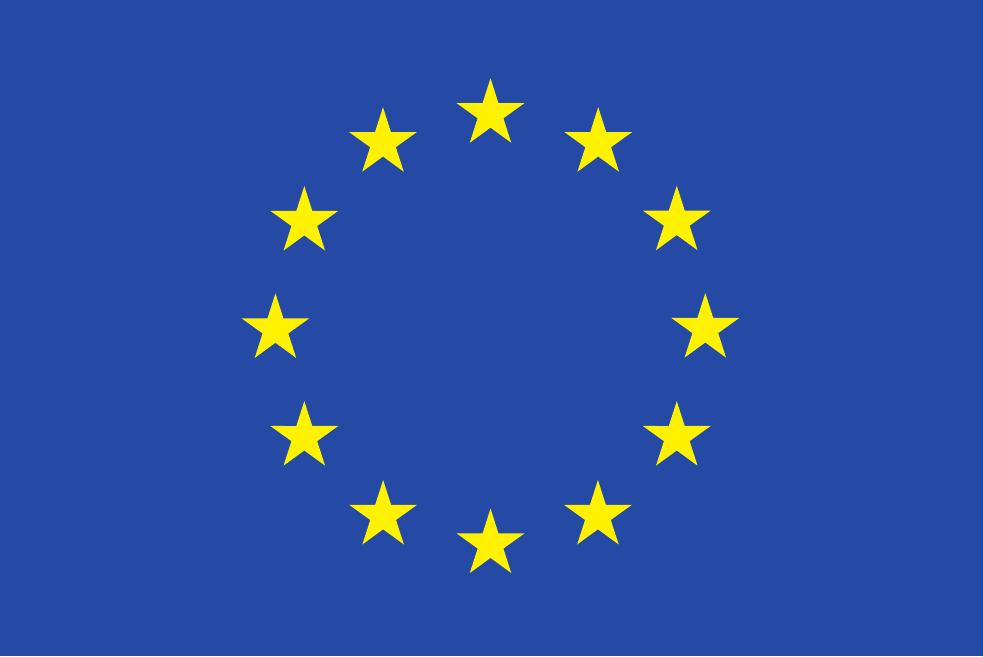}};
\node at (-0.2*\hd,0) {\includegraphics[height=1.2cm]{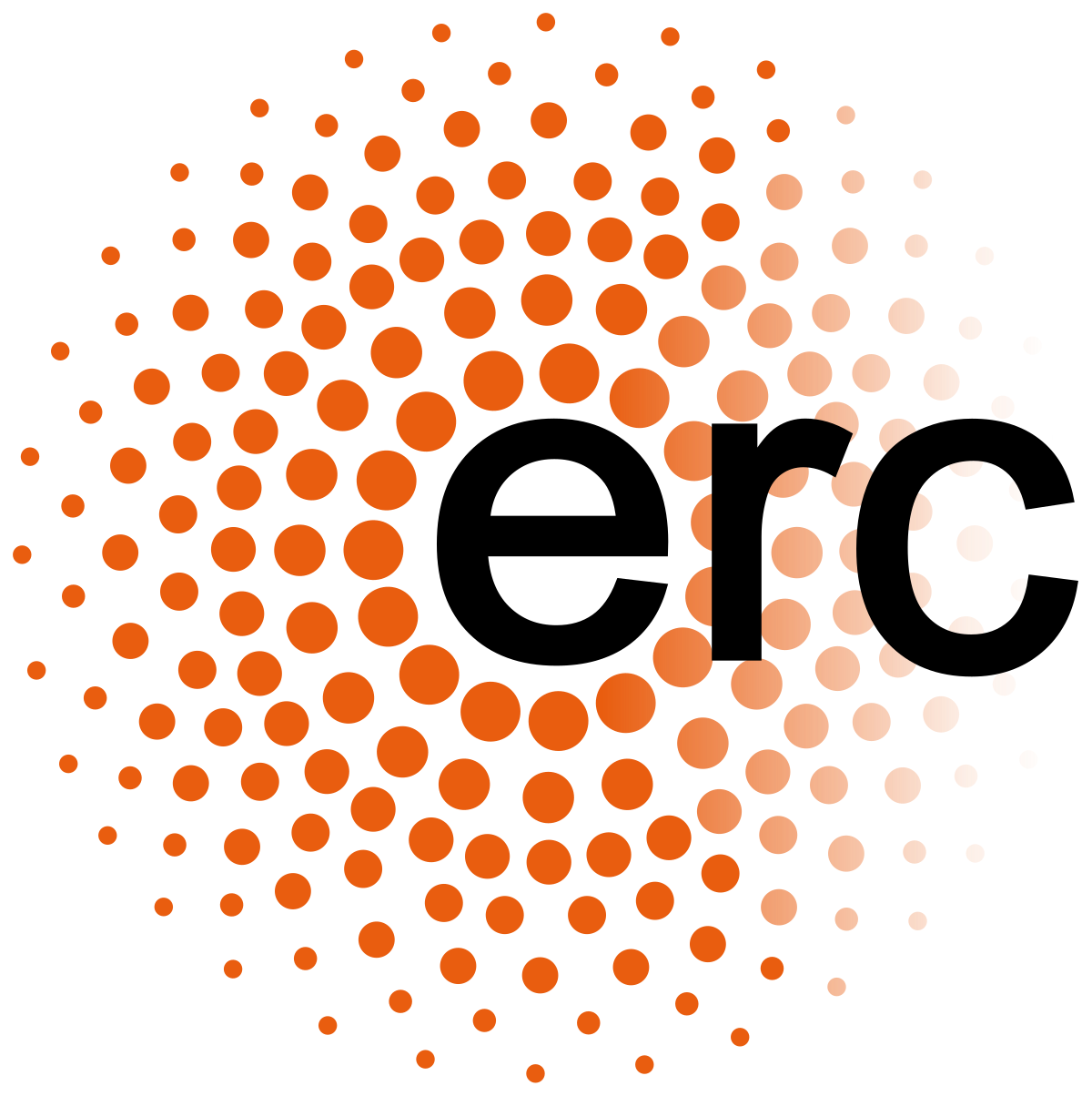}};
\end{scope}
\end{tikzpicture}

\clearpage

\section{Introduction}\label{sec:intro}

The Weighted Tree Augmentation Problem (WTAP) is among the most elementary and intensively studied connectivity augmentation problems. It asks how to increase the edge-connectivity of a graph from $1$ to $2$ in the cheapest possible way, and is formally described as follows. An instance consists of a spanning tree $G=(V,E)$ together with a set $L\subseteq \left(\!\begin{smallmatrix} V\\ 2 \end{smallmatrix}\!\right)$ of candidate edges to be added to $G$, which are also called \emph{links}, and positive link weights $w \colon L \to \mathbb{R}_{>0}$. The task is to find a minimum weight subset of links $F\subseteq L$ such that $(V,E\cup F)$ is $2$-edge-connected.%
\footnote{Depending on the literature, $0$-weight links may be allowed. This easily reduces to the case of strictly positive weights after including all $0$-weight links in the solution in a preprocessing step and continuing on the resulting reduced WTAP instance.}
It is easy to see that even though WTAP asks to augment the edge-connectivity of a spanning tree, it does capture the problem of increasing the edge-connectivity of an arbitrary $1$-edge-connected graph $G$ to $2$, because contracting all $2$-edge-connected components of $G$ leads to an equivalent WTAP instance. More generally, it is well-known that the problem of increasing the edge-connectivity of a graph from $k$ to $k+1$, for any odd $k$, reduces to WTAP (see, e.g.,~\textcite{cheriyan_1992_2-coverings}).

Already the unweighted version of WTAP, where all links have unit weight, is \NP-hard, even on trees of diameter~$4$, as shown by~\textcite{frederickson_1981_approximation}. The reduction they used was extended by~\textcite{kortsarz_2004_hardness} to show \APX-hardness of the unweighted version of WTAP.
Therefore, WTAP and special cases thereof have been heavily studied under the aspect of approximation algorithms. Prior to our work, the best approximation factor for WTAP was $2$, which was first shown by~\textcite{frederickson_1981_approximation} in the early '80s. They reduced the problem to finding a shortest arborescence, while losing a factor of $2$.
\citeauthor{frederickson_1981_approximation}'s procedure was subsequently simplified and significantly sped up by~\textcite{khuller_1993_approximation}. Moreover, many classical and very versatile techniques for network design problems developed later also lead to a $2$-approximation for WTAP.
This includes primal-dual approaches (see~\textcite{goemans_1994_improved}), the iterative rounding technique by~\textcite{jain_2001_factor}, and various further methods that are readily adaptable to WTAP (for example, a flow-based method by~\textcite{frank_1989_application} for certain directed connectivity problems; see also discussion in~\cite{khuller_1993_approximation} on how this relates to WTAP). 
However, despite extensive work on the problem and variations thereof, and some progress on special cases (see Section~\ref{sec:furtherRelatedWork}), the four decades old approximation factor of~$2$ by~\textcite{frederickson_1981_approximation} remained the state of the art.

\subsection{Our results}

In this paper, we present the first better-than-$2$ approximation for WTAP.
\begin{theorem}\label{thm:main}
For any $\epsilon > 0$, there is a $(1 + \ln 2 + \epsilon)$-approximation algorithm for WTAP.
\end{theorem}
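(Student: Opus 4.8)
The plan is to reduce WTAP to a problem where a Steiner-tree-type analysis applies, following the high-level strategy that recently broke the factor-$2$ barrier for unweighted Connectivity Augmentation. First I would set up the standard rooted-tree formulation: root the tree $G$ at an arbitrary vertex, and for each link $\ell = \{u,v\} \in L$ identify the set of tree edges it covers (those on the $u$–$v$ path) and, after contracting, think of each link as covering a subtree. The key structural object is the \emph{laminar family} of "still-uncovered" components as links are added. I would then aim to build a solution incrementally via a \emph{greedy / local-search on components}, where in each round one buys a carefully chosen bundle of links that merges several uncovered components at a good rate — this is where the "$\ln 2$" will come from, exactly as in the classical greedy analysis for Set Cover / Steiner Tree giving a harmonic-type bound that one sharpens to $\ln 2$ by only ever merging pairs (or by a relative-greedy / Zelikovsky-style argument).

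The core technical step is a \emph{decomposition / credit-assignment lemma}: I would show that an optimal WTAP solution $\OPT$ can be partitioned (after possibly contracting along $\OPT$) into "components" each of which, restricted to the tree, looks like a small well-structured gadget (e.g., a link together with the few tree edges it is responsible for). Then the cost of $\OPT$ can be charged against a fractional covering LP, and one runs the greedy bundle-buying step against that LP. Concretely, the pieces I would assemble are: (i) a \emph{reduction to small components} — standard in this line of work, where one argues WTAP instances can be assumed to have all "large" structures broken up, paying only a $(1+\epsilon)$ factor via a polynomial-time dynamic program over bounded-size pieces (this is the source of the $+\epsilon$ in the statement); (ii) a \emph{Steiner-tree-flavored relative greedy}, where each iteration buys a bundle whose marginal cost-to-coverage ratio is within a factor $\ln 2$ of optimal, using that every WTAP "gadget" has a $2$-link "$2$-cover" structure so losses compound multiplicatively like $\frac{1}{1}+\frac{1}{2}\cdots$ truncated, i.e.\ $\ln 2$; and (iii) gluing: show the union of all bought bundles is a feasible augmentation and its total weight is at most $(1+\ln 2 + \epsilon)\,\OPT$.

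I expect the main obstacle to be step (ii) — making the Steiner-tree-style relative greedy actually work in the \emph{weighted} setting. In unweighted CAP the "components" of $\OPT$ have bounded size for free; with arbitrary weights one cannot bound component size, so the reduction in (i) must be done more cleverly (e.g.\ a \emph{weight-dependent} coarsening, or a clever LP whose extreme points have bounded-complexity support, or splitting links by weight classes and arguing within each class). Equivalently, the obstacle is: design the right notion of "component" and the right potential function so that (a) $\OPT$ decomposes into components of polynomially bounded description, losing only $1+\epsilon$, and (b) each component admits a good local move, so that the greedy's accumulated loss telescopes to $\ln 2$ rather than $H_k = \ln k$. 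Once that is in place, the feasibility (gluing) and the LP-charging are routine — feasibility because covering every tree edge is a coverage condition that the greedy enforces by construction, and the $\OPT$ bound because the LP relaxation is a lower bound and the relative-greedy loss is multiplicative and capped. I would present the $(1+\epsilon)$-reduction and the definition of components first, then the greedy and its analysis, then the straightforward wrap-up.
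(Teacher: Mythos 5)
You have correctly located the general framework (a Zelikovsky-style relative greedy that improves a structured starting solution, with the $\ln 2$ arising from the greedy exchange analysis against $\OPT$), and this is indeed the route the paper takes: it starts from a $2$-approximate solution of non-overlapping up-links (Lemma~\ref{lem:2_approximation}) and repeatedly swaps in a component $C$ minimizing $\sfrac{w(C)}{w(\Drop_U(C))}$, so that the total cost telescopes to $w(\OPT)\bigl(1+\ln 2\bigr)$ plus an $\epsilon$ term. However, your plan has a genuine gap exactly where you flag "the main obstacle": you leave open what the components are and how $\OPT$ decomposes into them, and the candidate fixes you float would not work. Your step (i) — a $(1+\epsilon)$-loss reduction to bounded-size pieces — cannot exist in the form you describe: the paper's example in Figure~\ref{fig:constant_size_not_sufficient} exhibits an instance where \emph{no} replacement of up-links by constantly many links improves the $2$-approximate up-link solution at all, so any argument that restricts attention to constant-size (or bounded-description-after-coarsening) components is doomed for general weights. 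Likewise, splitting links by weight classes is precisely what the earlier bounded-ratio results do, and it is known not to extend to arbitrary weights; and the $+\epsilon$ in the theorem does not come from a size-truncation step.

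What actually closes the gap in the paper are two ingredients your proposal does not supply. First, the component class is not "small": it consists of \emph{$O(1)$-thin} link sets — sets in which every tree vertex lies on the tree paths of at most $\lceil\sfrac{2}{\epsilon}\rceil$ links — which have unbounded cardinality yet still admit a polynomial-time minimization of the ratio $\sfrac{w(C)}{w(\Drop_U(C))}$ via a bottom-up dynamic program combined with binary search on the ratio (Lemma~\ref{lem:finding_optimal_thin_components}). Second, one needs the decomposition theorem (Theorem~\ref{thm:decomposition}): after discarding a set $R$ of up-links with $w(R)\le\epsilon\, w(U)$, the solution $\OPT$ partitions into $\lceil\sfrac{1}{\epsilon}\rceil$-thin parts each of which fully covers the paths of the surviving up-links it is responsible for. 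Proving this requires a non-obvious choice, for each up-link $u$, of a covering subset $F_u\subseteq \OPT$ (taken minimal inside the set of links whose apex is a descendant of a lowest feasible ancestor $v_u$, not just any minimal cover — an arbitrary minimal cover provably fails, cf.\ Figure~\ref{fig:arbitrary_F_u}), together with a dependency-graph argument: the paths $(F_u,A_u)$ form a branching, long directed paths certify thinness, and a modular labeling of the arcs selects the cheap set $R$. Without these two pieces — the thin-component class with its DP, and the decomposition with the careful $F_u$ choice — the relative-greedy shell you describe cannot be instantiated, so the proposal as written does not constitute a proof.
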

In particular, for small enough $\epsilon >0$, we have $1 + \ln 2 + \epsilon < 1.7$. Our approach significantly deviates from the numerous recent techniques introduced in the context of TAP, i.e., the unweighted version of WTAP. More precisely, we develop a relative greedy algorithm, a method introduced by~\textcite{zelikovsky_1996_better} in the context of the Steiner Tree problem. 
\textcite{cohen_2013_approximation} later employed this approach to get a better-than-$2$ approximation for WTAP with bounded diameter, which inspired this work. Like other relative greedy algorithms, we iteratively contract well-chosen components. However, contrary to previous relative greedy approaches, including the above-mentioned ones, we rely on a (exponentially large) class of super-constant size components.
Nevertheless, we can efficiently find, within our novel class of components, the best one to contract next. This circumvents a key barrier of~\cite{cohen_2013_approximation}, namely that their components must have constant size to be able to enumerate over them, which requires the underlying tree to have constant diameter.
Moreover, we prove an approximate decomposition theorem for our components, which guarantees the existence of a good way to split any WTAP solution into candidate components for contraction.
This theorem, applied to an optimal WTAP solution, guarantees the existence of good components to contract.
We provide further details on our approach in Section~\ref{sec:relative_greedy}.

\subsection{Further related work}\label{sec:furtherRelatedWork}

Even though significant work on tree augmentation did not improve on the canonical approximation factor of $2$ for WTAP, remarkable progress has been achieved for numerous special cases. In particular, for the unweighted version, which is often simply called the \emph{Tree Augmentation Problem (TAP)}, a long line of research
~\cite{%
adjiashvili_2018_beating,%
cheriyan_2018_approximating_a,%
cheriyan_2018_approximating_b,%
cheriyan_2008_integrality,%
cohen_2013_approximation,%
even_2009_approximation,%
fiorini_2018_approximating,%
frederickson_1981_approximation,%
grandoni_2018_improved,%
khuller_1993_approximation,%
kortsarz_2016_simplified,%
kortsarz_2018_lp-relaxations,%
nagamochi_2003_approximation,%
nutov_2017_tree,%
grandoni_2018_improved%
}
led to the currently best approximation factor of $1.393$~\cite{cecchetto_2021_bridging}. This result even applies to the more general unweighted connectivity augmentation problem, which asks to increase the edge-connectivity of an arbitrary graph $G$ by one unit. (See also~\cite{byrka_2020_breaching,nutov_2020_approximation} for further recent results on connectivity augmentation.)
Starting with the work of \textcite{adjiashvili_2018_beating} and an elegant strengthening thereof by~\textcite{fiorini_2018_approximating}, approximation factors below~$2$ have been obtained for WTAP in the special case where the ratio of largest to smallest link weight is bounded by a constant. Subsequent further improvements on these procedures by~\textcite{grandoni_2018_improved} and~\textcite{cecchetto_2021_bridging} allowed for achieving approximation factors below $1.5$ for this case, and an elegant application of the round-or-cut framework, first employed in this context by \textcite{nutov_2017_tree}, allows for obtaining better-than-$2$ approximations for TAP even if the ratio of largest to smallest link weight is at most logarithmic in the size of the graph. Unfortunately, all these advances very deeply exploit that the max-to-min weight ratio of the links are bounded, and it seems highly unclear whether and how they could potentially be extended to WTAP.
Other special cases of WTAP, where an improvement over the approximation factor of $2$ has been achieved, is when the given tree has bounded diameter (see~\textcite{cohen_2013_approximation}), or when an optimal solution to a natural LP has no small fractional values~(see~\textcite{iglesias_2018_coloring}).

A natural generalization of WTAP is the \emph{$2$-edge-connected spanning subgraph problem} ($2$-ECSS). Contrary to WTAP, instead of starting with a spanning tree, one starts with an empty graph and needs to find a minimum cost set of edges leading to a $2$-edge-connected graph spanning all vertices. WTAP can be cast as $2$-ECSS by assigning to all tree edges a weight of zero such that they can be selected for free. Also for $2$-ECSS, the best known approximation factor is $2$, which can be achieved through a variety of elegant techniques~\cite{khuller_1994_biconnectivity}, including primal-dual methods~\cite{goemans_1995_general} and iterative rounding~\cite{jain_2001_factor} (see also~\cite{lau_2011_iterative,williamson_2011_design}). Progress beyond the factor $2$ has only been achieved for unweighted $2$-ECSS, where the task is to select a smallest number of edges to obtain a $2$-edge-connected spanning subgraph. After the first improvements~\cite{khuller_1994_biconnectivity,cheriyan_2001_improving}, this led to the currently best-known approximation factor of~$\sfrac{4}{3}$~\cite{hunkenschroder_2019_approximation,sebo_2014_shorter}.

\subsection{Organization of the paper}

We start with some very brief preliminaries in Section~\ref{sec:preliminaries}, which allows us to introduce basic terminology and notation, and formalize the well-known interpretation of WTAP as a covering problem.
In Section~\ref{sec:relative_greedy}, we then describe our relative greedy algorithm together with the main underlying results and show how they lead to Theorem~\ref{thm:main}. In particular, we also introduce our new class of components and describe their properties needed for our relative greedy procedure. These properties are proved in the following sections. More precisely, Section~\ref{sec:decomposition_proof} proves a decomposition theorem that guarantees that we can make progress by contracting a component that minimizes a natural selection function. Finally, in Section~\ref{sec:dynamic_program}, we show how a component minimizing the selection function can be found efficiently through dynamic programming.
\section{Preliminaries}\label{sec:preliminaries}

Let $(G=(V,E),L,w)$ be a WTAP instance. WTAP is naturally described as a covering problem, where the task is to select a minimum weight set of links that cover all $1$-cuts of the given tree $G$, which are the cuts containing a single tree edge $e$. Hence, the $1$-cuts correspond to the edges of $E$. The $1$-cuts that a single link $\ell\in L$ covers are described by the edge set $P_\ell\subseteq E$ of the unique path in $G$ between the endpoints of $\ell$. Because the addition of a link set $F\subseteq L$ to $G$ makes the graph $2$-edge-connected if and only if $F$ covers all $1$-cuts of $G$, WTAP can be formalized as the following natural covering problem.
\begin{equation}
\min\left\{  \sum_{\ell\in F} w(\ell) \colon F\subseteq L, \bigcup_{\ell\in F}P_\ell = E\right\}\tag{WTAP}
\end{equation}
We also say that a link $\ell$ \emph{covers} the edges $P_\ell$. Hence, a set of links is a WTAP solution if its links cover all edges.

For a link $\ell\in L$, we denote by $V_\ell\subseteq V$ the vertices of the path with edge set $P_\ell$, including its endpoints.
It is often convenient to assume that the WTAP instance is \emph{shadow-complete}, which means that for every link $\ell \in L$ and every two distinct vertices $v_1,v_2 \in V_\ell$, there is also a link $\{v_1,v_2\} \in L$ of same weight as $\ell$. In this case $\{v_1,v_2\}$ is called a \emph{shadow} of $\ell$. Note that a link $\ell_1\in L$ is a shadow of $\ell_2\in L$ if and only if $P_{\ell_1} \subseteq P_{\ell_2}$. Clearly, any WTAP instance can be transformed into an equivalent shadow-complete one by adding, for each link $\ell\in L$, all of its shadows, each with weight $w(\ell)$. Any such added shadow of a link $\ell\in L$ that gets selected in a solution can later be replaced by $\ell$ without changing the weight of the solution.

\section{Relative greedy algorithm for WTAP}\label{sec:relative_greedy}

As mentioned, our $(1+\ln 2 + \epsilon)$-approximation for WTAP is a relative greedy algorithm. The concept of relative greedy algorithms has been introduced by~\textcite{zelikovsky_1996_better} in the context of the Steiner Tree problem, and was later leveraged by~\textcite{cohen_2013_approximation} for WTAP with bounded diameter. The idea is to start with a weak but very well-structured approximation to the problem at hand, and then successively improve this solution by replacing parts of it. In the context of Steiner Tree,~\textcite{zelikovsky_1996_better} started with a simple $2$-approximation obtained by computing a minimum spanning tree over the terminals. This solution was then improved by successively finding an edge set to connect constantly many terminals, which is also called a \emph{component}, such that the cost of the component is (significantly) cheaper than the cost of the spanning tree edges that can be removed after including the component in the solution.

For WTAP, we start with the same highly structured $2$-approximation as~\textcite{cohen_2013_approximation} did, namely one only using so-called \emph{up-links} that are non-overlapping. Given a WTAP instance $(G=(V,E), L, w)$, we fix an arbitrary vertex $r\in V$, which we call \emph{root} from now on; an \emph{up-link} (with respect to $r$) is a link $\ell \in L$ such that one of its endpoints is on the unique path in $G$ between the root and the other endpoint. The statement below formalizes the properties of the $2$-approximate starting solution that our relative greedy procedure aims at improving. We denote by $L_{\mathrm{up}}\subseteq L$ the set of all up-links and by $\OPT\subseteq L$ an optimal solution to the WTAP instance.

\begin{lemma}[\cite{cohen_2013_approximation}]\label{lem:2_approximation}\footnote{%
This lemma follows from the observation that an up-link only solution that is at most a factor of $2$ heavier than $\OPT$ exists because one can start with $\OPT$ and replace each $\OPT$-link $\ell$ by two shadows $\ell_1,\ell_2$ of $\ell$ that are up-links and fulfill $P_{\ell} = P_{\ell_1} \cup P_{\ell_2}$.
Moreover, any up-link only solution can efficiently be transformed into one with only non-overlapping up-links that is no heavier by shortening up-links if necessary, i.e., replacing them by strict shadows. 
Finally, it remains to note that a cheapest non-overlapping up-link only solution can be found efficiently. This can be done through a dynamic program, or, alternatively, one can compute an optimal vertex solution to the canonical linear program, which is naturally integral because its constraint matrix is totally unimodular.}
Let $(G=(V,E),L,w)$ be a shadow-complete instance of WTAP.
Then we can in polynomial time compute a WTAP solution $U\subseteq L_{\mathrm{up}}$ such that
\begin{itemize}
\item $w(U) \le 2 \cdot w(\OPT)$, and
\item the edge sets $P_u$ for $u\in U$ are pairwise disjoint.
\end{itemize}
\end{lemma}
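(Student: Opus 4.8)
I would split the statement into three essentially independent pieces: \textbf{(I)} there is an up-link only WTAP solution of weight at most $2\,w(\OPT)$; \textbf{(II)} any up-link only solution can be turned, without increasing its weight, into one whose link-paths are pairwise disjoint; and \textbf{(III)} a minimum-weight up-link only solution with pairwise disjoint link-paths can be computed in polynomial time. Granting these, (I) together with (II) shows that the optimum found by (III) is at most $2\,w(\OPT)$, so the solution $U$ returned by (III) has all the required properties. Note that (I) is purely existential and uses $\OPT$, while (III) is the algorithmic part; this is exactly the split suggested by the footnote.

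For (I), fix the root $r$. For a link $\ell=\{a,b\}\in\OPT$, let $m:=\apex(\ell)$ be the vertex of $V_\ell$ closest to $r$. If $m\in\{a,b\}$ then $\ell$ is already an up-link. Otherwise $\{a,m\}$ and $\{m,b\}$ are both up-links with $P_{\{a,m\}}\cup P_{\{m,b\}}=P_\ell$ and $P_{\{a,m\}},P_{\{m,b\}}\subseteq P_\ell$, so by shadow-completeness both lie in $L$ with weight $w(\ell)$. Replacing each $\ell\in\OPT$ in this way yields a family $U_0\subseteq L_{\mathrm{up}}$ that still covers all of $E$ and satisfies $w(U_0)\le 2\,w(\OPT)$.

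For (II), I would repeatedly ``trim''. The structural fact to exploit is that in a tree the intersection of the edge sets of two paths is again the edge set of a (possibly empty) path; hence for up-links $u_1\ne u_2$ the overlap $P_{u_1}\cap P_{u_2}$ is a sub-path of $P_{u_1}$. Moreover this overlap cannot lie strictly in the interior of both $P_{u_1}$ and $P_{u_2}$: if it did, then at its endpoint closest to $r$ both paths would continue strictly rootward, hence both would contain the unique edge above that endpoint, contradicting that it is the rootmost vertex of the overlap. So the overlap is a prefix or a suffix of, say, $P_{u_2}$; replacing $u_2$ by the shadow whose path is $P_{u_2}\setminus P_{u_1}$ (a shorter up-link, in $L$ at the same weight by shadow-completeness, and simply discarded if this set is empty) keeps $E$ covered, since the removed edges remain covered by $u_1$, and it strictly decreases the potential $\sum_{\{u,u'\}}\lvert P_u\cap P_{u'}\rvert$. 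As this potential is a nonnegative integer, the process terminates with a pairwise-disjoint covering family whose weight never increased.

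For (III), as indicated in the footnote there are two options. One is a polynomial-time dynamic program over the rooted tree, whose state records, for a vertex $v$, which tree edge on the $r$--$v$ path is the rootmost one covered by a chosen link having its lower endpoint in the subtree of $v$; since a pairwise-disjoint up-link solution has exactly one such ``topmost reaching'' link through each such edge, this gives a clean recursion. The other option is an LP argument: orienting every tree edge toward $r$ and viewing each up-link as an arc from its descendant endpoint to its ancestor endpoint turns each $P_u$ into a directed path, so the incidence matrix between tree edges and up-links in the canonical covering LP is a network matrix and hence totally unimodular; an optimal vertex solution is therefore integral and polynomial-time computable, and applying the trimming of (II) to it produces the desired $U$. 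The only step requiring genuine care is (II) --- in particular the ``not interior to both'' claim and the termination potential; pieces (I) and (III) are routine bookkeeping together with a standard appeal to total unimodularity (or a standard tree dynamic program).
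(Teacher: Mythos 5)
Your proposal is correct and takes essentially the same route as the paper's (footnote) proof: the same three-part split into existence of an up-link solution of weight at most $2\cdot w(\OPT)$ via splitting each $\OPT$-link at its apex into two up-link shadows, conversion to pairwise disjoint up-links by shortening to shadows, and efficient computation of a cheapest up-link solution via a dynamic program or the totally unimodular canonical LP. You merely fill in details the paper leaves implicit (the prefix/suffix structure of overlaps with a termination potential, and the network-matrix argument for total unimodularity), all of which are sound.
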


Starting with a $2$-approximate up-link solution $U^*\subseteq L_{\mathrm{up}}$ as guaranteed by Lemma~\ref{lem:2_approximation}, we seek to identify a subset of the links in $U^*$ that can be replaced by a cheaper link set $C\subseteq L$. To this end, for any set of up-links $U\subseteq L_{\mathrm{up}}$ and link set $C\subseteq L$, we denote by
\begin{equation*}
\Drop_U(C) \coloneqq \left\{u\in U\colon P_u \subseteq \bigcup_{\ell\in C} P_{\ell}\right\}
\end{equation*}
the links of $U$ that only cover a subset of the edges covered by links in $C$, and can thus safely be removed from a solution once $C$ is added.

In a general replacement step, we have some up-links $U\subseteq U^*$ with $U\neq \emptyset$ left in our current solution, and we seek to find a link set $C\in \mathfrak{L}$, where $\mathfrak{L} \subseteq 2^L$ is a well-chosen family, such that $C$ is a minimizer of 
\begin{equation}\label{eq:selectFuncOptimization}
\min\left\{\frac{w(F)}{w(\Drop_U(F))} \colon F\in \mathfrak{L}\right\}\enspace,
\end{equation}
i.e., it has the best ratio between the weight $w(C)$ of the links to be added versus the weight of the links in $\Drop_U(F)$, which can safely be removed. By convention, we interpret $\sfrac{w(F)}{w(\Drop_U(F))}$ as being $\infty$ whenever $w(\Drop_U(F))=0$, i.e., $\Drop_U(F)=\emptyset$.

The main challenge in such a relative greedy approach lies in finding a strong family $\mathfrak{L}\subseteq 2^L$ which simultaneously fulfills the following desired properties for any set $U\subseteq L_{\mathrm{up}}$ of non-overlapping up-links:
\begin{enumerate}[topsep=0.3em,label=(\alph*)]
\item\label{item:propFindBestComp} The minimization problem~\eqref{eq:selectFuncOptimization} can be solved efficiently.
\item\label{item:goodProgressWhenWUisLarge} If $w(U)$ is significantly heavier than $w(\OPT)$, then there is a set $C\in \mathfrak{L}$ for which $\sfrac{w(C)}{w(\Drop_U(C))}$ is significantly below $1$.
\end{enumerate}
Analogous to prior work on relative greedy algorithms, we call the sets in the family $\mathfrak{L}$ also \emph{components}. So far, relative greedy algorithms relied on constant-size components. In particular, for Steiner Tree, \citeauthor{zelikovsky_1996_better}'s~\cite{zelikovsky_1996_better} components were edge sets connecting constantly many terminals, and for WTAP with bounded diameter, \textcite{cohen_2013_approximation} considered sets of constantly many links.

Constant-size components have the obvious benefit that they allow for obtaining property~\ref{item:propFindBestComp} in a straightforward way. Moreover, their simple definition often makes it much easier to understand what one can achieve with such components, i.e., what improvements are possible by adding such a component to a solution. However, as we highlight in Figure~\ref{fig:constant_size_not_sufficient}, constant-size components do not allow in general for improving an up-link solution to a better-than-$2$ approximation. Hence, the bounded diameter restriction in~\cite{cohen_2013_approximation} is crucial for constant-size components to work.

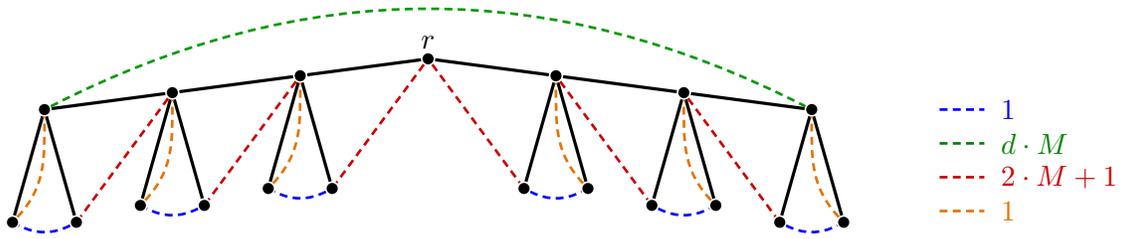
\begin{figure}[!ht]
\begin{center}
\begin{tikzpicture}[yscale=0.75,xscale=0.85]

\tikzset{
lks/.style={line width=1pt, densely dashed},
}

\begin{scope}[every node/.style={thick,draw=black,fill=black,circle,minimum size=0pt, inner sep=1.2pt, outer sep=1pt}]

\node (1) at (0,2) {};
\node (2) at (2,2.3) {};
\node (3) at (4,2.6) {};
\node (0) at (6,2.9) {};
\node (4) at (8,2.6) {};
\node (5) at (10,2.3) {};
\node (6) at (12,2) {};

\node (1a) at (0.5,0) {};
\node (2a) at (2.5,0.3) {};
\node (3a) at (4.5,0.6) {};
\node (4a) at (7.5,0.6) {};
\node (5a) at (9.5,0.3) {};
\node (6a) at (11.5,0) {};

\node (1b) at (-0.5,0) {};
\node (2b) at (1.5,0.3) {};
\node (3b) at (3.5,0.6) {};
\node (4b) at (8.5,0.6) {};
\node (5b) at (10.5,0.3) {};
\node (6b) at (12.5,0) {};

\end{scope}

\node[above] (r) at (0) {$r$};

\begin{scope}[very thick]
\draw (1) -- (2) -- (3) -- (0) -- (4) -- (5) -- (6);
\draw (1a) -- (1) -- (1b);
\draw (2a) -- (2) -- (2b);
\draw (3a) -- (3) -- (3b);
\draw (4a) -- (4) -- (4b);
\draw (5a) -- (5) -- (5b);
\draw (6a) -- (6) -- (6b);

\end{scope}

\begin{scope}[lks]

\draw[green!60!black] (1) to[out =30, in=150] (6);
\begin{scope}[darkred]
\draw (2) -- (1a);
\draw (3) -- (2a);
\draw (0) -- (3a);
\draw (0) -- (4a);
\draw (4) -- (5a);
\draw (5) -- (6a);
\end{scope}
\begin{scope}[orange!90!black]
\draw (1b) to[in=-90,out=50] (1);
\draw (2b) to[in=-90,out=50] (2);
\draw (3b) to[in=-90,out=50] (3);
\draw (4b) to[in=-90,out=130] (4);
\draw (5b) to[in=-90,out=130] (5);
\draw (6b) to[in=-90,out=130] (6);
\end{scope}
\begin{scope}[blue]
\draw (1a) to[bend left=30] (1b);
\draw (2a) to[bend left=30] (2b);
\draw (3a) to[bend left=30] (3b);
\draw (4a) to[bend right=30] (4b);
\draw (5a) to[bend right=30] (5b);
\draw (6a) to[bend right=30] (6b);
\end{scope}

\end{scope}

\begin{scope}[shift={(14,2)}]%
\def\ll{8mm} %
\def\vs{6mm} %

\begin{scope}[lks]
\draw[blue] (0,0) -- (0.7,0);
\draw[green!50!black] (0,-\vs) -- (0.7,-\vs);
\draw[darkred] (0,-2*\vs) -- (0.7,-2*\vs);
\draw[orange!90!black] (0,-3*\vs) -- (0.7,-3*\vs);
\end{scope}
\node[right,blue] at (\ll,0) {$1$};
\node[right,green!50!black] at (\ll,-\vs) {$d\cdot M$};
\node[right,darkred] at (\ll,-2*\vs) {$2\cdot M+1$};
\node[right,orange!90!black] at (\ll,-3*\vs) {$1$};
\end{scope}

\end{tikzpicture}
 \end{center}

\caption{An instance of WTAP showing that components of constant size are not sufficient to achieve an approximation ratio below $2$.
The tree $G$ is shown in black and dashed lines represent links. The link set $L$ consists of the drawn links together with all their shadows. The weights of the links are $1$ for the blue and orange links, $d\cdot M$ for the green link, and $2\cdot M +1$ for the red links. Here, $d$ is the number of red/blue/orange links, i.e., $d=6$ in this example, and $M$ is a  large constant.
\\
Then the union of the green link and the blue links is the unique optimal solution $\OPT$. The red and orange links form together a solution $U$ with $w(U)= 2\cdot w(\OPT)$. All red and orange links are up-links and the edge sets $P_u$ with $u\in U$ are pairwise disjoint.  ($U$ is even a cheapest up-link only solution, i.e., our 2-approximation algorithm might indeed output this solution.)
However, if we consider any set $C\subseteq L$ of at most $\sfrac{d}{2}$ links, then $w(\Drop_U(C))$ is at most $w(C)$.
This shows that, in general, we cannot improve the $2$-approximation $U$ by replacing a subset of $U$ by at most $k$ other links for some constant $k$.
}\label{fig:constant_size_not_sufficient}
\end{figure}

The components $\mathfrak{L}\subseteq 2^L$ we use are link sets with at most constant overlap on vertices. We call such link sets \emph{$O(1)$-thin}, as formalized below.
\begin{definition}[$k$-thin link set]
Let $(G=(V,E),L,w)$ be a WTAP instance and let $k\in \mathbb{Z}_{\geq 0}$. A link set $C\subseteq L$ is \emph{$k$-thin} if, for each $v\in V$, we have $|\{\ell\in C\colon v\in V_\ell\}|\leq k$.
\end{definition}
For our algorithm, we fix the constant $k$ in the above definition depending on the error $\epsilon > 0$ of our $(1+\ln 2 + \epsilon)$-approximation, by defining our components $\mathfrak{L}\subseteq 2^L$ to be all $\lceil\sfrac{2}{\epsilon}\rceil$-thin link sets.
Despite their super-constant size, this definition of components $\mathfrak{L}$ allows for efficiently solving~\eqref{eq:selectFuncOptimization} through a dynamic program, as formalized in the lemma below. Hence, our components $\mathfrak{L}$ fulfill property~\ref{item:propFindBestComp}.
\begin{restatable}{lemma}{LemmaOptimalComponents}\label{lem:finding_optimal_thin_components}
Let $k\in \mathbb{Z}_{\geq 0}$ be a constant, $(G=(V,E),L,w)$ be a WTAP instance, and $U\subseteq L_{\mathrm{up}}$ such that the edge sets $P_u$ for $u\in U$ are pairwise disjoint. Then we can compute in polynomial time a minimizer of 
\begin{equation*}
\min \left\{\frac{w(C)}{w(\Drop_U(C))} \colon C\subseteq L \text{ is $k$-thin}\right\}\enspace.
\end{equation*}
\end{restatable}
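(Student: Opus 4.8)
I would (i) strip off the ratio by passing to a parametrized additive objective, and (ii) solve the resulting problem by a bottom‑up dynamic program on the rooted tree $G$, exploiting that $k$ is constant and that edge coverage is a \emph{local} phenomenon.

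\textbf{Step 1: removing the ratio.} For $\lambda\ge 0$ set
$h(\lambda):=\min\{\,w(C)-\lambda\,w(\Drop_U(C)) : C\subseteq L \text{ is $k$-thin}\,\}$.
Since $C=\emptyset$ is $k$-thin and attains value $0$, we have $h(\lambda)\le 0$ for all $\lambda$, and writing $w(C)-\lambda\,w(\Drop_U(C))=w(\Drop_U(C))\cdot\big(\tfrac{w(C)}{w(\Drop_U(C))}-\lambda\big)$ for $\Drop_U(C)\ne\emptyset$ shows that $h(\lambda)<0$ exactly when $\lambda$ exceeds the optimal ratio $\lambda^\ast$. So it suffices to be able to compute, for any fixed $\lambda\ge 0$, the value $h(\lambda)$ together with a $k$-thin minimizer; $\lambda^\ast$ and a corresponding $C$ are then extracted by a standard parametric-search argument in the style of \textcite{megiddo_1979_combinatorial} (or, for rational weights, by binary search on $\lambda$ followed by rounding to the nearest ratio).

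\textbf{Step 2: the dynamic program.} Root $G$ at $r$ and let $T_v$ be the subtree at $v$. Two structural facts drive the DP. First, if $v\in V_\ell$ then $\ell$ has at least one endpoint in $T_v$; consequently the tree edge $e=\{v,\mathrm{parent}(v)\}$ is covered by a set $C$ iff some link of $C$ has \emph{exactly one} endpoint in $T_v$ (a link ``dangling at $v$'') and has $\mathrm{parent}(v)$ on its path — indeed a link with both endpoints in $T_v$ has its whole path inside $T_v$ and cannot contain $e$. So whether $e$ is covered, hence (since the $P_u$ are edge-disjoint) membership of each $u\in U$ in $\Drop_U(C)$, is determined locally. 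Second, every link dangling at $v$ passes through $v$, so a $k$-thin $C$ dangles at most $k$ links at $v$. Processing vertices in post-order, I take the DP state at $v$ to be the set $D$ of links of $C$ dangling at $v$ (at most $k$ links out of $\le|L|$, hence $|L|^{O(k)}$ possibilities), together with, for the at-most-one up-link $u\in U$ whose path uses $\{v,\mathrm{parent}(v)\}$, one bit recording whether all edges of $P_u$ encountered so far have been covered by links already committed. A link $\ell$ is charged $+w(\ell)$ once, at $\apex(\ell)$, where it ``closes''; an up-link $u$ is credited $-\lambda\,w(u)$ when we reach $\apex(u)$ with its bit still set.

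\textbf{Step 3: transitions, and the main obstacle.} The children of $v$ are merged one at a time (linearizing the branching), enumerating consistently across children which candidate links are taken; a link with both endpoints in $T_v$ becomes closed at $v$, and the number of such links plus $|D|$ must be $\le k$ — this is exactly where $k$-thinness at $v$ is enforced. When crossing $\{v,\mathrm{parent}(v)\}$ we use the first structural fact and the current set $D$ to test whether that edge, if it lies on some $P_u$, is covered, and update $u$'s bit; when a child's up-link bit reaches $\apex(u)=v$ we credit $-\lambda\,w(u)$ if the bit is still set. The number of states and transitions is polynomial for constant $k$, so $h(\lambda)$ and a minimizer are computed in polynomial time. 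I expect the main technical care to be twofold: correctly distinguishing links that ``turn'' at an internal apex (using two child edges) from dangling links, and keeping the chosen link set consistent across the several child subtrees that meet at a vertex — both handled by the careful state definition plus the left-to-right merge. The conceptual linchpin, which must be verified cleanly, is the locality fact of Step 2: a closed link cannot cover the edge leaving its subtree, so one never has to remember a coverage profile along a long $P_u$, only a single bit — without this the state space would blow up.
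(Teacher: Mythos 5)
Your proposal is correct and follows essentially the same route as the paper: the paper likewise replaces the ratio by a parametric slack objective $\rho\,w(\Drop_U(C))-w(C)$ resolved by binary search (with Megiddo's parametric search noted as an alternative), and maximizes it by a bottom-up dynamic program whose table entries are indexed by a vertex $v$, the set of at most $k$ links of the solution crossing out of the subtree below $v$ (your dangling set $D$, the paper's $Y$), and a single $\{+,-\}$ bit recording whether the unique up-link crossing that subtree boundary is already covered inside it. The only detail worth making explicit is that at $\lambda=\lambda^*$ the DP must break ties toward nonempty optima (the paper states this requirement explicitly) so that an actual minimizing component, and not merely the optimal value, is recovered.
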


We prove Lemma~\ref{lem:finding_optimal_thin_components} in Section~\ref{sec:dynamic_program}.
Our relative greedy algorithm for WTAP is described in Algorithm~\ref{algo:relative_greedy}, which, due to Lemma~\ref{lem:finding_optimal_thin_components}, is a polynomial-time procedure. In the algorithm, and discussion later on, we assume that $\epsilon >0 $ is a fixed constant.

\begin{algorithm2e}[H]
\KwIn{A shadow-complete WTAP instance $(G=(V,E),L,w)$.}
\KwOut{A WTAP solution $F\subseteq L$ with $w(F) \le (1+\ln(2)+\epsilon)\cdot w(\OPT)$.}
\vspace*{2mm}
\begin{enumerate}[label=\arabic*.,ref=\arabic*,rightmargin=7mm]
\item\label{item:compute_U} Compute a WTAP solution $U\subseteq L_{\mathrm{up}}$ with $w(U) \le 2 \cdot w(\OPT)$ and disjoint $P_u$ for $u\in U$.
\item Initialize $F \coloneqq \emptyset$.
\item\label{item:choose_comp} While $U \neq \emptyset$:
\begin{itemize}[itemsep=0.5em]
\item Compute a minimizer $\displaystyle C\in \argmin\left\{\frac{w(C)}{w(\Drop_U(C))}\colon C\subseteq L \text{ is $\left\lceil\sfrac{2}{\epsilon}\right\rceil$-thin}\right\}$.

\item Add $C$ to $F$ and replace $U$ by $U\setminus \Drop_U(C)$.
\end{itemize}
\smallskip
\item Return $F$.
\end{enumerate}
\caption{Relative greedy algorithm for WTAP}\label{algo:relative_greedy}
\end{algorithm2e}

To make sure that our relative greedy algorithm is able to achieve approximation factors strictly below~$2$, it remains to show a formal version of property~\ref{item:goodProgressWhenWUisLarge}, i.e., that there is a profitable replacement step whenever $w(U)$ is significantly larger than $w(\OPT)$.

In order to prove that there is an improving replacement when we start, a natural reasoning, which has been used similarly in prior work, is as follows. Consider any up-link WTAP solution $U\subseteq L_{\mathrm{up}}$ with disjoint sets $P_u$ for $u\in U$. Ideally, we would like to find a partition of $\OPT$ into $\lceil\sfrac{2}{\epsilon}\rceil$-thin components $C_1,\ldots, C_q$ such that, for each $u\in U$, there is one component $C_j$ with $u\in \Drop_U(C_j)$. Notice that such a decomposition of $\OPT$ would immediately imply $\sum_{j=1}^q w(\Drop_U(C_j)) \geq w(U)$. Hence, by an averaging argument we have that if $w(U) > w(\OPT)$, then there exists a $\lceil\sfrac{2}{\epsilon}\rceil$-thin component $C_j$ with $w(C_j) < w(\Drop_U(C_j))$. As shown by~\textcite{cohen_2013_approximation}, this strategy works out in the bounded diameter case when dealing with constant-size components. However, such a decomposition does not exist in general WTAP instances, even when using the significantly more general class of $O(1)$-thin components.\footnote{Figure~\ref{fig:arbitrary_F_u} shows a bad example where such a decomposition does not exist. Indeed, to make sure that every link in $U$ is covered by at least one component, all links of the WTAP solution $F$ must be in the same component. As $v\in V_\ell$ for all $m+1$ links $\ell\in F$, this leads to a component that is not $m$-thin, where $m$ can be chosen arbitrarily large in the example.}
Nevertheless, as stated below, we can show that a slightly weaker statement holds, namely that such a decomposition exists if we first remove a well-chosen subset of up-links $R\subseteq U$ of small total weight. 
We later invoke the theorem with $U$ being an up-link WTAP solution as guaranteed by Lemma~\ref{lem:2_approximation} and the WTAP solution $F$ being $\OPT$.

\begin{restatable}[decomposition theorem]{theorem}{decompositionTheorem}\label{thm:decomposition}
Let $(G=(V,E),L,w)$ be a WTAP instance, $F\subseteq L$ be a WTAP solution, and let $U\subseteq L_{\mathrm{up}}$ be a set of up-links such that the sets $P_u$ with $u\in U$ are pairwise disjoint.
Then, for any $\epsilon >0$, there exists a partition $\Cscr$ of $F$ into $\lceil\sfrac{1}{\epsilon}\rceil$-thin sets and a set $R\subseteq U$ such that
\begin{enumerate}
\item for every $u\in U\setminus R$, there exists some $C \in \Cscr$ such that $P_u \subseteq \bigcup_{\ell\in C} P_{\ell}$, and  \label{item:links_outside_R_covered}
\item $w(R) \le \epsilon \cdot w(U)$.
\end{enumerate}
\end{restatable}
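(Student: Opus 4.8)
The plan is to root the tree at $r$ and process the up-links in $U$ in a suitable order, greedily building the components of $\Cscr$ so that each component is a connected piece of $F$ whose covered edges form a subtree of $G$, and so that the up-links of $U$ handled by a component are exactly the ones whose path sits inside that subtree. The key structural observation is that each up-link $u\in U$ has a unique \emph{top} vertex $\topv(u)$ (the endpoint closer to $r$) and a unique \emph{bottom} vertex, and since the paths $P_u$ are pairwise disjoint, the top vertices give a natural partial order; I would process up-links from the leaves of $G$ toward $r$. For a given $u$, let $F_u \subseteq F$ be an inclusionwise-minimal subset of $F$ that covers $P_u$ (such a set exists since $F$ is a WTAP solution). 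The union $\bigcup_{\ell\in F_u}P_\ell$ is connected and contains $P_u$; I would like to declare $F_u$ a component. The obstruction, exactly as flagged in the footnote around Figure~\ref{fig:arbitrary_F_u}, is that a single vertex $v$ may lie on $V_\ell$ for many links $\ell\in F_u$, so $F_u$ need not be $\lceil 1/\epsilon\rceil$-thin; worse, different $F_u$'s may overlap, so they do not immediately form a partition of $F$.

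The fix, and the technical heart of the argument, is a charging/truncation scheme. First, to get a partition of $F$ rather than an overlapping cover: process the $u\in U$ in the leaves-to-root order, and when handling $u$, let the new component be $C_u := F_u \setminus (\text{links already placed in earlier components})$; a link already committed to an earlier component's covered subtree still helps cover $P_u$, so $P_u$ is still covered by $\bigl(\bigcup \text{earlier components}\bigr)\cup C_u$ — but this is covering by a \emph{union of components}, not a single one, which is not what Property~\ref{item:links_outside_R_covered} asks. So instead I would be more careful: maintain the invariant that the already-built components have covered-edge-sets forming a laminar family of subtrees of $G$, and when handling $u$, find the minimal already-built components whose subtrees meet $P_u$; if their union together with a cheap extension already covers $P_u$, merge them (this keeps laminarity) and absorb the extension, charging $u$ to that merged component; otherwise start a fresh minimal $F_u$-type component containing $u$. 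Laminarity plus disjointness of the $P_u$'s is what ultimately bounds the thinness of any single component, because at a fixed vertex $v$, only components whose subtrees are nested through $v$ can contribute, and along the way I would argue the nesting depth stays $O(1/\epsilon)$ after the truncation below.

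To control thinness within a component, I would use a "peeling" argument on each component $C$: if some vertex $v$ is covered by more than $\lceil 1/\epsilon\rceil$ links of $C$, these links are nested (all pass through the single vertex $v$ in the tree), so order them $\ell_1,\dots,\ell_t$ by $\subseteq$; keep only $\ell_1$ (the largest) and $\ell_t$ (the smallest that still does useful work), and remove the middle ones from $C$, shortening $\ell_1$ to a shadow if needed so the remaining links of $C$ still cover the previously covered subtree. Removing these "redundant" links from $C$ is exactly what forces us into the set $R$: an up-link $u$ that was assigned to $C$ but whose $P_u$ is no longer covered after peeling must be put into $R$. The weight bound $w(R)\le\epsilon\cdot w(U)$ is then obtained by a counting argument: each removed link $\ell\in F$ can "kill" the assignment of only a bounded number of up-links of $U$ relative to the total covered weight — more precisely, I would set up the bookkeeping so that the total $U$-weight dropped into $R$ is at most a $(1/\lceil 1/\epsilon\rceil)$-fraction of $w(U)$, using that each peeling step at a vertex $v$ removes links only when there are already $>\lceil 1/\epsilon\rceil$ of them, so a $(1-1/\lceil 1/\epsilon\rceil)$-fraction of the "coverage mass" at $v$ survives, and disjointness of the $P_u$'s lets us sum these local estimates over all of $U$ without double-counting. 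The main obstacle I anticipate is making the assignment $u\mapsto C$ and the peeling interact cleanly so that (i) the surviving components genuinely partition $F$, (ii) they are $\lceil 1/\epsilon\rceil$-thin, and (iii) the lost up-links are charged against enough surviving $U$-weight to yield the $\epsilon\cdot w(U)$ bound; getting all three simultaneously is the delicate part, and I expect the right invariant to be a laminar/nesting structure on the covered subtrees maintained throughout the leaves-to-root sweep.
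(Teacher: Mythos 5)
There is a genuine gap: your proposal never supplies the ingredient the paper's proof hinges on, namely a \emph{specific} choice of the covering sets $F_u$, and the repair you propose instead (laminar merging followed by ``peeling'') breaks exactly on the instance of Figure~\ref{fig:arbitrary_F_u}. You take $F_u$ to be an arbitrary inclusionwise-minimal subset of $F$ covering $P_u$; in that instance one may then get $F_{u_i}=\{\ell_0,\ell_i\}$ for every $i$, all $m+1$ links of $F$ contain the vertex $v$, and your merging rule (merge all earlier components whose covered subtrees meet $P_u$ so that a \emph{single} component covers $P_u$) forces everything into one component of thinness $m+1$. Peeling that component down to $\lceil\sfrac{1}{\epsilon}\rceil$ links through $v$ destroys the coverage of all but $O(\sfrac{1}{\epsilon})$ of the $m$ unit-weight up-links, so the up-links forced into $R$ have weight close to $w(U)$ rather than at most $\epsilon\cdot w(U)$. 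The flaw in your charging argument is a type mismatch: your local estimate bounds the fraction of \emph{links of $F$} surviving at a vertex, whereas the theorem needs a bound on the \emph{$U$-weight} of up-links whose paths become uncovered. Disjointness of the sets $P_u$ does not convert one into the other, since arbitrarily many links of $F$ through a single vertex can each be the unique cover of a private segment of a distinct up-link, so each peeled link can cost you one full up-link of $U$. (A secondary issue: ``shortening $\ell_1$ to a shadow'' is not permissible, because $\Cscr$ must be a partition of $F$ itself and the theorem does not assume shadow-completeness; likewise the peeled-off links must still be placed in some part of the partition, which your scheme leaves open.)

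For comparison, the paper resolves precisely this difficulty not by post-hoc peeling but by choosing $F_u$ inside $B_{v_u}$, where $v_u$ is the \emph{lowest} ancestor of the top endpoint of $u$ such that $P_u$ can be covered by links whose apex lies below $v_u$ (Section~\ref{sec:decomposition_proof}). This choice forces the links of each $F_u$ to form a chain ordered by apex-ancestry, makes the dependency graph a branching (Lemma~\ref{lem:branching}), and yields the key structural facts (Lemmas~\ref{lem:non_ancestors_do_not_touch_the_same_vertex}--\ref{lem:thinness}) that any path meeting at most $k$ of the sets $A_u$ spans a $(k+1)$-thin component. The set $R$ is then obtained by labeling the sets $A_u$ by their depth (counted in number of $A_{\bar u}$'s) and deleting the cheapest residue class modulo $k$, which is where the factor $\epsilon$ in $w(R)\le\epsilon\cdot w(U)$ comes from — an averaging over $k$ label classes of $U$-weight, not over link multiplicities at vertices. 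Your write-up correctly identifies the three properties that must hold simultaneously, but without an analogue of the apex-based choice of $F_u$ (or some other mechanism producing the chain structure), the decomposition and the weight bound cannot both be achieved by the sweep-and-peel scheme as described.
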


We prove the decomposition theorem in Section~\ref{sec:decomposition_proof}.
Using Theorem~\ref{thm:decomposition}, we readily obtain that Algorithm~\ref{algo:relative_greedy} has the desired approximation guarantee by leveraging known arguments (see, e.g.,~\cite{zelikovsky_1996_better,gropl_2001_approximation,cohen_2013_approximation}). For completeness, we provide a self-contained proof below.

\begin{theorem}
For every $\epsilon >0$, Algorithm~\ref{algo:relative_greedy} is a $(1+\ln 2 +\epsilon)$-approximation algorithm for WTAP.
\end{theorem}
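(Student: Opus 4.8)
The plan is a standard relative‑greedy analysis in the style of~\textcite{zelikovsky_1996_better} and~\textcite{cohen_2013_approximation}: Lemma~\ref{lem:2_approximation} supplies a well‑structured starting solution, and Theorem~\ref{thm:decomposition}, applied to $\OPT$, supplies the profitable replacement components. Before the cost estimate I would check that Algorithm~\ref{algo:relative_greedy} is well‑defined, feasible and runs in polynomial time. Feasibility follows from the invariant ``$F\cup U$ covers all edges of $G$'', which holds after Step~\ref{item:compute_U} and is preserved by Step~\ref{item:choose_comp}: by definition of $\Drop_U(C)$, every link removed from $U$ in an iteration has its tree path inside $\bigcup_{\ell\in C}P_\ell$, hence stays covered by the (now larger) $F$; as $U=\emptyset$ on termination, the returned $F$ is a WTAP solution. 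Termination holds because for any $u\in U$ the singleton $\{u\}$ is $1$‑thin, hence admissible in Step~\ref{item:choose_comp}, and $\Drop_U(\{u\})=\{u\}$ since the paths $P_{u'}$, $u'\in U$, are pairwise disjoint; so the minimum computed in Step~\ref{item:choose_comp} is always at most $1$, the selected $C$ has $\Drop_U(C)\neq\emptyset$, and $|U|$ strictly decreases each round. With Lemma~\ref{lem:finding_optimal_thin_components} this makes the whole algorithm polynomial.

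For the cost, let $U_0\coloneqq U^*\supsetneq U_1\supsetneq\dots\supsetneq U_t=\emptyset$ be the successive values of $U$, and in round $i$ let $C_i$ be the chosen component, $D_i\coloneqq\Drop_{U_{i-1}}(C_i)$, and $\rho_i\coloneqq w(C_i)/w(D_i)$, so $w(F)=\sum_{i=1}^t\rho_i\,w(D_i)$ with $w(D_i)=w(U_{i-1})-w(U_i)$. Fixing $i$, I would apply Theorem~\ref{thm:decomposition} to the WTAP solution $\OPT$, the up‑link set $U_{i-1}$ (a subset of $U^*$, so its paths are pairwise disjoint) and parameter $\epsilon/2$: this yields a partition $\Cscr$ of $\OPT$ into $\lceil 2/\epsilon\rceil$‑thin sets and a set $R\subseteq U_{i-1}$ with $w(R)\le\tfrac{\epsilon}{2}w(U_{i-1})$ such that every $u\in U_{i-1}\setminus R$ lies in $\Drop_{U_{i-1}}(C)$ for some $C\in\Cscr$. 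Hence $\sum_{C\in\Cscr}w\big(\Drop_{U_{i-1}}(C)\big)\ge w(U_{i-1}\setminus R)\ge(1-\tfrac{\epsilon}{2})w(U_{i-1})$ while $\sum_{C\in\Cscr}w(C)=w(\OPT)$; averaging over the $C\in\Cscr$ with nonempty drop set (a nonempty family, since $w(R)<w(U_{i-1})$ whenever $U_{i-1}\neq\emptyset$) gives some such $C$ of ratio at most $w(\OPT)/\big((1-\tfrac{\epsilon}{2})w(U_{i-1})\big)$. As this $C$ is $\lceil 2/\epsilon\rceil$‑thin it is admissible in Step~\ref{item:choose_comp}, so $\rho_i\le w(\OPT)/\big((1-\tfrac{\epsilon}{2})w(U_{i-1})\big)$; combining with $\rho_i\le1$ and writing $\beta\coloneqq w(\OPT)/(1-\tfrac{\epsilon}{2})$,
\[
 w(C_i)=\rho_i\big(w(U_{i-1})-w(U_i)\big)\;\le\;\min\!\Big(1,\tfrac{\beta}{w(U_{i-1})}\Big)\big(w(U_{i-1})-w(U_i)\big)\;\le\;\int_{w(U_i)}^{w(U_{i-1})}\min\!\Big(1,\tfrac{\beta}{x}\Big)\,dx,
\]
the last step using that $x\mapsto\min(1,\beta/x)$ is nonincreasing.

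Summing over $i$, and noting the intervals $[w(U_i),w(U_{i-1})]$ tile $[0,w(U^*)]$ since $w(U_t)=0$,
\[
 w(F)\;\le\;\int_0^{w(U^*)}\min\!\Big(1,\tfrac{\beta}{x}\Big)\,dx\;\le\;\beta\Big(1+\max\big(0,\ln\tfrac{w(U^*)}{\beta}\big)\Big)\;\le\;(1+\ln 2)\,\beta\;=\;\frac{1+\ln 2}{1-\epsilon/2}\,w(\OPT),
\]
where the last inequality uses $w(U^*)\le 2\,w(\OPT)$ from Lemma~\ref{lem:2_approximation}, so $w(U^*)/\beta\le 2(1-\tfrac{\epsilon}{2})\le 2$. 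Since moreover $w(F)=\sum_i\rho_i w(D_i)\le\sum_i w(D_i)=w(U^*)\le 2\,w(\OPT)$ holds unconditionally, a one‑line case check finishes: if $\epsilon\le 1-\ln 2$, clearing denominators gives $\tfrac{1+\ln 2}{1-\epsilon/2}\le 1+\ln 2+\epsilon$, and if $\epsilon>1-\ln 2$ then $2\le 1+\ln 2+\epsilon$; in both cases $w(F)\le(1+\ln 2+\epsilon)\,w(\OPT)$.

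The substantive work — efficiently finding a best $\lceil 2/\epsilon\rceil$‑thin component, and the approximate decomposition of $\OPT$ — is already packaged in Lemma~\ref{lem:finding_optimal_thin_components} and Theorem~\ref{thm:decomposition}, so the remaining steps are routine accounting. The points needing a little care are the two near‑degeneracies above: the averaging step requires at least one component with a nonempty drop set, which is exactly where $w(R)<w(U_{i-1})$ enters; and $\int_0\beta/x\,dx$ diverges at $0$, so the cap $\rho_i\le1$ from singleton components is indispensable — it is what stops the greedy from overpaying once $w(U)$ has dropped near $w(\OPT)$. The only other delicate item is the final arithmetic, which must be arranged so that the coefficient of $\epsilon$ comes out to exactly~$1$ rather than some larger constant (this is why the decomposition theorem is invoked with parameter $\epsilon/2$, matching the $\lceil 2/\epsilon\rceil$‑thin components used by the algorithm).
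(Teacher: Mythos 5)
Your proof is correct and follows essentially the same route as the paper: the invariant for feasibility, the decomposition theorem applied to $\OPT$ with parameter $\epsilon/2$ to get $\lceil 2/\epsilon\rceil$-thin components, the averaging argument, the ratio cap of $1$ via singleton up-links, and the integral comparison. The only (harmless) deviation is that you re-invoke Theorem~\ref{thm:decomposition} for each $U_{i-1}$ with $w(R)\le\tfrac{\epsilon}{2}w(U_{i-1})$, which yields the multiplicative bound $\tfrac{1+\ln 2}{1-\epsilon/2}\,w(\OPT)$ and a final case split on $\epsilon$, whereas the paper applies it once to $U_0$ and absorbs $w(R)\le \epsilon\, w(\OPT)$ additively inside the integrand; both arrive at $(1+\ln 2+\epsilon)\,w(\OPT)$.
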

\begin{proof}
Throughout the algorithm we maintain the invariant that $U\cup F$ is a WTAP solution. Hence, the returned link set $F$ is indeed a WTAP solution and it remains to bound its weight.

Let $U_0$ be the link set $U$ computed in step~\ref{item:compute_U} of Algorithm~\ref{algo:relative_greedy} and let $U_i$ denote the set $U$ at the end of the $i$-th iteration of the while loop in step~\ref{item:choose_comp}. 
Let $C_i$ denote the component $C$ chosen in the $i$-th iteration.
We apply Theorem~\ref{thm:decomposition} to an optimal WTAP solution $\OPT$ and the link set $U_0$ to obtain a set $R\subseteq U_0$ with $w(R)\le \frac{1}{2}\epsilon \cdot w(U_0) \le \epsilon \cdot w(\OPT)$ and a partition $\Cscr$ of $\OPT$ into $\lceil\sfrac{2}{\epsilon}\rceil$-thin sets.

Consider the $i$-th iteration of the while loop.
Because there is a component $C\in \Cscr$ with $P_u \subseteq \bigcup_{\ell\in C} P_{\ell}$ for every $u\in U_{i-1}\!\setminus R$, we have
\begin{equation*}
\sum_{C\in \Cscr} w\bigl(\Drop_{U_{i-1}}(C)\bigr)\ \ge\ w(U_{i-1}\!\setminus R)\ \ge\ w(U_{i-1}) - w(R)\enspace.
\end{equation*}
This implies 
\begin{equation*}
\min_{C\in \Cscr} \frac{w(C)}{w(\Drop_{U_{i-1}}(C))}
\ \le\  
\frac{\sum_{C\in \Cscr} w(C)}{\sum_{C\in \Cscr} w(\Drop_{U_{i-1}}(C))}
\ \le\ 
\frac{w(\OPT)}{w(U_{i-1}) - w(R)}\enspace.
\end{equation*}
Because every $C\in \Cscr$ is $\lceil\sfrac{2}{\epsilon}\rceil$-thin, every component $C\in \Cscr$ could have been chosen by the algorithm in step~\ref{item:choose_comp}. Thus,
\begin{equation}\label{eq:upper_bound_ratio}
 \frac{w(C_i)}{w(U_{i-1}\setminus U_{i})}\ =\ \frac{w(C_i)}{w(\Drop_{U_{i-1}}(C_i))} \ \le\ \frac{w(\OPT)}{w(U_{i-1}) - w(R)} \enspace.
\end{equation}
Moreover, the algorithm could also have chosen any component consisting of a single link $u\in U_{i-1}$, which implies $\sfrac{w(C_i)}{w(U_{i-1}\setminus U_{i})} \le 1$. Combining this with \eqref{eq:upper_bound_ratio}, we obtain
\begin{equation*}
w(C_i) \ \le\ \min\left\{\frac{w(\OPT)}{w(U_{i-1}) -w(R)}, 1\right\} \cdot w(U_{i-1} \setminus U_{i})
\ \le\ \int_{w(U_i)}^{w(U_{i-1})} \min\left\{\frac{w(\OPT)}{x- w(R)},1\right\}\,dx\enspace,
\end{equation*}
where we used $w(U_i) \le w(U_{i-1})$ and that $\min\{\frac{w(\OPT)}{x-w(R)},1\}$ is monotonically decreasing in $x$.

Let $m$ denote the number of iterations of the while loop.
Then $U_m=\emptyset$ and 
\begin{align*}
w(F)\ =\ \sum_{i=1}^m w(C_i) \ 
\le&\ \sum_{i=1}^m  \int_{w(U_i)}^{w(U_{i-1})} \min\left\{\frac{w(\OPT)}{x- w(R)},1\right\}\,dx\\[2mm]
=&\ \int_{w(U_m)}^{w(U_0)} \min\left\{\frac{w(\OPT)}{x-w(R)},1\right\}\,dx  \\[2mm]
=&\ \int_0^{w(\OPT)+w(R)} 1 \,dx + \int_{w(\OPT)+w(R)}^{w(U_0)}\frac{w(\OPT)}{x-w(R)} \,dx \\[2mm]
=&\ w(\OPT)+w(R) + \ln\left(\frac{w(U_0)-w(R)}{w(\OPT)}\right) \cdot w(\OPT) \\[2mm]
\le&\ (1 +\epsilon)\cdot w(\OPT) + \ln 2 \cdot w(\OPT)\enspace,
\end{align*}
where the last inequality follows from $w(R) \le \epsilon \cdot w(\OPT)$ and $w(U_0) \le 2 \cdot w(\OPT)$.
\end{proof}

As is common with relative greedy procedures, one does not need to run the while loop of Algorithm~\ref{algo:relative_greedy} until $U\neq \emptyset$, but can stop early. More precisely, as soon as a component $C\in \argmin\{\sfrac{w(C)}{w(\Drop_U(C))} \colon C\subseteq L \text{ is $\lceil\sfrac{\epsilon}{2}\rceil$-thin}\}$ is computed that does not improve the solution anymore, i.e., $\sfrac{w(C)}{w(\Drop_U(C))} \geq 1$, then one can return $F\cup U$ without continuing the while loop. (We recall that $\sfrac{w(C)}{w(\Drop_U(C))}\geq 1$ actually implies $\sfrac{w(C)}{w(\Drop_U(C))}= 1$ because $C$ can always be chosen to be a single up-link in $U$.) Indeed, once such a set $C$ is encountered, any future replacements of up-links by a component done in the while loop will also not improve the solution.

\section{Proving the decomposition theorem}\label{sec:decomposition_proof}

In this section we prove our decomposition theorem, Theorem~\ref{thm:decomposition}.
Hence, we are given a WTAP solution $F\subseteq L$ and a set $U\subseteq L_{\mathrm{up}}$ of up-links such that the edge sets $P_{u}$ for $u\in U$ are pairwise disjoint.
For every link $u\in U$, we will first fix a set $F_u \subseteq F$ satisfying $P_u \subseteq \bigcup_{\ell \in F_u} P_{\ell}$.
We will then choose a set $R\subseteq U$ with $w(R)\le \epsilon \cdot w(U)$ and a partition $\Cscr$ of $F$ into $\lceil\sfrac{1}{\epsilon}\rceil$-thin sets such that, for every $u\in U\setminus R$, there is a component $C\in\Cscr$ with $F_u \subseteq C$.
We emphasize that, to be able to achieve a decomposition as claimed by Theorem~\ref{thm:decomposition}, it is crucial to choose the set $F_u$ for $u\in U$ carefully.
In particular, a natural choice, used in a similar setting by~\textcite{cohen_2013_approximation}, would be to let $F_u$ be any minimal (or any minimum cardinality) set with $P_u \subseteq \bigcup_{\ell \in F_u} P_{\ell}$. However, the example highlighted in Figure~\ref{fig:arbitrary_F_u} shows that this renders it impossible to achieve the decomposition property with the strategy  outlined above. Before expanding on how we choose the sets $F_u$, we continue the overview of our proof strategy for the decomposition theorem.

\begin{figure}[!ht]
\begin{center}
\begin{tikzpicture}[scale=0.85]

\tikzset{
fs/.style={line width=1pt, blue, densely dashed},
us/.style={line width=1pt, darkred, densely dashed}
}

\clip (-0.5,-1.5) rectangle (14.5,6.5);

\begin{scope}[every node/.style={thick,draw=black,fill=black,circle,minimum size=0pt, inner sep=1.2pt, outer sep=1pt}]

\node (1) at (6,6) {};
\node (1a) at (7,5) {};
\node (2) at (5,5) {};
\node (2a) at (6,4) {};
\node (4) at (3,3) {};
\node (4a) at (4,2) {};
\node (5) at (2,2) {};
\node (5a) at (3,1) {};
\node (w) at (1,1) {};
\node (1b) at (0,0) {};
\node (2b) at (0.5,0) {};
\node (4b) at (1.5,0) {};
\node (5b) at (2,0) {};
\end{scope}

\node[above=1pt] (r) at (1) {$r$};
\node[above left] (wlabel) at (w) {$v$};
\node[rotate=45] (dots1) at (4,4) {$\dots$};
\node (dots2) at (1,0) {$\dots$};

\begin{scope}[very thick]
\draw (1)--(1a);
\draw (2)--(2a);
\draw (4)--(4a);
\draw (5)--(5a);

\draw (1) --(2);
\draw (2) -- (4.3,4.3);
\draw (3.6,3.6) --(4);
\draw (4) --(5) --(w);

\draw (w) -- (1b);
\draw (w) -- (2b);
\draw (w) -- (4b);
\draw (w) -- (5b);
\end{scope}

\begin{scope}[us]
\draw (1) --(2a);
\draw (4,3.4) -- (4a);
\draw (4) -- (5a);
\end{scope}

\begin{scope}[darkred]
\node[right] (u1) at (5.95,5) {$u_1$};
\node[right] (u4) at (3.95,3) {$u_{m-1}$};
\node[right] (u5) at (2.95,2) {$u_m$};
\end{scope}

\begin{scope}[fs]

\draw (5b) to[out=30, in=-125] (5a);
\draw(4b) to[out=-70, in=-85]  (4a);
\draw (2b) to[out=-60, in=-75] (2a);
\draw (1b) to[out=-70, in=-65](1a);
\end{scope}

\begin{scope}[blue]
\node[right] (l1) at (7.2,3.8) {$\ell_0$};
\node[right] (l2) at (6.1,3.3) {$\ell_1$};
\node[right] (l4) at (3.9,1.5) {$\ell_{m-1}$};
\node[above left] (l5) at (2.7,0.4) {$\ell_m$};
\end{scope}

\begin{scope}[shift={(10,3)}]%
\def\ll{30mm} %
\def\vs{8mm} %

\node[right] at (0,0) {$G=(V,E)$};
\node[right,blue] at (0,-\vs) {$F=\{\ell_0,\ell_1,,\dots,\ell_m\}$};
\node[right,darkred] at (0,-2*\vs) {$U=\{u_1,\dots,u_m\}$};
\end{scope}

\end{tikzpicture}
 \end{center}
\caption{The figure shows an example with a tree $G=(V,E)$ (black), a WTAP solution $F$ (blue), and a set $U$ of up-links (red) such that the edge sets $P_u$ with $u\in U$ are pairwise disjoint.
In this example, $v\in V_\ell$ for every link $\ell \in F$.
Therefore, any $k$-thin subset of $F$ contains at most $k$ links.
For every link $u_i \in U$, the set $F_u=\{\ell_0,\ell_i\}\subseteq F$ is a  minimal subset of $F$ with $P_{u_i} \subseteq \bigcup_{\ell\in F_u} P_{\ell}$. However, with this choice of $F_u$, every set $F_u$ with $u\in U$ contains the link $\ell_0$.
Thus, if we consider any partition $\Cscr$ of $F$ into $k$-thin components, the only component $C\in\Cscr$ for which we can have $F_u\subseteq C$ for some $u\in U$, is the component $C_0$ containing $\ell_0$.
Because this component is $k$-thin, it contains at most $k$ links and hence the number of links $u\in U$ with $F_u \subseteq C_0$ is at most $k-1$.
For $w(u)=1$ for all $u\in U$, this shows that the  total weight of the up-links $u\in U$ for which $F_u$ is not contained in any component $C\in \Cscr$ is at least $(1-\tfrac{k-1}{m})\cdot w(U)$ instead of at most $\epsilon \cdot w(U)$ as required.
}\label{fig:arbitrary_F_u}
\end{figure}
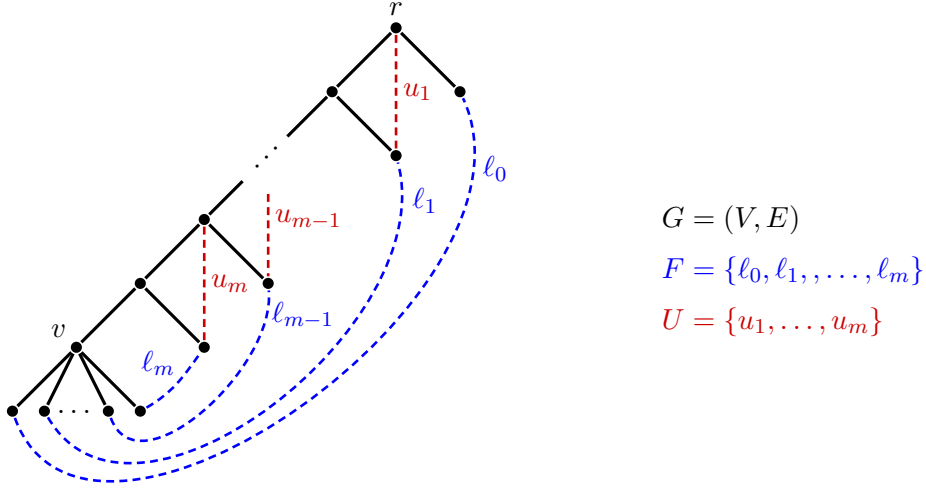

Once we fixed the sets $F_u$ for all $u\in U$ and the set $R\subseteq U$, the definition of the partition $\Cscr$ is straightforward.
Two links $\ell_1,\ell_2\in F$ are in the same component $C\in \Cscr$ if and only if there is an up-link $u\in U\setminus R$ with $\ell_1,\ell_2\in F_u$.
We express this dependency through a directed graph with vertex set $F$.
For every set $F_u$ with $u\in U\setminus R$, this graph contains a path $(F_u,A_u)$, which we formally define later.
Hence, the connected components of this dependency graph with vertex set $F$ yield the partition $\Cscr$.
More precisely, $C\subseteq F$ is a part of the partition $\Cscr$ if and only if the dependency graph has a connected component with vertex set $C$.
We remark that this construction of a dependency graph has been used before by~\textcite{cohen_2013_approximation} in the bounded diameter case with a different choice for the sets $F_u$.

To prove the decomposition theorem, we use such a dependency graph not only to find the partition $\Cscr$, but also to choose the set $R\subseteq U$.
We thus consider a dependency graph that has vertex set $F$ and contains arcs $A_u$ for every $u\in U$, where $A_u$ is again the above-mentioned arc set forming a path with vertex set $F_u$.
Then we show that there exists a set $R\subseteq U$ with $w(R)\le \epsilon \cdot w(U)$ such that, after removing the arcs in $\bigcup_{u\in R} A_u$ from the dependency graph, every connected component $(C,A)$ of the dependency graph fulfills that $C$ is $\lceil \sfrac{1}{\epsilon}\rceil$-thin.
To prove the existence of such a set $R$, we exploit that (with our choice of $(F_u,A_u)$ for $u\in U$) the dependency graph has the following two properties.
\begin{enumerate}[label=(\arabic*),topsep=3pt,itemsep=1pt]
\item\label{item:branching} The dependency graph for $U$ is a branching.
\item\label{item:thinness} Let $(C,A)$ be a connected component of the dependency graph. If the arc set of every directed path in $(C,A)$ has nonempty intersection with $A_u$ for at most $k$ up-links $u\in U$, then $C$ is $(k+1)$-thin.
\end{enumerate}
To obtain the first of these properties, we use that the sets $P_u$ for $u\in U$ are pairwise disjoint. Property~\ref{item:branching} was already shown to hold for the dependency graph used by \textcite{cohen_2013_approximation} and holds as long as we choose $F_u$ to be a minimal set with $P_u \subseteq \bigcup_{\ell \in F_u} P_{\ell}$.
To prove the second property, we crucially need our particular choice of $F_u$ as shown in Figure~\ref{fig:choice_F_u_crucial_dependency_graph}.

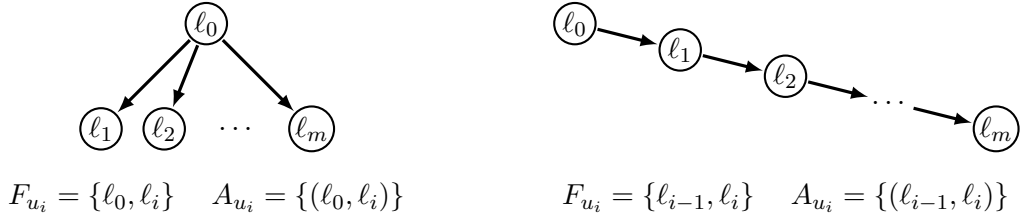
\begin{figure}[!ht]
\begin{center}
\begin{tikzpicture}[scale=0.7]

\tikzset{
fs/.style={line width=1pt, blue, densely dashed},
us/.style={line width=1pt, darkred, densely dashed}
}

\node () at (0,-1.3) {$F_{u_i} = \{\ell_0,\ell_i\}$ \quad $A_{u_i}=\{(\ell_{0},\ell_i)\}$};

\begin{scope}[every node/.style={thick,draw=black,fill=none,circle,minimum size=5.5mm, inner sep=0.5pt, outer sep=1pt}]
\node (l0) at (0,2) {$\ell_0$};
\node (l1) at (-2,0) {$\ell_1$};
\node (l2) at (-0.8,0) {$\ell_2$};
\node (lm) at (2,0) {$\ell_m$};
\end{scope}

\node (dots) at (0.6,0) {$\dots$};

\begin{scope}[very thick, ->, >=latex]
\draw (l0) to (l1);
\draw (l0) to (l2);
\draw (l0) to (lm);
\end{scope}

\begin{scope}[shift={(11,0)}]

\node () at (0,-1.3) {$F_{u_i} = \{\ell_{i-1},\ell_i\}$ \quad $A_{u_i}=\{(\ell_{i-1},\ell_i)\}$};

\begin{scope}[every node/.style={thick,draw=black,fill=none,circle,minimum size=5.5mm, inner sep=0.5pt, outer sep=1pt}]
\node (l0) at (-4,2) {$\ell_0$};
\node (l1) at (-2,1.5) {$\ell_1$};
\node (l2) at (0,1) {$\ell_2$};
\node[draw=white] (l3) at (2,0.5) {$\dots$};
\node (lm) at (4,0) {$\ell_m$};
\end{scope}

\begin{scope}[very thick, ->, >=latex]
\draw (l0) to (l1);
\draw (l1) to (l2);
\draw (l2) to (l3);
\draw (l3) to (lm);
\end{scope}
\end{scope}

\end{tikzpicture}
 \end{center}
\caption{The figure shows the dependency graph resulting from different choices of $F_u$ for the example instance from Figure~\ref{fig:arbitrary_F_u}.
The left picture shows the dependency graph that we would get if we chose $F_{u_i}=\{\ell_0,\ell_i\}$ for $i\in \{1,\ldots, m\}$. We recall that we already argued in the caption of Figure~\ref{fig:arbitrary_F_u} that this choice makes it impossible to obtain the decomposition theorem as suggested. Also, one can see that property~\ref{item:thinness} is clearly not fulfilled for the left-hand side choice. Indeed, every directed path intersects with at most one set $A_u$, which, if property~\ref{item:thinness} were fulfilled, should imply $2$-thinness of the component; however, the thinness is $m+1$.
The right picture shows the dependency graph for the choice of $F_{u_i}$ and $A_{u_i}$  that we will make to prove the decomposition theorem. In contrast to the left picture, the choice in the right picture fulfills property~\ref{item:thinness}: The set $F=\{\ell_0, \ell_1, \dots, \ell_m\}$ is $(m+1)$-thin and the dependency graph is a path that has nonempty intersection with all $m$ sets $A_{u_i}$.
}
\label{fig:choice_F_u_crucial_dependency_graph}
\end{figure}

Once properties~\ref{item:branching} and~\ref{item:thinness} are shown, there is a simple choice of $R\subseteq U$ to obtain the decomposition theorem, as explained in Figure~\ref{fig:example_choice_R}. 

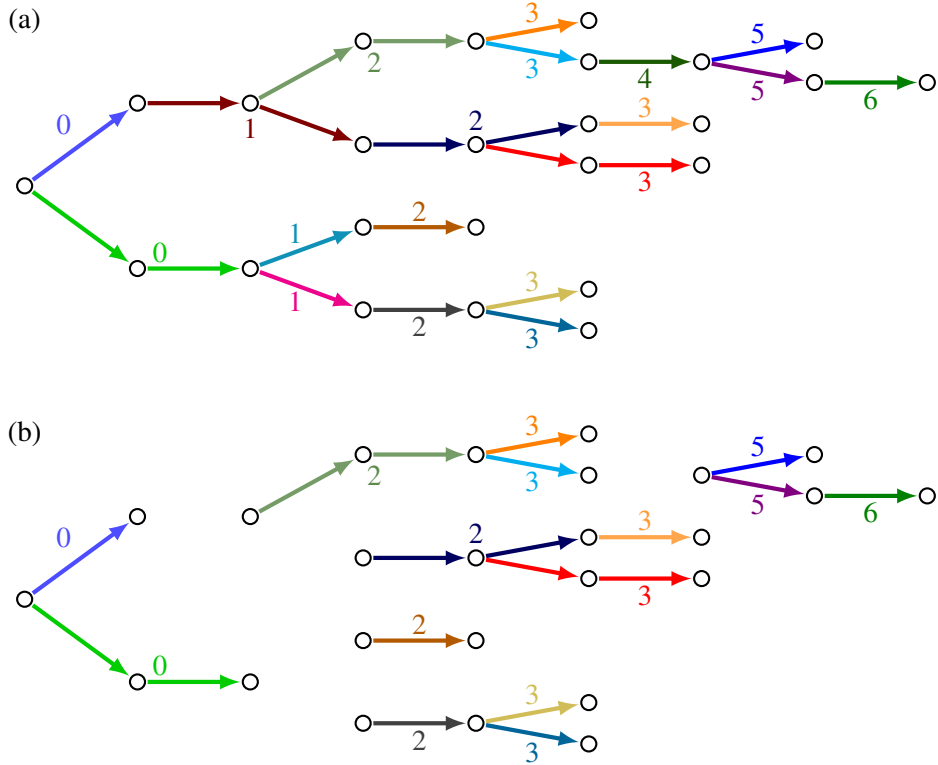
\begin{figure}[H]
\begin{center}
\begin{tikzpicture}[xscale=1.5, yscale=0.55]

\node (a) at (0,6) {(a)};

\begin{scope}[every node/.style={thick,draw=black,fill=none,circle,minimum size=2mm, inner sep=0.5pt, outer sep=1pt}]

\node (v1) at (0,2) {};
\node (v2) at (1,4) {};
\node (v3) at (2,4) {};
\node (v4) at (3,5.5) {};
\node (v5) at (4,5.5) {};
\node (v6) at (5,6) {};
\node (v7) at (5,5) {};
\node (v8) at (6,5) {};
\node (v9) at (7,5.5) {};
\node (v10) at (7,4.5) {};
\node (v11) at (8,4.5) {};
\node (v12) at (3,3) {};
\node (v13) at (4,3) {};
\node (v14) at (5,3.5) {};
\node (v15) at (6,3.5) {};
\node (v16) at (5,2.5) {};
\node (v17) at (6,2.5) {};
\node (v18) at (1,0) {};
\node (v19) at (2,0) {};
\node (v20) at (3,1) {};
\node (v21) at (4,1) {};
\node (v22) at (3,-1) {};
\node (v23) at (4,-1) {};
\node (v24) at (5,-0.5) {};
\node (v25) at (5,-1.5) {};
\end{scope}

\begin{scope}[ultra thick, ->, >=latex]
\draw[blue!70!white] (v1) to (v2);
\draw[red!50!black] (v2) to (v3);
\draw[red!50!black] (v3) to (v12);
\draw[darkgreen!60!white] (v3) to (v4);
\draw[darkgreen!60!white] (v4) to (v5);
\draw[orange](v5) to (v6);
\draw[cyan] (v5) to (v7);
\draw[darkgreen] (v7) to (v8);
\draw[blue] (v8) to (v9);
\draw[violet] (v8) to (v10);
\draw[green!50!black] (v10) to (v11);
\draw[darkblue] (v12) to (v13);
\draw[darkblue] (v13) to (v14);
\draw[orange!70!white] (v14) to (v15);
\draw[red] (v13) to (v16);
\draw[red] (v16) to (v17);
\draw[green!80!black] (v1) to (v18);
\draw[green!80!black] (v18) to (v19);
\draw[cyan!70!black] (v19) to (v20);
\draw[orange!70!black] (v20) to (v21);
\draw[magenta] (v19) to (v22);
\draw[gray!50!black] (v22) to (v23);
\draw[yellow!80!blue] (v23) to (v24);
\draw[blue!60!green](v23) to (v25);
\end{scope}

\begin{scope}
\node[above left, blue!70!white] (c1) at (0.5,3) {0};
\node[red!50!black] (c2) at (2,3.4) {1};
\node[darkgreen!70!white] (c3) at (3.1,5) {2};
\node[orange](c4) at (4.5,6.2) {3};
\node[cyan] (c5) at (4.5,4.85) {3};
\node[darkgreen] (c6) at (5.5,4.6) {4};
\node[blue] (c7) at (6.5,5.7) {5};
\node[violet] (c8) at (6.5,4.3) {5};
\node[green!50!black] (c9) at (7.5,4.1) {6};
\node[darkblue] (c10) at (4,3.6) {2};
\node[orange!80!white] (c11) at (5.5,3.9) {3};
\node[red] (c12) at (5.5,2.1) {3};
\node[green!80!black] (c13) at (1.2,0.4) {0};
\node[cyan!70!black] (c14) at (2.4,0.85) {1};
\node[orange!70!black] (c15) at (3.5,1.4) {2};
\node[magenta] (c16) at (2.4,-0.8) {1};
\node[gray!50!black] (c17) at (3.5,-1.4) {2};
\node[yellow!80!blue] (c18) at (4.5,-0.3) {3};
\node[blue!60!green] (c19) at (4.5,-1.7) {3};
\end{scope}

\begin{scope}[shift={(0,-10)}]

\node (b) at (0,6) {(b)};

\begin{scope}[every node/.style={thick,draw=black,fill=none,circle,minimum size=2mm, inner sep=0.5pt, outer sep=1pt}]

\node (v1) at (0,2) {};
\node (v2) at (1,4) {};
\node (v3) at (2,4) {};
\node (v4) at (3,5.5) {};
\node (v5) at (4,5.5) {};
\node (v6) at (5,6) {};
\node (v7) at (5,5) {};
\node (v8) at (6,5) {};
\node (v9) at (7,5.5) {};
\node (v10) at (7,4.5) {};
\node (v11) at (8,4.5) {};
\node (v12) at (3,3) {};
\node (v13) at (4,3) {};
\node (v14) at (5,3.5) {};
\node (v15) at (6,3.5) {};
\node (v16) at (5,2.5) {};
\node (v17) at (6,2.5) {};
\node (v18) at (1,0) {};
\node (v19) at (2,0) {};
\node (v20) at (3,1) {};
\node (v21) at (4,1) {};
\node (v22) at (3,-1) {};
\node (v23) at (4,-1) {};
\node (v24) at (5,-0.5) {};
\node (v25) at (5,-1.5) {};
\end{scope}

\begin{scope}[ultra thick, ->, >=latex]
\draw[blue!70!white] (v1) to (v2);
\draw[darkgreen!60!white] (v3) to (v4);
\draw[darkgreen!60!white] (v4) to (v5);
\draw[orange](v5) to (v6);
\draw[cyan] (v5) to (v7);
\draw[blue] (v8) to (v9);
\draw[violet] (v8) to (v10);
\draw[green!50!black] (v10) to (v11);
\draw[darkblue] (v12) to (v13);
\draw[darkblue] (v13) to (v14);
\draw[orange!70!white] (v14) to (v15);
\draw[red] (v13) to (v16);
\draw[red] (v16) to (v17);
\draw[green!80!black] (v1) to (v18);
\draw[green!80!black] (v18) to (v19);
\draw[orange!70!black] (v20) to (v21);
\draw[gray!50!black] (v22) to (v23);
\draw[yellow!80!blue] (v23) to (v24);
\draw[blue!60!green](v23) to (v25);
\end{scope}

\begin{scope}
\node[above left, blue!70!white] (c1) at (0.5,3) {0};
\node[darkgreen!70!white] (c3) at (3.1,5) {2};
\node[orange](c4) at (4.5,6.2) {3};
\node[cyan] (c5) at (4.5,4.85) {3};
\node[blue] (c7) at (6.5,5.7) {5};
\node[violet] (c8) at (6.5,4.3) {5};
\node[green!50!black] (c9) at (7.5,4.1) {6};
\node[darkblue] (c10) at (4,3.6) {2};
\node[orange!80!white] (c11) at (5.5,3.9) {3};
\node[red] (c12) at (5.5,2.1) {3};
\node[green!80!black] (c13) at (1.2,0.4) {0};
\node[orange!70!black] (c15) at (3.5,1.4) {2};
\node[gray!50!black] (c17) at (3.5,-1.4) {2};
\node[yellow!80!blue] (c18) at (4.5,-0.3) {3};
\node[blue!60!green] (c19) at (4.5,-1.7) {3};
\end{scope}
\end{scope}

\end{tikzpicture}
 \end{center}

\caption{Picture (a) shows an example of a connected component of the dependency graph for $U$. The arc set of every path $(F_u,A_u)$ is shown in a different color.
In order to construct the set $R\subseteq U$, we assign a label from $\mathbb{Z}_{\geq 0}$ to each of the sets $A_u$ that is equal to the number of different colors (or equivalently sets $A_{\overline{u}}$ for $\overline{u}\in U$) encountered on the unique path from the root of the component to the start of the path $A_u$.
The numbers in the figure show such a labeling. Let $k=\lceil \sfrac{1}{\epsilon} \rceil$.
The choice of the labeling guarantees that, for every $i\in\{0,\dots,k-1\}$, removing all arcs with any label $j\in \mathbb{Z}_{\geq 0}$ that satisfies $j\equiv i \pmod{k}$, leads to a dependency graph where at most $k-1$ different colors appear on each path. In other words, after removing the arcs with label $i \mod*{k}$, every path has nonempty intersection with at most $k-1$ different sets $A_u$. Then, by property~\ref{item:thinness}, the connected components correspond to $k$-thin sets.
Picture (b) shows an example for $i=1$ and $k=3$.
We choose $R$ to be the set of all up-links $u\in U$ for which $A_u$ has label $i \mod*{k}$, where $i\in\{0,\dots,k-1\}$ is an index for which the resulting set $R$ has minimum weight $w(R)$.
}\label{fig:example_choice_R}
\end{figure}

\subsection{The dependency graph}

Next, we formally define the dependency graph of a set $U\subseteq L_{\mathrm{up}}$.
Let $(G=(V,E),L,w)$ be a WTAP instance and let $F\subseteq L$ be a WTAP solution. Let $r\in V$ be the (arbitrarily chosen) root of $G$. The root defines a natural ancestry relationship. The \emph{ancestors} of a vertex $v\in V$ are all vertices $z\in V$ that lie on the unique $r$-$v$ path, which includes $r$ and $v$ (we talk about a \emph{strict ancestor} to disallow $z=v$). Analogously, $v\in V$ is a (strict) descendant of $z\in V$ if $z$ is a (strict) ancestor of $v$. For a link $\ell\in L$, we denote by $\apex(\ell)\in V$ the lowest common ancestor of the two endpoints of $\ell$, i.e., the vertex in $V_\ell$ closest to the root.

For each up-link $u\in L_{\mathrm{up}}$, we choose a minimal link set $F_u \subseteq F$ with $P_u \subseteq \bigcup_{\ell \in F_u} P_{\ell}$ as follows.
Let $u=\{t,b\}$ where $t$ is an ancestor of $b$.
We define $v_u$ to be the lowest ancestor of $t$, i.e., the ancestor farthest away from the root $r$, such that $P_u$ is covered by links in
\begin{equation*}
B_{v_u} \coloneqq \left\{ \ell\in F \colon \apex(\ell) \text{ is a descendant of } v_u \right\}\enspace,
\end{equation*}
i.e., $P_u \subseteq \bigcup_{\ell\in B_{v_u}} P_{\ell}$.  Then we choose $F_u \subseteq B_{v_u}$ minimal such that $P_u \subseteq \bigcup_{\ell \in F_u} P_{\ell}$. See Figure~\ref{fig:order_on_F_u} for an example of a minimal set $F_u$.

Due to minimality of $F_u$, the links $\ell\in F_u$ have a natural order. More precisely, this order is induced by how close to the root the edges of $P_{u,\ell} := P_{u} \setminus \bigcup_{\bar \ell \in F_u \setminus \{\ell\}} P_{\bar \ell}$\, are, which are the edges in $P_u$ for which $\ell$ is the only link in $F_u$ that covers them. (See Figure~\ref{fig:order_on_F_u}.)
To define this order formally, we start by observing that the sets $P_{u,\ell}$ are edge sets of paths.
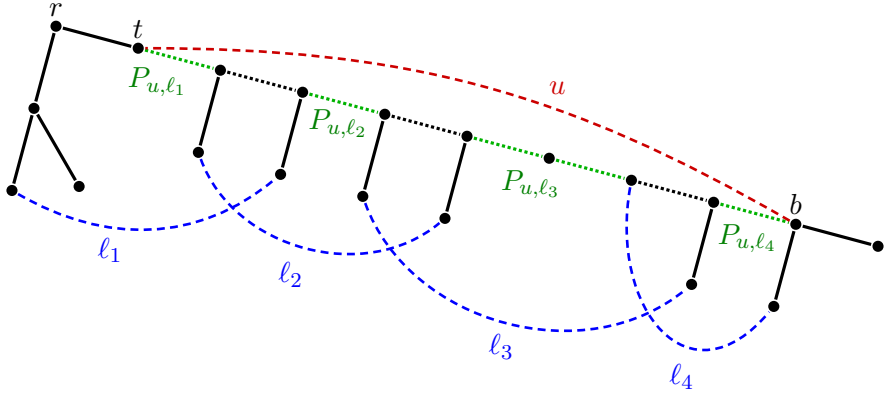
\begin{figure}[!ht]
\begin{center}
\begin{tikzpicture}[scale=0.8]

\tikzset{
fs/.style={line width=1pt, blue, densely dashed},
us/.style={line width=1pt, darkred, densely dashed}
}

\begin{scope}[every node/.style={thick,draw=black,fill=black,circle,minimum size=0pt, inner sep=1.2pt, outer sep=1pt}, rotate=30, xscale=-1]

\node (1) at (11,11) {};
\node (1a) at (12,10) {};
\node (1x) at (13,9) {};
\node (1y) at (12,8.5) {};
\node (2) at (10,10) {};
\node (3) at (9,9) {};
\node (3a) at (10,8) {};
\node (4) at (8,8) {};
\node (4a) at (9,7) {};
\node (5) at (7,7) {};
\node (5a) at (8,6) {};
\node (6) at (6,6) {};
\node (6a) at (7,5) {};
\node (7) at (5,5) {};
\node (8) at (4,4) {};
\node (9) at (3,3) {};
\node (9a) at (4,2) {};
\node (10) at (2,2) {};
\node (10a) at (3,1) {};
\node (11) at (1,1) {};
\end{scope}

\begin{scope}[very thick]
\draw (1)--(1a);
\draw (1x)--(1a);
\draw (1y)--(1a);
\draw (3)--(3a);
\draw (4)--(4a);
\draw (5)--(5a);
\draw (6)--(6a);
\draw (9)--(9a);
\draw (10)--(10a);

\draw (1) -- (2);
\draw (10) -- (11);

\begin{scope}[densely dotted]
\draw[green!70!black] (2) -- (3);
\draw (3) -- (4);
\draw[green!70!black](4) -- (5);
\draw (5) -- (6);
\draw[green!70!black] (6) -- (7);
\draw[green!70!black] (7) -- (8);
\draw (8) -- (9);
\draw[green!70!black] (9) -- (10);
\end{scope}

\end{scope}

\begin{scope}[us]
\draw (10) to[bend right=15] (2);
\end{scope}

\node[above] (r) at (1) {$r$};
\node[above] (t) at (2) {$t$};
\node[above] (b) at (10) {$b$};

\begin{scope}[darkred]
\node[right] (u1) at (-7,3) {$u$};
\end{scope}

\begin{scope}[fs]
\draw (3a) to[out=-70, in=-140] (6a);
\draw (5a) to[out=-70, in=-140]  (9a);
\draw (8) to[out=-100, in=-130, looseness=1.7] (10a);
\draw (1x) to[out=-30, in=-140](4a);
\end{scope}

\begin{scope}[blue, every node/.append style={font=\normalsize}]
\node[right] (l1) at (-14.5,0.3) {$\ell_1$};
\node[right] (l2) at (-11.5,-0.1) {$\ell_2$};
\node[right] (l3) at (-8,-1.3) {$\ell_3$};
\node[right] (l4) at (-5,-1.8) {$\ell_4$};
\end{scope}

\begin{scope}[green!50!black, every node/.append style={font=\normalsize}]
\node[right] (p1) at (-14,3.1) {$P_{u,\ell_1}$};
\node[right] (p2) at (-11,2.4) {$P_{u,\ell_2}$};
\node[right] (p3) at (-7.8,1.4) {$P_{u,\ell_3}$};
\node[right] (p4) at (-4.2,0.5) {$P_{u,\ell_4}$};
\end{scope}

\end{tikzpicture}
 \end{center}
\caption{The picture shows the tree $G$ (black and green), an up-link $u$, and a minimal set $F_u$ of links with $P_u \subseteq \bigcup_{\ell \in F_u} P_{\ell}$ (blue). The edges in $P_u$ are dotted and the edges in $P_{u,\ell_i}$ for $i\in\{1,2,3,4\}$ are shown in green.
}
\label{fig:order_on_F_u}
\end{figure}

\begin{lemma}\label{lem:P_u_ell_path}
Let $u\in L_{\mathrm{up}}$ and $\ell \in F_u$.
Then $P_{u,\ell}$ is nonempty and the edge set of a path.
\end{lemma}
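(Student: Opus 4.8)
The plan is to deduce everything from the minimality of $F_u$ together with two elementary facts about paths in the tree $G$: first, for every link $\bar\ell\in L$ the set $P_{\bar\ell}$ is the edge set of a path, and second, the intersection of the edge sets of two paths in a tree is again the edge set of a path (possibly empty). Concretely, I would enumerate the edges of the path $P_u$ as $e_1,e_2,\dots,e_m$ in the order in which they appear from the top endpoint of $u$ to its bottom endpoint; then for every link $\bar\ell$ the set $P_u\cap P_{\bar\ell}$ is a block of consecutive edges $\{e_i,\dots,e_j\}$, or empty. Note that neither the up-link structure of $u$ nor the particular choice of $F_u$ inside $B_{v_u}$ is needed for this lemma; only the minimality of $F_u$ with respect to $P_u\subseteq\bigcup_{\bar\ell\in F_u}P_{\bar\ell}$ will be used.

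For nonemptiness, I would invoke minimality directly: since $F_u\setminus\{\ell\}$ does not cover $P_u$, there is an edge $e\in P_u$ with $e\notin P_{\bar\ell}$ for all $\bar\ell\in F_u\setminus\{\ell\}$, and such an edge lies in $P_{u,\ell}$ by definition (and, since $F_u$ does cover $P_u$, also in $P_\ell$). For the path structure, set $S:=P_u\cap P_\ell$, so that $S$ is a block of consecutive edges of $P_u$ containing $P_{u,\ell}$, and $P_{u,\ell}=S\setminus\bigcup_{\bar\ell\in F_u\setminus\{\ell\}}(P_u\cap P_{\bar\ell})$. The key point is that for each $\bar\ell\in F_u\setminus\{\ell\}$ whose contribution $P_u\cap P_{\bar\ell}$ meets $S$, this contribution is \emph{not contained} in $S$: minimality applied to $\bar\ell$ yields an edge $e'\in P_u\cap P_{\bar\ell}$ covered by no other link of $F_u$, hence $e'\notin P_\ell$ and so $e'\notin S$. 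Since $P_u\cap P_{\bar\ell}$ and $S$ are both consecutive blocks of the path $P_u$, and the former meets but is not contained in the latter, $P_u\cap P_{\bar\ell}$ overhangs one of the two ends of $S$; consequently $S\cap P_{\bar\ell}$ is a consecutive block of $S$ containing the first or the last edge of $S$ (it cannot be all of $S$, as $S\subseteq P_{\bar\ell}$ would again contradict the minimality of $F_u$, this time via $\ell$). Therefore $\bigl(\bigcup_{\bar\ell\in F_u\setminus\{\ell\}}P_{\bar\ell}\bigr)\cap S$ is the union of a prefix and a suffix of the consecutive block $S$, and removing it from $S$ leaves a single contiguous block of edges; combined with nonemptiness, $P_{u,\ell}$ is the nonempty edge set of a path.

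I do not expect a genuine obstacle here. The one step that deserves care is the observation that minimality forces every $P_u\cap P_{\bar\ell}$ with $\bar\ell\neq\ell$ to protrude from $S=P_u\cap P_\ell$ rather than sit strictly inside it — this is precisely what prevents $P_{u,\ell}$ from fragmenting into two disjoint sub-blocks. Once this is in place, the remaining combinatorics, namely that a union of sub-blocks of $S$ each anchored at an endpoint of $S$ is a prefix together with a suffix of $S$, is routine.
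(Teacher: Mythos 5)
Your proof is correct and follows essentially the same route as the paper's: both arguments hinge on the observation that, by minimality (i.e., nonemptiness of $P_{u,\bar\ell}$), no other link $\bar\ell\in F_u\setminus\{\ell\}$ can have $P_u\cap P_{\bar\ell}$ contained in $P_u\cap P_{\ell}$, which forces $P_{u,\ell}$ to be contiguous. The paper packages this as a contradiction with three ordered edges $e_1,e_2,e_3$ of $P_u\cap P_\ell$, whereas you phrase it as each $P_u\cap P_{\bar\ell}$ overhanging an end of $S=P_u\cap P_\ell$ so that only a prefix and a suffix of $S$ get removed — a cosmetic reorganization of the same idea.
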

\begin{proof}
The minimality of $F_u$ immediately implies $P_{u,\ell} \ne \emptyset$.
Let $e_1,e_2,e_3 \in P_u\cap P_{\ell}$ be three distinct edges that appear in this order on the path $(V_u,P_u)$.
If $e_1, e_3 \in P_{u,\ell}$, then either $e_2$ is also contained in $P_{u,\ell}$ or there is a link $\bar \ell \in F_u \setminus \{\ell\}$ with $e_2\in P_{\bar \ell}$.
In the latter case, $e_2$ is contained in $P_u \cap P_{\bar \ell}$ which is the edge set of a subpath of $(V_u,P_u)$.
Because $e_1$ and $e_3$ are not contained in $P_{\bar \ell}$, 
we have $P_u \cap P_{\bar \ell} \subsetneq P_u \cap P_{\ell}$, contradicting
the fact that $P_{u, \bar \ell}$ is nonempty.
Hence, whenever $e_1$ and $e_3$ are contained in $P_{u,\ell}$, then all edges that appear between $e_1$ and $e_3$ on the path $(V_u,P_u)$ are also contained in $P_{u,\ell}$. This shows that $P_{u,\ell}\subseteq P_u$ is the edge set of a path.
\end{proof}

The edge sets $P_{u,\ell}$ with $\ell\in F_u$ are pairwise disjoint by their definition.
For $\ell_1,\ell_2\in F_u$, we define $\ell_1 \prec_u \ell_2$ if and only if the edges in $P_{u,\ell_1}$ appear before the edges of $P_{u,\ell_2}$ on the $t$-$b$ path in $G$.
By Lemma~\ref{lem:P_u_ell_path}, this order is well-defined.
An alternative characterization of the same link order is that, for $\ell_1, \ell_2 \in F_u$, we have $\ell_1 \prec_u \ell_2$ if and only if $\apex(\ell_1)$ is a strict ancestor of $\apex(\ell_2)$. (This characterization follows from Lemma~\ref{lem:basic_arc_properties}~\ref{item:comparable_links_have_strict_apex_ancestorship} below.)

The dependency graph for a set $U\subseteq L_{\mathrm{up}}$ of up-links
is a directed graph with vertex set $F$.
For every up-link $u\in U$, it contains a set $A_u$ of arcs defined as follows.
For $u\in U$, let $\ell_1 \prec_u \ell_2 \prec_u \dots \prec_u \ell_q$ be the links in $F_u$. Then 
\begin{equation*}
 A_u \coloneqq \bigl\{ (\ell_i, \ell_{i+1}) \colon i\in\{1,\dots, q-1\}\bigr\}\enspace.
\end{equation*}
The arc set of the dependency graph for $U$ is the disjoint union of the sets $A_u$ for all $u\in U$.
\smallskip

This construction immediately implies that, for every up-link $u\in U$, there is one connected component $(C,A)$ of the dependency graph such that $F_u \subseteq C$, which implies that $P_u$ is covered by the links in $C$.
To prove the decomposition theorem, we will show that there exists a set $R\subseteq U$ with $w(R) \le \epsilon \cdot w(U)$ such that, for every connected component $(C,A)$ of the dependency graph of $U\setminus R$, the link set $C$ is $\lceil\sfrac{1}{\epsilon}\rceil$-thin.

First, we show some basic properties of the dependency graph that do not rely on our particular choice of the sets $F_u$, except for them being a minimal link set covering $P_u$. We provide self-contained proofs here but remark that some of these properties, in particular property~\ref{item:branching}, have been shown already in \cite{cohen_2013_approximation}. Subsequently, in Section~\ref{sec:thinCompsDepGraph}, we exploit our particular choice of the sets $F_u$ and show how they allow for obtaining property~\ref{item:thinness}.

\begin{lemma}\label{lem:basic_arc_properties}
Let $u\in L_{\mathrm{up}}$ and $(\ell_1,\ell_2)\in A_u$.
Then
\begin{enumerate}
\item\label{item:comparable_links_have_strict_apex_ancestorship} $\apex(\ell_1)$ is a strict ancestor of $\apex(\ell_2)$ in the tree $G$, and \label{item:ancestor_relation_apex}
\item $P_{u,\ell_1}$ is the edge set of a subpath of the $\apex(\ell_1)$-$\apex(\ell_2)$ path in $G$. \label{item:position_P_u_ell}
\end{enumerate}
\end{lemma}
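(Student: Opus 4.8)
The plan is to reduce the statement to the combinatorics of the subpaths $I_\ell \coloneqq P_\ell \cap P_u$ (for $\ell\in F_u$) sitting inside the vertical path $P_u$ running from $t$ down to $b$, where $u=\{t,b\}$ with $t$ an ancestor of $b$. Writing the links of $F_u$ in the order $\prec_u$ as $\ell_1\prec_u\dots\prec_u\ell_q$, a pair lies in $A_u$ iff it is a consecutive pair $(\ell_a,\ell_{a+1})$, so it suffices to prove both parts for each such pair. For a subpath $J$ of $P_u$ let $\topv(J)$ and $\bottomv(J)$ denote its endpoints closest to, respectively farthest from, the root $r$. I would first record four preliminary facts. \emph{No containment:} for $j\neq j'$, neither $I_{\ell_j}\subseteq I_{\ell_{j'}}$ nor the reverse holds, since otherwise the private part $P_{u,\ell_j}$ (resp.\ $P_{u,\ell_{j'}}$) of the contained interval would be empty, contradicting Lemma~\ref{lem:P_u_ell_path}. \emph{Staircase:} for $j<j'$, $\topv(I_{\ell_j})$ is a strict ancestor of $\topv(I_{\ell_{j'}})$, and likewise $\bottomv(I_{\ell_j})$ of $\bottomv(I_{\ell_{j'}})$; indeed $P_{u,\ell_j}\subseteq I_{\ell_j}$ precedes $P_{u,\ell_{j'}}\subseteq I_{\ell_{j'}}$ along $P_u$ by definition of $\prec_u$, and by no containment the two subpaths are either disjoint or properly overlapping, so one of them has both endpoints strictly closer to $r$ than those of the other; if that were $I_{\ell_{j'}}$, then $P_{u,\ell_{j'}}$, being disjoint from $I_{\ell_j}$, would lie entirely above $I_{\ell_j}\supseteq P_{u,\ell_j}$ and hence strictly before $P_{u,\ell_j}$, a contradiction. \emph{Top edge:} the edge of $P_u$ incident to $t$ is covered by exactly one link of $F_u$ (two covering links would again have nested intervals), which must be $\ell_1$; thus $\topv(I_{\ell_1})=t$ while $t\notin I_{\ell_j}$ for $j\ge2$. \emph{Apices:} whenever $\topv(I_\ell)\neq t$ we have $\topv(I_\ell)=\apex(\ell)$ — the tree edge just above $\topv(I_\ell)$ lies on $P_u$ but not on $P_\ell$, which is impossible unless $\topv(I_\ell)$ is the topmost vertex of $P_\ell$ — so combining with the previous fact, $\topv(I_{\ell_j})=\apex(\ell_j)$ for all $j\ge2$, and since $t\in I_{\ell_1}\subseteq P_{\ell_1}$ the vertex $\apex(\ell_1)$ is an ancestor of $t$.

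With these facts in hand, part~\ref{item:comparable_links_have_strict_apex_ancestorship} is immediate for a consecutive pair $(\ell_a,\ell_{a+1})$: if $a\ge2$ then $\apex(\ell_a)=\topv(I_{\ell_a})$ is a strict ancestor of $\topv(I_{\ell_{a+1}})=\apex(\ell_{a+1})$ by the Staircase fact, and if $a=1$ then $\apex(\ell_1)$ is an ancestor of $t$ while $\apex(\ell_2)=\topv(I_{\ell_2})$ is a vertex of $P_u$ distinct from $t$, hence a strict descendant of $t$, so again $\apex(\ell_1)$ is a strict ancestor of $\apex(\ell_2)$.

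For part~\ref{item:position_P_u_ell}, fix an edge $f\in P_{u,\ell_a}$. Since $f\in I_{\ell_a}$, its endpoint closer to $r$ is a descendant of $\topv(I_{\ell_a})$, which in turn is a descendant of $\apex(\ell_a)$ (it equals $\apex(\ell_a)$ if $a\ge2$, and equals $t$, a descendant of $\apex(\ell_1)$, if $a=1$), so $f$ lies weakly below $\apex(\ell_a)$ on $P_u$. On the other hand, by the Staircase fact $\topv(I_{\ell_a})$ is a strict ancestor of $\topv(I_{\ell_{a+1}})=\apex(\ell_{a+1})$ and $\bottomv(I_{\ell_a})$ is a strict ancestor of $\bottomv(I_{\ell_{a+1}})$; consequently any edge of $I_{\ell_a}$ lying weakly below $\apex(\ell_{a+1})$ would lie between $\topv(I_{\ell_{a+1}})$ and $\bottomv(I_{\ell_{a+1}})$, i.e.\ in $I_{\ell_{a+1}}$. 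But $f$, being covered by $\ell_a$ and by no other link of $F_u$, does not lie in $I_{\ell_{a+1}}$; hence $f$ does not lie weakly below $\apex(\ell_{a+1})$. Combining the two observations, $f$ is an edge of the $\apex(\ell_a)$-$\apex(\ell_{a+1})$ path of $G$. Since $P_{u,\ell_a}$ is the edge set of a path by Lemma~\ref{lem:P_u_ell_path}, it is therefore the edge set of a subpath of the $\apex(\ell_a)$-$\apex(\ell_{a+1})$ path.

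I expect the Staircase fact to require the most care: turning ``the private parts $P_{u,\ell_j}$ appear in the order $\prec_u$ along $P_u$'' together with ``no $I_{\ell_j}$ contains another'' into the simultaneous monotonicity of both endpoints of the intervals $I_{\ell_j}$. The only other point that needs attention is the boundary link $\ell_1$, for which $\topv(I_{\ell_1})=t$ rather than $\apex(\ell_1)$; once it is handled separately as above, everything else is routine.
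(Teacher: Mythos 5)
Your proof is correct. It reaches the conclusion by a more structural route than the paper's own argument, which is purely local to the single arc $(\ell_1,\ell_2)$: the paper picks one private edge $\{a_1,b_1\}\in P_{u,\ell_1}$ and one $\{a_2,b_2\}\in P_{u,\ell_2}$, notes that the first precedes the second on the $t$-$b$ path, and concludes in a few lines that $\apex(\ell_1)$ is an ancestor of $a_1$ (since $\ell_1$ covers $\{a_1,b_1\}$) while $\apex(\ell_2)$ is a descendant of $b_1$ (since $\ell_2$ covers the later edge but not $\{a_1,b_1\}$); chaining these gives part (i), and since the edge $\{a_1,b_1\}$ was arbitrary, part (ii) follows with Lemma~\ref{lem:P_u_ell_path}. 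You instead build global interval combinatorics for the whole family $F_u$: non-nesting of the intersections $P_\ell\cap P_u$ (from nonemptiness of the private parts, i.e., minimality of $F_u$), the staircase monotonicity of both interval endpoints along $\prec_u$, and the identification $\topv(P_\ell\cap P_u)=\apex(\ell)$ for every link other than the first, with $\ell_1$ treated as a boundary case. The essential mechanism is shared — your ``Apices'' fact is precisely the paper's observation that a link covering edges of $P_u$ below a vertex but not the edge just above it must have its apex below — but the packaging differs: your version is longer and needs the extra case analysis for $\ell_1$, while it yields stronger reusable structure (for instance, the staircase plus the apex identification immediately gives the alternative characterization of $\prec_u$ via strict apex ancestry that the paper states after Lemma~\ref{lem:P_u_ell_path} as a consequence of this lemma). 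All the individual steps you flag as delicate (the staircase derivation from non-nesting and the order of the private parts, and the special role of $\ell_1$) are handled correctly.
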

\begin{proof}
Let $u=\{t,b\}\in U$ be the up-link with $(\ell_1,\ell_2)\in A_u$,
where $t$ is an ancestor of $b$ in the tree $G$.
Let $\{a_1,b_1\} \in P_{u,\ell_1}$ and $\{a_2,b_2\} \in P_{u,\ell_2}$.
Without loss of generality we may assume that $a_1$ is an ancestor of $b_1$ and $a_2$ is an ancestor of $b_2$. (Because $a_1\neq b_1$ and $a_2\neq b_2$, they are actually strict ancestors.)
By the definition of the order $\prec_u$ and the arc set $A_u$, the edge
$\{a_1,b_1\}$ appears before the edge $\{a_2,b_2\}$ on the $t$-$b$ path in $G$.
Because $t$ is an ancestor of $b$, we conclude that $a_1$ is an ancestor of $a_2$.
Moreover, because the edge $\{a_1,b_1\}$ is covered by the link $\ell_1$, the vertex $\apex(\ell_1)$ is an ancestor of $a_1$.
The link $\ell_2$ covers $\{a_2,b_2\}$ but it does not cover $\{a_1,b_1\}$ due to the definition of $P_{u,\ell_1}$.
Therefore, $\apex(\ell_2)$ must be a descendant of $b_1$.
We have shown that the vertex $\apex(\ell_1)$ is an ancestor of $a_1$, which in turn is a strict ancestor of $b_1$, which is an ancestor of $\apex(\ell_2)$. Hence, $\apex(\ell_1)$ is a strict ancestor of $\apex(\ell_2)$.
Moreover, this shows that every edge $\{a_1,b_1\}\in P_{u,\ell_1}$ lies on the $\apex(\ell_1)$-$\apex(\ell_2)$ path in $G$.
By Lemma~\ref{lem:P_u_ell_path}, this implies \ref{item:position_P_u_ell}.
\end{proof}

\begin{lemma}\label{lem:incoming_arcs}
Let $u\in L_{\mathrm{up}}$ be an up-link and let $(\ell_1, \ell_2)\in A_u$.
Let $e\in E$ be the last edge of the $r$-$\apex(\ell_2)$ path in $G$.
Then $u$ covers $e$.
\end{lemma}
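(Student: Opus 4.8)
The plan is to show that $\apex(\ell_2)$ is a vertex of the $t$-$b$ path $P_u$ different from its top vertex $t$. Once this is established we are done: the edge $e$ is by definition the last edge of the $r$-$\apex(\ell_2)$ path, i.e.\ the edge joining $\apex(\ell_2)$ to its parent in $G$; since $\apex(\ell_2)\in V_u\setminus\{t\}$, that parent is the predecessor of $\apex(\ell_2)$ along $P_u$ (which descends from $t$ to $b$), so the parent also lies in $V_u$ and the edge $e$ between them belongs to $P_u$. Throughout, write $u=\{t,b\}$ with $t$ an ancestor of $b$.

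First I would extract the ancestry facts I need from the proof of Lemma~\ref{lem:basic_arc_properties} applied to $(\ell_1,\ell_2)\in A_u$. Since $(\ell_1,\ell_2)\in A_u$ we have $\ell_1\prec_u\ell_2$, and by Lemma~\ref{lem:P_u_ell_path} the sets $P_{u,\ell_1}$ and $P_{u,\ell_2}$ are nonempty edge sets of subpaths of $P_u$. Fix an edge $\{a_1,b_1\}\in P_{u,\ell_1}$ with $a_1$ the parent of $b_1$, and an edge $\{a_2,b_2\}\in P_{u,\ell_2}$ with $a_2$ the parent of $b_2$. By definition of $\prec_u$ the edge $\{a_1,b_1\}$ appears before $\{a_2,b_2\}$ on the $t$-$b$ path, so $b_1$ is an ancestor of $a_2$. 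The argument in the proof of Lemma~\ref{lem:basic_arc_properties} establishes in addition that $\apex(\ell_1)$ is an ancestor of $a_1$, that $b_1$ is an ancestor of $\apex(\ell_2)$, and (since $a_2\in V_{\ell_2}$) that $\apex(\ell_2)$ is an ancestor of $a_2$.

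Now I would conclude. Because $a_1$ lies on $P_u$ it is a descendant of $t$, hence its child $b_1$ is a \emph{strict} descendant of $t$; as $b_1$ is an ancestor of $\apex(\ell_2)$, the vertex $\apex(\ell_2)$ is a strict descendant of $t$, in particular $\apex(\ell_2)\ne t$. On the other hand $a_2$ lies on $P_u$, so it is an ancestor of $b$, and $\apex(\ell_2)$ is an ancestor of $a_2$, hence of $b$. Thus $\apex(\ell_2)$ is a strict descendant of $t$ and an ancestor of $b$, i.e.\ $\apex(\ell_2)\in V_u\setminus\{t\}$, and the paragraph above finishes the proof that $u$ covers $e$.

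The only genuinely delicate point is forcing $\apex(\ell_2)$ to lie \emph{strictly below} $t$: knowing merely $\ell_2\in F_u$ is not enough, since the $\prec_u$-minimal link of $F_u$ can have its apex a proper ancestor of $t$, and then the statement is false. The hypothesis $(\ell_1,\ell_2)\in A_u$ — equivalently, that $\ell_2$ is not $\prec_u$-minimal — is precisely what produces the predecessor $\ell_1$ and the chain $b_1$ (strictly below $t$) $\leadsto \apex(\ell_2)$ that pins $\apex(\ell_2)$ onto the interior-or-bottom of $P_u$. Everything else is routine manipulation of the tree's ancestor relation, already available from Lemma~\ref{lem:basic_arc_properties}.
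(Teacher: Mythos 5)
Your proof is correct and follows essentially the same route as the paper: both use Lemma~\ref{lem:basic_arc_properties} (the apex ancestry and the location of the nonempty $P_{u,\ell_1}$ on the $\apex(\ell_1)$-$\apex(\ell_2)$ path, plus nonemptiness of $P_u\cap P_{\ell_2}$) together with the fact that $P_u$ is a vertical $t$-$b$ path to conclude that this path straddles the edge $e$. The paper phrases it as $V_u$ containing a strict ancestor and a descendant of $\apex(\ell_2)$, while you pin down $\apex(\ell_2)\in V_u\setminus\{t\}$ explicitly; this is only a presentational difference.
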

\begin{proof}
By Lemma~\ref{lem:basic_arc_properties}, 
$\apex(\ell_1)$ is an ancestor of $\apex(\ell_2)$ and 
the edges in the nonempty set $P_{u,\ell_1}\subseteq P_u$ lie on the $\apex(\ell_1)$-$\apex(\ell_2)$ path in $G$.
Thus, $V_u$ contains at least one strict ancestor of $\apex(\ell_2)$.
Because $P_u \cap P_{\ell_2}$ is nonempty,
$V_u$ contains at least one descendant of $\apex(\ell_2)$.
Using that $(V_u,P_u)$ is a path in the tree $G$, we conclude $e\in P_u$.
\end{proof}

\begin{lemma}\label{lem:branching}
Let $U\subseteq L_{\mathrm{up}}$ be a set of up-links such that the sets $P_u$ for $u\in U$ are pairwise disjoint.
Then the dependency graph of $U$ is a branching.
\end{lemma}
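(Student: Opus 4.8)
The plan is to verify directly the two defining properties of a branching: that every vertex has in-degree at most one, and that the digraph is acyclic.

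For acyclicity I would use Lemma~\ref{lem:basic_arc_properties}~\ref{item:comparable_links_have_strict_apex_ancestorship}: along every arc $(\ell_1,\ell_2)$ of the dependency graph, $\apex(\ell_1)$ is a strict ancestor of $\apex(\ell_2)$. Hence, following any arc, the apex of the head vertex is strictly farther from the root than the apex of the tail vertex, so a directed cycle is impossible; in fact the map $\ell\mapsto\apex(\ell)$ already provides a topological order.

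The more substantial part is the in-degree bound. First note that within a single arc set $A_u$ the links $F_u$ form a $\prec_u$-chain $\ell_1\prec_u\cdots\prec_u\ell_q$ whose arc set $\{(\ell_i,\ell_{i+1})\colon i<q\}$ is a directed path; in particular every link has in-degree at most one among the arcs of $A_u$. Therefore, if some $\ell_2\in F$ had two incoming arcs in the whole dependency graph, these arcs would have to come from two distinct up-links $u\neq u'\in U$, say $(\ell_1,\ell_2)\in A_u$ and $(\ell_1',\ell_2)\in A_{u'}$. I would then invoke Lemma~\ref{lem:incoming_arcs}: since $(\ell_1,\ell_2)\in A_u$, the vertex $\apex(\ell_2)$ has a strict ancestor (again by Lemma~\ref{lem:basic_arc_properties}~\ref{item:comparable_links_have_strict_apex_ancestorship}), so $\apex(\ell_2)\neq r$ and the last edge $e$ of the $r$-$\apex(\ell_2)$ path in $G$ is well defined; Lemma~\ref{lem:incoming_arcs} then yields $e\in P_u$, and the same reasoning applied to $(\ell_1',\ell_2)\in A_{u'}$ yields $e\in P_{u'}$. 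Thus $e\in P_u\cap P_{u'}$ with $u\neq u'$, contradicting the hypothesis that the sets $P_u$ for $u\in U$ are pairwise disjoint. Combining the in-degree bound with acyclicity shows the dependency graph of $U$ is a branching.

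The only real content beyond bookkeeping is Lemma~\ref{lem:incoming_arcs}, which already isolates the single edge — the last one on the root-to-$\apex(\ell_2)$ path — that any up-link contributing an arc into $\ell_2$ must cover; once that is in hand, disjointness of the $P_u$ finishes the argument. So I do not anticipate an obstacle here beyond the minor check that this edge exists, which holds precisely because an incoming arc at $\ell_2$ forces $\apex(\ell_2)$ to have a strict ancestor.
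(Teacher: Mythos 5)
Your proposal is correct and follows essentially the same route as the paper: acyclicity via Lemma~\ref{lem:basic_arc_properties}~\ref{item:comparable_links_have_strict_apex_ancestorship}, and the in-degree bound by combining Lemma~\ref{lem:incoming_arcs} with the disjointness of the sets $P_u$ and the fact that each $A_u$ is a directed path. The extra remark that the last edge of the $r$-$\apex(\ell_2)$ path exists is a harmless refinement of the same argument.
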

\begin{proof}
By Lemma~\ref{lem:basic_arc_properties}~\ref{item:ancestor_relation_apex}, the dependency graph does not contain any directed cycle. 
Hence, it remains to show that every link has at most one incoming arc.
Let $\ell \in F$ and let $e$ be the last edge of the $r$-$\apex(\ell)$ path in $G$. 
Because the edge sets $P_u$ with $u\in U$ are pairwise disjoint, there is at most one up-link $u\in U$ that covers $e$.
By Lemma~\ref{lem:incoming_arcs}, every arc entering $\ell$ in the dependency graph is contained in $A_u$.
Because $A_u$ is the arc set of a directed path, $\ell$ has at most one incoming arc.
\end{proof}

To prove property~\ref{item:thinness}, we rely on the following lemma, which is a special case of it. We later use this special case to obtain the general statement.

\begin{lemma}\label{lem:thinness_of_F_u}
$F_u$ is $2$-thin for any $u\in L_{\mathrm{up}}$.
\end{lemma}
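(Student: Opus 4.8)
The plan is a proof by contradiction. Suppose $F_u$ is not $2$-thin, so some vertex $v\in V$ lies on $V_\ell$ for three distinct links $\ell\in F_u$; order them by $\prec_u$ and call them $\ell_1\prec_u\ell_2\prec_u\ell_3$. I will show that then $P_u\subseteq\bigcup_{\ell\in F_u\setminus\{\ell_2\}}P_\ell$, so that $P_{u,\ell_2}=\emptyset$, contradicting Lemma~\ref{lem:P_u_ell_path}. Equivalently, this shows that for every vertex the links of $F_u$ passing through it form a $\prec_u$-interval of size at most $2$. Throughout write $u=\{t,b\}$ with $t$ an ancestor of $b$, and for $\ell\in F_u$ set $I_\ell:=P_\ell\cap P_u$.

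Before the main step I would record three structural facts about the sets $I_\ell$. \textbf{(a)} For each $\ell\in F_u$, $I_\ell$ is the edge set of a nonempty subpath of the $t$--$b$ path in $G$ (nonempty since $P_{u,\ell}\subseteq I_\ell$ and $P_{u,\ell}\ne\emptyset$ by Lemma~\ref{lem:P_u_ell_path}; a subpath since the intersection of two paths in a tree is a path), and moreover $V_\ell\cap V_u$ is exactly the vertex set of this subpath. Write $s_\ell$ and $d_\ell$ for its endpoints, $s_\ell$ being the one closer to $t$; note $\apex(\ell)$ is an ancestor of $s_\ell$. \textbf{(b)} Using that the private edge sets $P_{u,\ell_1},\dots,P_{u,\ell_q}$ appear in this order along the $t$--$b$ path, together with minimality of $F_u$ (which, via Lemma~\ref{lem:P_u_ell_path}, forbids $I_{\ell_i}\subseteq I_{\ell_j}$ for $i\ne j$), the intervals $I_{\ell_1},\dots,I_{\ell_q}$ form a monotone staircase: for $i<j$, the vertex $s_{\ell_i}$ is an ancestor of $s_{\ell_j}$ and $d_{\ell_i}$ is an ancestor of $d_{\ell_j}$. \textbf{(c)} For every $\ell\in F_u$ that is not $\prec_u$-minimal one has $\apex(\ell)\in V_u$ and $\apex(\ell)=s_\ell$. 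Indeed, let $\bar\ell$ be the $\prec_u$-predecessor of $\ell$, so $(\bar\ell,\ell)$ is an arc; by Lemma~\ref{lem:incoming_arcs} the edge of $P_u$ entering $\apex(\ell)$ from above is covered by $u$, hence $\apex(\ell)\in V_u$, and that edge lies above $\apex(\ell)$, hence is not in $P_\ell$. If $s_\ell$ were a strict descendant of $\apex(\ell)$, the nonempty $\apex(\ell)$--$s_\ell$ subpath of the tree would be contained in $V_\ell\cap V_u$, so its edges would lie in $I_\ell$ yet strictly above $s_\ell$ on the $t$--$b$ path --- impossible. So $\apex(\ell)=s_\ell$.

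With these in hand, the main step is short. Since $v\in V_{\ell_1}\cap V_{\ell_3}$, both $\apex(\ell_1)$ and $\apex(\ell_3)$ are ancestors of $v$, and $\apex(\ell_1)$ is a strict ancestor of $\apex(\ell_3)$ because $\ell_1\prec_u\ell_3$ (apex characterization of $\prec_u$, or Lemma~\ref{lem:basic_arc_properties}~\ref{item:ancestor_relation_apex} together with transitivity). Hence $\apex(\ell_3)$ lies on the $\apex(\ell_1)$--$v$ path, which is contained in the path $V_{\ell_1}$ as both its endpoints are; together with $\apex(\ell_3)\in V_u$ from~(c), this gives $\apex(\ell_3)\in V_{\ell_1}\cap V_u$. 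By~(a), $\apex(\ell_3)$ then lies on the subpath $I_{\ell_1}$; by~(c), $\apex(\ell_3)=s_{\ell_3}$, so $s_{\ell_3}$ lies on the $s_{\ell_1}$--$d_{\ell_1}$ subpath. Combined with the staircase~(b), so that along the $t$--$b$ path $s_{\ell_1}\preceq s_{\ell_3}\preceq d_{\ell_1}\preceq d_{\ell_3}$, the edge sets $I_{\ell_1}$ and $I_{\ell_3}$ together form the $s_{\ell_1}$--$d_{\ell_3}$ subpath, which contains $I_{\ell_2}$ (since $s_{\ell_1}$ is an ancestor of $s_{\ell_2}$ and $d_{\ell_2}$ is an ancestor of $d_{\ell_3}$). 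Therefore $P_u\cap P_{\ell_2}=I_{\ell_2}\subseteq I_{\ell_1}\cup I_{\ell_3}\subseteq P_{\ell_1}\cup P_{\ell_3}$, so $P_{u,\ell_2}\subseteq I_{\ell_2}\setminus(P_{\ell_1}\cup P_{\ell_3})=\emptyset$, the desired contradiction with Lemma~\ref{lem:P_u_ell_path}.

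I expect the real work to be in facts~(b) and~(c). Fact~(b) is a standard but slightly fiddly analysis of how the intervals $I_{\ell_i}$ can overlap; the crux is fact~(c), the identification $\apex(\ell)=s_\ell$, which is exactly what makes $I_{\ell_1}\cup I_{\ell_3}$ a single contiguous subpath, and which is where Lemma~\ref{lem:incoming_arcs} (and hence the ancestry structure of the dependency graph) enters. Once~(b) and~(c) are established, the reduction to an interval-containment argument above is routine.
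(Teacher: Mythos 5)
Your proof is correct, and it shares the paper's overall contradiction scheme (three links $\ell_1\prec_u\ell_2\prec_u\ell_3$ through a common vertex $v$ force $P_{u,\ell_2}=\emptyset$, contradicting Lemma~\ref{lem:P_u_ell_path}), but the way you establish contiguity of $(P_{\ell_1}\cap P_u)\cup(P_{\ell_3}\cap P_u)$ is genuinely different from the paper's. The paper gets it from one local observation: the vertex $z\in V_u$ closest to $v$ lies on every path that contains $v$ and meets $V_u$, so $z\in V_{\ell_i}$ for all three links; hence the subpaths $P_{\ell_i}\cap P_u$ all share the vertex $z$, the union of the outer two is a single subpath covering $P_{u,\ell_1}$ and $P_{u,\ell_3}$, and therefore also $P_{u,\ell_2}$ --- needing nothing beyond minimality of $F_u$, the order $\prec_u$, and Lemma~\ref{lem:P_u_ell_path}. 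You instead route through the dependency-graph machinery: Lemma~\ref{lem:incoming_arcs} to show that a non-$\prec_u$-minimal link $\ell$ has $\apex(\ell)\in V_u$ and $\apex(\ell)=s_\ell$ (your fact (c)), an endpoint-monotonicity ``staircase'' statement (your fact (b)), and then placing $s_{\ell_3}=\apex(\ell_3)$ on the subpath of $\ell_1$. These steps all check out, facts (a)--(c) are provable as sketched, and there is no circularity: Lemmas~\ref{lem:basic_arc_properties} and~\ref{lem:incoming_arcs} precede Lemma~\ref{lem:thinness_of_F_u}, do not use it, and also require only minimality of $F_u$, so your argument (like the paper's) applies to any minimal covering set $F_u$. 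What the paper's route buys is brevity and self-containedness --- the projection to $z$ replaces both your facts (b) and (c) at once; what your route buys is extra structural information (apexes of non-minimal links of $F_u$ lie on $V_u$ and coincide with the upper endpoints of their intervals, and the intervals form a staircase), which is more than the lemma needs, at the price that fact (b), which you flag as ``fiddly,'' would still have to be written out in full.
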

\begin{proof}
Let $v\in V$.
With the goal of deriving a contradiction, suppose there exist distinct links $\ell_1,\ell_2,\ell_3\in F_u$ with $v\in V_{\ell_1} \cap V_{\ell_2} \cap V_{\ell_3}$.
We may assume $\ell_1 \prec_u \ell_2 \prec_u \ell_3$ without loss of generality.
Let $z\in V_u$ be the vertex of $V_u$ that is closest to $v$ in the tree $G$. (In particular, if $v\in V_u$, then $z=v$.)
Note that any path containing both $v$ and a vertex of $V_u$ must go through $z$. Hence, $z\in V_{\ell_i}$ for $i\in \{1,2,3\}$.
Because $P_{\ell_i} \cap P_u$ is the edge set of a subpath of $(V_u,P_u)$ with vertex set $V_{\ell_i} \cap V_u$ for all $i\in\{1,2,3\}$, and all of these paths contain the vertex $z$, also $(P_{\ell_1} \cap P_u) \cup (P_{\ell_3} \cap P_u)$ is the edge set of a subpath of $(V_u,P_u)$. This subpath covers the edges in $P_{u,\ell_1}$ as well as the edges in $P_{u,\ell_3}$.
Because $\ell_1 \prec_u \ell_2 \prec_u \ell_3$, this implies that it also covers the edges in $P_{u,\ell_2}$, contradicting the definition of the nonempty set $P_{u,\ell_2}$.
\end{proof}

\subsection{Thin components and the dependency graph}\label{sec:thinCompsDepGraph}

Most properties of the dependency graph shown so far were primarily properties of a single set $A_u$ (or immediate consequences of these). Therefore, only minimality of the sets $F_u$ was necessary to show them. We now move toward more global results on the connected components of the dependency graph by exploiting our particular choice of the sets $F_u$, with the goal to show property~\ref{item:thinness}.
To this end, let $U\subseteq L_{\mathrm{up}}$ be a set of up-links such that the sets $P_u$ with $u\in U$ are pairwise disjoint.
We fix a connected component $(C,A)$ of the dependency graph of $U$.
By Lemma~\ref{lem:branching}, the connected component $(C,A)$ is an arborescence.
Even though we do not exploit this later, we note that one can show that any distinct links $\ell_1, \ell_2\in C$ have distinct apexes, i.e., $\apex(\ell_1)\neq \apex(\ell_2)$.\footnote{This will for example follow from Lemma~\ref{lem:non_ancestors_do_not_touch_the_same_vertex}. Indeed, with the goal of deriving a contradiction, assume that there are two distinct links $\ell_1, \ell_2 \in C$ with same apex. By Lemma~\ref{lem:non_ancestors_do_not_touch_the_same_vertex}, $\ell_1$ and $\ell_2$ must have an ancestry relationship in $(C,A)$ because their common apex is in $V_{\ell_1}\cap V_{\ell_2}$. This ancestry relationship is strict because $\ell_1\neq \ell_2$. Moreover, by Lemma~\ref{lem:basic_arc_properties}~\ref{item:comparable_links_have_strict_apex_ancestorship}, any parent-child relationship (and therefore any strict ancestry relationship) between two links in $(C,A)$ implies that the apex of the parent is a strict ancestor of the apex of the child. This contradicts $\apex(\ell_1)=\apex(\ell_2)$.}

The next lemma is a crucial step toward property~\ref{item:thinness}, as it shows that links in the same component whose paths $(V_{\ell},P_{\ell})$ have a common vertex, must have an ancestry relationship in the dependency graph. 
Note that this lemma together with Lemma~\ref{lem:thinness_of_F_u} already imply a weaker version of property~\ref{item:thinness}, namely that if the arc set of every directed path in $(C,A)$ has nonempty intersection with $A_u$ for at most $k$ up-links $u\in U$, then $C$ is $2k$-thin.%
\footnote{This weaker version of property~\ref{item:thinness} is already sufficient to prove our main result.
However, to guarantee correctness through this weaker property, it would not suffice to consider $\lceil \sfrac{2}{\epsilon}\rceil$-thin components in Algorithm~\ref{algo:relative_greedy}, but more general components are needed instead, for example $\lceil \sfrac{4}{\epsilon}\rceil$-thin ones.}
Indeed, Lemma~\ref{lem:non_ancestors_do_not_touch_the_same_vertex} below implies that, for any vertex $v\in V$, the links $\ell\in C$ with $v\in V_\ell$ must lie on a directed path in $(C,A)$.
Finally, Lemma~\ref{lem:thinness_of_F_u} shows that for each of the at most $k$ links $u\in U$ for which $A_u$ intersects that path, there has at most $2$ links $\ell\in F_u$ satisfying $v\in P_\ell$.
After proving Lemma~\ref{lem:non_ancestors_do_not_touch_the_same_vertex}, we strengthen this reasoning to obtain property~\ref{item:thinness}, which is tight.
\begin{lemma}\label{lem:non_ancestors_do_not_touch_the_same_vertex}
Let $\ell_1,\ell_2 \in C$ with $V_{\ell_1} \cap V_{\ell_2} \ne \emptyset$.
Then $\ell_1$ and $\ell_2$ have an ancestry relationship in the arborescence $(C,A)$, i.e., either $\ell_1$ is an ancestor of $\ell_2$ or $\ell_2$ is an ancestor of $\ell_1$.
\end{lemma}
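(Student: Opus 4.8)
The plan is to argue by contradiction: suppose $\ell_1,\ell_2\in C$ satisfy $V_{\ell_1}\cap V_{\ell_2}\ne\emptyset$ but are incomparable in the arborescence $(C,A)$. Since $(C,A)$ is connected, there is a path in the underlying undirected graph joining $\ell_1$ to $\ell_2$; because $(C,A)$ is an arborescence and $\ell_1,\ell_2$ are incomparable, this path passes through their lowest common ancestor $\ell^*$ in $(C,A)$, and $\ell^*$ is a strict ancestor of both (with $\ell^*$ possibly equal to neither). Walking down from $\ell^*$, the path decomposes into a directed $\ell^*$–$\ell_1$ path and a directed $\ell^*$–$\ell_2$ path, sharing only the vertex $\ell^*$. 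I would first reduce to the case where $\ell_1$ and $\ell_2$ are the two children of $\ell^*$ on these respective branches: indeed, by Lemma~\ref{lem:basic_arc_properties}\ref{item:comparable_links_have_strict_apex_ancestorship}, apexes are strictly monotone along directed paths, so if $\apex(\ell_1)$ and $\apex(\ell_2)$ were forced into an ancestry relationship by the common vertex, I could replace $\ell_1,\ell_2$ by ancestors closer to $\ell^*$; the key point is that a common vertex of $V_{\ell_1}$ and $V_{\ell_2}$ together with the apex-monotonicity still yields a common vertex for the replaced pair. So assume $(\ell^*,\ell_1)\in A_{u_1}$ and $(\ell^*,\ell_2)\in A_{u_2}$ for some (possibly equal) $u_1,u_2\in U$.

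Next I would extract the geometric contradiction. Let $v\in V_{\ell_1}\cap V_{\ell_2}$. By Lemma~\ref{lem:basic_arc_properties}\ref{item:ancestor_relation_apex}, $\apex(\ell^*)$ is a strict ancestor of both $\apex(\ell_1)$ and $\apex(\ell_2)$, and by part~\ref{item:position_P_u_ell}, the nonempty set $P_{u_1,\ell^*}$ (resp.\ $P_{u_2,\ell^*}$) sits on the $\apex(\ell^*)$–$\apex(\ell_1)$ path (resp.\ $\apex(\ell^*)$–$\apex(\ell_2)$ path). The two cases to handle are: (i) $\apex(\ell_1)$ and $\apex(\ell_2)$ have an ancestry relationship, and (ii) they do not. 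In case (ii), the $r$–$\apex(\ell_1)$ and $r$–$\apex(\ell_2)$ paths diverge at some vertex $x$ which is a (non-strict) descendant of $\apex(\ell^*)$; then any vertex in $V_{\ell_1}$ is a descendant of $\apex(\ell_1)$, hence of a vertex on the $\apex(\ell^*)$-to-$\apex(\ell_1)$-branch, and symmetrically for $V_{\ell_2}$ — so a common vertex $v$ forces $\apex(\ell_1)$ and $\apex(\ell_2)$ to lie on a common root path, contradicting (ii). Thus we are in case (i), say $\apex(\ell_1)$ is a strict ancestor of $\apex(\ell_2)$ (the symmetric case is identical).

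In case (i) I would aim to contradict the minimality of $F_{u_2}$, using the definition of $v_{u_2}$. The point is this: $\ell_1\in C$ is connected to $\ell^*$ through a directed path whose arcs belong to various $A_u$'s; along each such arc the apex strictly descends and, by Lemma~\ref{lem:incoming_arcs}, the corresponding up-link covers the last tree edge above the child's apex. Chaining these, and using that the $P_u$ are pairwise disjoint, one shows that $\apex(\ell_1)$ is a descendant of (or equal to) $v_{u_2}$ — intuitively, $\ell_1$ is one of the links ``available below $v_{u_2}$''. Since $\apex(\ell_1)$ is a strict ancestor of $\apex(\ell_2)$ and $v\in V_{\ell_1}\cap V_{\ell_2}$, the link $\ell_1$ covers a tree edge strictly above $\apex(\ell_2)$ that also lies on $P_{u_2}$; combined with $\ell^*$ and the rest of $F_{u_2}\setminus\{\ell^*\}$, this lets us re-cover $P_{u_2}$ by links all having apex a descendant of $v_{u_2}$, yet in a way that makes $\ell^*$ redundant or shrinks $P_{u_2,\ell^*}$, contradicting minimality of $F_{u_2}$ (via Lemma~\ref{lem:thinness_of_F_u} or directly the nonemptiness of $P_{u_2,\ell^*}$). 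The \textbf{main obstacle} I anticipate is precisely this last step: cleanly formalizing ``$\ell_1$ is available for $u_2$'' and translating the shared vertex $v$ into a concrete violation of the minimality that defines $F_{u_2}$ and $v_{u_2}$ — this is where the careful, non-obvious choice of $F_u$ (as opposed to an arbitrary minimal cover) is essential, and where the bookkeeping with apexes, the disjointness of the $P_u$'s, and Lemma~\ref{lem:incoming_arcs} all have to be combined correctly.
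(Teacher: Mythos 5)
Your opening reduction is not justified and, as far as I can see, cannot be justified without essentially proving the lemma itself. You claim that if incomparable $\ell_1,\ell_2$ share a vertex, you may replace them by the two children of their lowest common ancestor $\ell^*$ in $(C,A)$ and still have a shared vertex. Apex monotonicity (Lemma~\ref{lem:basic_arc_properties}~\ref{item:comparable_links_have_strict_apex_ancestorship}) only tells you that all the relevant apexes lie on the $r$--$v$ path; the links higher up the two branches may run into completely different subtrees hanging off their apexes, so the replaced pair need not intersect at all. The paper's proof never performs such a reduction: it keeps $\ell_1$ fixed, notes that $\apex(\ell_1)$ and $\apex(\ell_2)$ are comparable (both are ancestors of the common vertex $v$), deduces $\apex(\ell_2)\in V_{\ell_1}$, and then locates on the root-to-$\ell_2$ path of $(C,A)$ the ``crossing'' arc $(\ell,\bar\ell)$ with $\apex(\ell)$ a strict ancestor of $\apex(\ell_1)$ and $\apex(\bar\ell)$ a descendant of $\apex(\ell_1)$; neither $\ell$ nor $\bar\ell$ is the LCA or a child of it in general.

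The second, and more serious, gap is the step you yourself flag as the main obstacle: it is exactly where the paper needs a case distinction that your sketch does not contain. Writing $u$ for the up-link with $(\ell,\bar\ell)\in A_u$ and $t$ for its upper endpoint, the argument splits according to the position of $t$ relative to $\apex(\ell_1)$. If $t$ is a descendant of $\apex(\ell_1)$, then $\ell_1$ covers all of $P_{u,\ell}$ (via Lemma~\ref{lem:basic_arc_properties}~\ref{item:position_P_u_ell}), so $P_u$ is covered by $(F_u\setminus\{\ell\})\cup\{\ell_1\}$, all of whose apexes are strict descendants of $\apex(\ell)$ --- and the contradiction is with the choice of $v_u$ as the \emph{lowest} admissible ancestor, not with inclusion-minimality of $F_u$ (substituting one link of $F_u$ by a different link never violates minimality, so ``making $\ell^*$ redundant or shrinking $P_{u,\ell^*}$'' is not by itself a contradiction). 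If instead $t$ is a strict ancestor of $\apex(\ell_1)$, this substitution argument fails, because $P_{u,\ell}$ may contain edges above $\apex(\ell_1)$ that $\ell_1$ does not cover; the paper then argues completely differently: since $u$ covers the last edge of the $r$--$\apex(\ell_1)$ path and the sets $P_{u'}$, $u'\in U$, are disjoint, Lemma~\ref{lem:incoming_arcs} forces the incoming arc of $\ell_1$ to lie in $A_u$, whence $\ell\prec_u\ell_1\prec_u\bar\ell$, contradicting that $(\ell,\bar\ell)$ is an arc of the path $(F_u,A_u)$. Your proposal gestures only at (a version of) the first case, so as it stands the proof is incomplete at its crucial point.
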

\begin{proof}
Let $v\in V_{\ell_1} \cap V_{\ell_2}$. Then $\apex(\ell_1)$ and $\apex(\ell_2)$ are ancestors of $v$ in the tree $G$.
Therefore, $\apex(\ell_1)$ and $\apex(\ell_2)$ have an ancestry relation in $G$, say $\apex(\ell_1)$ is an ancestor of $\apex(\ell_2)$.
Thus, $\apex(\ell_2)$ lies on the $\apex(\ell_1)$-$v$ path in $G$.
Because $v\in V_{\ell_1}$, this implies $\apex(\ell_2) \in V_{\ell_1}$.
See the left part of Figure~\ref{fig:proof_ancestry_relation}.

For the sake of deriving a contradiction, suppose that $\ell_1$ and $\ell_2$ have no ancestry relation in the arborescence $(C,A)$.
Then the path from the root of $(C,A)$ to $\ell_2$ does not contain $\ell_1$. By Lemma~\ref{lem:basic_arc_properties}~\ref{item:ancestor_relation_apex}, there is an arc $a=(\ell, \bar \ell)$ on this path such that $\apex(\ell)$ is a strict ancestor of $\apex(\ell_1)$ and $\apex(\bar \ell)$ is a descendant of $\apex(\ell_1)$. 
Then $\apex(\bar \ell)$ lies on the $\apex(\ell_1)$-$\apex(\ell_2)$ path in $G$.
This implies $\apex(\bar \ell) \in V_{\ell_1}$ because $\apex(\ell_2)\in V_{\ell_1}$.

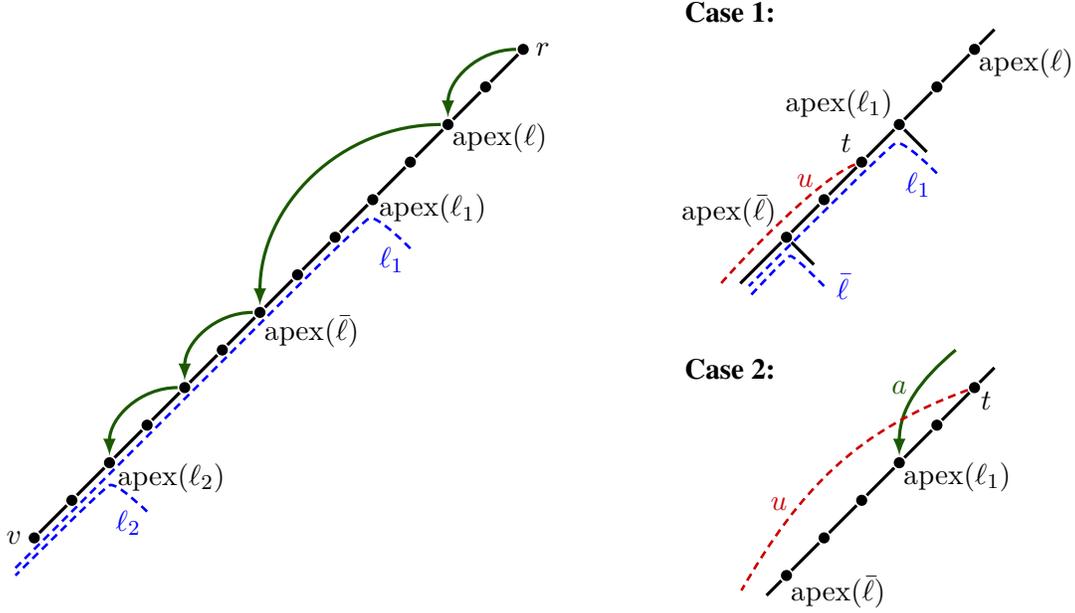
\begin{figure}[t]
\begin{center}
\begin{tikzpicture}[scale=0.5]

\tikzset{
fs/.style={line width=1pt, blue, densely dashed},
us/.style={line width=1pt, darkred, densely dashed}
}

\begin{scope}[every node/.style={thick,draw=black,fill=black,circle,minimum size=0pt, inner sep=1.2pt, outer sep=1pt}]

\node (0) at (15,15) {};
\node (1) at (14,14) {};
\node (2) at (13,13) {};
\node (3) at (12,12) {};
\node (4) at (11,11) {};
\node (5) at (10,10) {};
\node (6) at (9,9) {};
\node (7) at (8,8) {};
\node (8) at (7,7) {};
\node (9) at (6,6) {};
\node (10) at (5,5) {};
\node (11) at (4,4) {};
\node (12) at (3,3) {};
\node (13) at (2,2) {};

\end{scope}

\begin{scope}[darkgreen,very thick, ->, >=latex, in=90, out=180]
\draw (0) to (2);
\draw (2) to (7);
\draw (7) to (9);
\draw (9) to (11);
\end{scope}

\node[right=1pt] (r) at (0) {$r$};
\node[below right=-3pt,yshift=2pt] (r) at (2) {$\apex(\ell)$};
\node[below right=-2.0pt,yshift=4.5pt] (r) at (4) {$\apex(\ell_1)$};
\node[below right=-3pt,yshift=1pt] (r) at (7) {$\apex(\bar \ell)$};
\node[below right=-1pt,yshift=3pt] (r) at (11) {$\apex(\ell_2)$};
\node[left=1pt] (r) at (13) {$v$};

\begin{scope}[very thick]
 \draw (0) -- (1) -- (2) -- (3) -- (4) -- (5) --(6) -- (7) -- (8) --(9) --(10) --(11) -- (12) -- (13);
\end{scope}

\begin{scope}[fs]
\draw plot[smooth, tension=0.2] coordinates {(12,9.7)(10.9,10.5)(8,7.7)(1.5,1.2)};
\draw plot[smooth, tension=0.2] coordinates {(5,2.7)(4.0,3.4)(1.5,1)};
\end{scope}

\node[blue] () at (11.5,9.4) {$\ell_1$};
\node[blue] () at (4.5,2.4) {$\ell_2$};

\begin{scope}[shift={(14,2)}]

\node () at (6.5,14) {\textbf{Case 1:}};

\begin{scope}[every node/.style={thick,draw=black,fill=black,circle,minimum size=0pt, inner sep=1.2pt, outer sep=1pt}]

\node (2) at (13,13) {};
\node (3) at (12,12) {};
\node (4) at (11,11) {};
\node (5) at (10,10) {};
\node (6) at (9,9) {};
\node (7) at (8,8) {};
\end{scope}
\node (1) at (13.8,13.8) {};
\node (8) at (6.5,6.5) {};
\node (4a) at (12,10) {};
\node (7a) at (9,7) {};

\begin{scope}[fs]
\draw plot[smooth, tension=0.2] coordinates {(12,9.7)(10.9,10.5)(8,7.7)(7,6.7)};
\draw plot[smooth, tension=0.2] coordinates {(9,6.7)(8.1,7.5)(7,6.4)};
\end{scope}

\begin{scope}[us]
\draw[out=200,in=50, looseness=0.5] (5) to (6.2,6.7);
\end{scope}

\node[below right=-3pt,yshift=2pt] (r) at (2) {$\apex(\ell)$};
\node[above left=-2pt] (r) at (4) {$\apex(\ell_1)$};
\node[above left] (r) at (7) {$\apex(\bar \ell)$};
\node[above left] (r) at (5) {$t$};

\node[blue] () at (11.5,9.4) {$\ell_1$};
\node[blue] () at (9.5,6.7) {$\bar \ell$};
\node[darkred] () at (8.5,9.5) {$u$};

\begin{scope}[very thick]
 \draw (1) --(2) -- (3) -- (4) -- (5) --(6) -- (7) -- (8);
 \draw (4) -- (4a);
 \draw (7) -- (7a);
\end{scope}
\end{scope}

\begin{scope}[shift={(14,-7)}]

\node () at (6.5,13.5) {\textbf{Case 2:}};

\begin{scope}[every node/.style={thick,draw=black,fill=black,circle,minimum size=0pt, inner sep=1.2pt, outer sep=1pt}]
\node (2) at (13,13) {};
\node (3) at (12,12) {};
\node (4) at (11,11) {};
\node (5) at (10,10) {};
\node (6) at (9,9) {};
\node (7) at (8,8) {};
\end{scope}
\node (1) at (13.8,13.8) {};
\node (8) at (7.2,7.2) {};

\begin{scope}[darkgreen,very thick, ->, >=latex, in=90, out=220]
\draw (12.5,14) to (4);
\end{scope}

\begin{scope}[us]
\draw[out=200,in=60, looseness=1] (2) to (6.8,7.6);
\end{scope}

\node[below right=-3pt, yshift=2pt] (r) at (4) {$\apex(\ell_1)$};
\node[below right=-3pt, yshift=2pt] (r) at (7) {$\apex(\bar \ell)$};
\node[below right=-2pt, yshift=1pt] (r) at (2) {$t$};

\begin{scope}[very thick]
 \draw (1) -- (2) -- (3) -- (4) -- (5) --(6) -- (7) -- (8);
\end{scope}

\node[darkred] () at (7.8,9.9) {$u$};
\node[darkgreen] () at (11,13) {$a$};

\end{scope}

\end{tikzpicture}
 \end{center}
\caption{Illustration of the proof of Lemma~\ref{lem:non_ancestors_do_not_touch_the_same_vertex}.
Here, a dark green arc $(\apex(\ell),\apex(\bar \ell))$ in the picture represents the arc $(\ell,\bar \ell)$ in the dependency graph.
Black vertices and edges show parts of the tree $G$ and links are drawn dashed.
}\label{fig:proof_ancestry_relation}
\end{figure}

Let $u\in U$ with $(\ell,\bar \ell)\in A_u$.
Let $t$ denote the endpoint of $u$ that is closer to the root of the tree $G$. We distinguish two cases.
In the first case, we assume that $t$ is a descendant of $\apex(\ell_1)$. (See top illustration on right-hand side of Figure~\ref{fig:proof_ancestry_relation}.)
Then the edges of the $t$-$\apex(\bar \ell)$ path are covered by $\ell_1$ because  $\apex(\bar \ell)\in V_{\ell_1}$.
Thus, by Lemma~\ref{lem:basic_arc_properties}~\ref{item:position_P_u_ell}, $\ell_1$ covers all edges in $P_{u,\ell}$ and hence
\begin{equation}\label{eq:better_covering}
P_u \subseteq \bigcup_{\ell' \in (F_u \setminus \{\ell\}) \cup \{\ell_1\}} P_{\ell'} \enspace.
\end{equation}
Because $t$ is a descendant of $\apex(\ell_1)$, it is also a descendant of $\apex(\ell)$ and therefore $\ell$ is the first link in $F_u$ with respect to the order $\prec_u$ by Lemma~\ref{lem:basic_arc_properties}.
We conclude that for every $\ell'\in (F_u \setminus \{\ell\}) \cup \{\ell_1\}$, the vertex $\apex(\ell')$ is a strict descendant of $\apex(\ell)$. Because of \eqref{eq:better_covering}, this implies that the vertex $v_u$ in the construction of $F_u$ must be a strict descendant of $\apex(\ell)$, contradicting $\ell \in F_u$.

Now consider the remaining second case, where $t$ is a strict ancestor of $\apex(\ell_1)$. (See bottom illustration on right-hand side of Figure~\ref{fig:proof_ancestry_relation}.)
Because $\ell_1$ and $\ell_2$ have no ancestry relation, $\ell_1$ is not the root of $(C,A)$ and hence $\ell_1$ has an incoming arc $a\in A$.
By Lemma~\ref{lem:incoming_arcs}, $P_u$ covers the last edge of the $r$-$\apex(\bar \ell)$ path in $G$.
In particular, both the descendant $\apex(\bar \ell)$ of $\apex(\ell_1)$ and the strict ancestor $t$ of $\apex(\ell_1)$ are contained in $V_u$.
This implies that $u$ is the unique up-link in $U$ that covers the last edge of the $r$-$\apex(\ell_1)$ path in $G$.
By Lemma~\ref{lem:incoming_arcs}, we conclude that the incoming arc $a$ of $\ell_1$ is contained in $A_u$.
Thus, we have $\ell_1, \ell,\bar \ell \in F_u$.
Because $\apex(\ell)$ is an ancestor of $\apex(\ell_1)$, which in turn is an ancestor of $\apex(\bar \ell)$, we have $\ell \prec_u \ell_1 \prec_u \bar \ell$ by Lemma~\ref{lem:basic_arc_properties}~\ref{item:ancestor_relation_apex}.
This contradicts $(\ell, \bar \ell)\in A_u$.
\end{proof}

Recall that in order to prove property~\ref{item:thinness} of the dependency graph, we need to give an upper bound on $|\{\ell\in C : v\in V_{\ell}\}|$ for all $v\in V$.
The next lemma establishes this upper bound for every vertex $v\in V$ that is the apex of some link $\ell\in C$.

\begin{lemma}\label{lem:path_length_induction}
Let $\ell \in C$ and let $H_{\ell}$ be the arc set of the path from the root of the arborescence $(C,A)$ to $\ell$.
Then
\begin{equation*}
\left|\left\{ \bar \ell \in C\setminus \{\ell\} \colon \apex(\ell) \in V_{\bar \ell} \right\}\right|
\ \le\ \left|\left\{ u\in U \colon H_{\ell} \cap A_u \ne \emptyset \right\}\right|\enspace.
\end{equation*}
\end{lemma}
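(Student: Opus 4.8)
The plan is to first reformulate the left‑hand side as the number of links of $C$ covering one particular tree edge, and then to inject those links into the set of up‑links whose path meets $H_\ell$.

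Let $e_\ell\in E$ be the last edge of the $r$–$\apex(\ell)$ path in $G$, i.e.\ the edge joining $\apex(\ell)$ to its parent. I would first prove that
\begin{equation*}
\{\bar\ell\in C\setminus\{\ell\}\colon \apex(\ell)\in V_{\bar\ell}\}\ =\ \{\bar\ell\in C\colon e_\ell\in P_{\bar\ell}\}\enspace,
\end{equation*}
and that every $\bar\ell$ in this common set is a \emph{strict ancestor} of $\ell$ in the arborescence $(C,A)$. Indeed, if $\apex(\ell)\in V_{\bar\ell}$ then $\apex(\ell)\in V_\ell\cap V_{\bar\ell}$, so Lemma~\ref{lem:non_ancestors_do_not_touch_the_same_vertex} gives an ancestry relation between $\ell$ and $\bar\ell$ in $(C,A)$; if $\ell$ were a strict ancestor of $\bar\ell$, then Lemma~\ref{lem:basic_arc_properties}~\ref{item:comparable_links_have_strict_apex_ancestorship} (applied along arcs) would make $\apex(\ell)$ a strict $G$‑ancestor of $\apex(\bar\ell)$, contradicting $\apex(\ell)\in V_{\bar\ell}$, whose root‑closest vertex is $\apex(\bar\ell)$; hence $\bar\ell$ is a strict ancestor of $\ell$, $\apex(\bar\ell)$ is a strict $G$‑ancestor of $\apex(\ell)$, and the $\apex(\bar\ell)$–$\apex(\ell)$ path — which contains $e_\ell$ — is a subpath of $V_{\bar\ell}$, so $e_\ell\in P_{\bar\ell}$. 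The converse inclusion is immediate, and $\ell$ itself does not cover $e_\ell$ (its root‑closest vertex is $\apex(\ell)$). Thus the quantity to be bounded equals $g$, where $\bar\ell_1,\dots,\bar\ell_g$ are the links of $C$ covering $e_\ell$, listed in the order they appear along $H_\ell$ (each a strict ancestor of $\ell$, hence on $H_\ell$, and each an ancestor of the next).

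For each $i$ let $a_i$ be the arc of $H_\ell$ leaving $\bar\ell_i$ toward $\ell$; these are $g$ distinct arcs of $H_\ell$ (and $a_g$ exists because $\ell\notin\{\bar\ell_1,\dots,\bar\ell_g\}$). It suffices to show that $a_1,\dots,a_g$ lie in pairwise distinct sets $A_u$, since this gives $|\{u\in U\colon H_\ell\cap A_u\neq\emptyset\}|\ge g$. I would first record a \emph{block} property of $H_\ell$: if two arcs of $H_\ell$ belong to the same $A_u$, then so does every arc of $H_\ell$ lying between them. This holds because $A_u$ is the arc set of a directed path and $(C,A)$ is a branching (Lemma~\ref{lem:branching}), so the sub‑path of $A_u$'s path between two of its arcs is \emph{the} unique directed path between the corresponding vertices of the arborescence and must coincide with the relevant stretch of $H_\ell$. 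Given the block property, it is enough to rule out $a_j,a_{j+1}\in A_u$ for consecutive indices $j$: if $a_j$ and $a_{j'}$ ($j<j'$) shared an $A_u$, the block property would force $a_{j+1}$ into it too.

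So suppose $a_j,a_{j+1}\in A_u$. By the block property, the stretch of $H_\ell$ from $\bar\ell_j$ down to $\bar\ell_{j+1}$ uses only arcs of $A_u$, so $\bar\ell_j=m_p\prec_u m_{p+1}\prec_u\cdots\prec_u m_{p+s}=\bar\ell_{j+1}$ are $\prec_u$‑consecutive links of $F_u$ with $s\ge 1$, all strict ancestors of $\ell$. Since $m_p$ covers $e_\ell$ we have $\apex(\ell)\in V_{m_p}$, and $\apex(m_p)$ is a strict $G$‑ancestor of $\apex(\ell)$ by Lemma~\ref{lem:basic_arc_properties}~\ref{item:comparable_links_have_strict_apex_ancestorship}; hence the whole $\apex(m_p)$–$\apex(\ell)$ path lies on $V_{m_p}$, so $m_p$ covers each of its edges. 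On the other hand, $m_{p+1}$ has a $\prec_u$‑successor $m_{p+2}$ (because $\bar\ell_{j+1}$'s outgoing arc $a_{j+1}$ lies in $A_u$), $m_{p+2}$ lies on $H_\ell$ so $\apex(m_{p+2})$ is an ancestor of $\apex(\ell)$ in $G$, and Lemma~\ref{lem:basic_arc_properties}~\ref{item:comparable_links_have_strict_apex_ancestorship} together with Lemma~\ref{lem:basic_arc_properties}~\ref{item:position_P_u_ell} give that the nonempty set $P_{u,m_{p+1}}$ lies on the $\apex(m_{p+1})$–$\apex(m_{p+2})$ path, which is a subpath of the $\apex(m_p)$–$\apex(\ell)$ path. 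Therefore $m_p$ covers $P_{u,m_{p+1}}$, contradicting the defining property of $P_{u,m_{p+1}}$ (nonempty, and covered by no link of $F_u$ other than $m_{p+1}$).

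\textbf{Main obstacle.} The delicate part is the last paragraph: getting the block structure of $H_\ell$ exactly right (why the links $\bar\ell_j,\dots,\bar\ell_{j+1}$ become $\prec_u$‑consecutive in $F_u$, and why $m_{p+2}$ exists and sits on the $\apex(m_p)$–$\apex(\ell)$ path, handling the edge case $s=1$ and the possibility $m_{p+2}=\ell$), and spotting the contradiction — namely that a link covering $e_\ell$ is forced to swallow the entire private stretch $P_{u,m_{p+1}}$ of the link two steps down its $A_u$‑path. Everything else is a routine assembly of Lemmas~\ref{lem:basic_arc_properties}, \ref{lem:branching}, and \ref{lem:non_ancestors_do_not_touch_the_same_vertex}.
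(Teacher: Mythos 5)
Your proof is correct, but it takes a genuinely different route from the paper's. The paper proves the lemma by induction on $k=\left|\left\{u\in U\colon H_\ell\cap A_u\neq\emptyset\right\}\right|$: by Lemma~\ref{lem:non_ancestors_do_not_touch_the_same_vertex} every $\bar\ell$ with $\apex(\ell)\in V_{\bar\ell}$ is an ancestor of $\ell$ in $(C,A)$; these ancestors are split into strict ancestors of the first link $\ell_1$ of the path $(F_u,A_u)$ containing the arc entering $\ell$ (handled by the induction hypothesis at $\ell_1$, via $\apex(\ell)\in V_{\bar\ell}\Rightarrow\apex(\ell_1)\in V_{\bar\ell}$) and links of $F_u$ itself, of which at most one besides $\ell$ contains $\apex(\ell)$ by the $2$-thinness of $F_u$ (Lemma~\ref{lem:thinness_of_F_u}). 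You instead count directly: you identify the left-hand side with the links of $C$ covering the parent edge $e_\ell$ of $\apex(\ell)$, note they are totally ordered strict ancestors of $\ell$ on $H_\ell$, and injectively map each to its outgoing arc on $H_\ell$; the crux is that two such arcs cannot lie in the same $A_u$, which you obtain from the block structure of $A_u$ along $H_\ell$ (valid, since $(C,A)$ is a branching by Lemma~\ref{lem:branching} and $(F_u,A_u)$ is a descending path, so directed paths to a descendant are unique) together with a covering contradiction: a link covering $e_\ell$ covers the whole $\apex(m_p)$--$\apex(\ell)$ stretch, hence the private edges $P_{u,m_{p+1}}$ of the next link of $F_u$ (placed there by Lemma~\ref{lem:basic_arc_properties}), contradicting Lemma~\ref{lem:P_u_ell_path} and the definition of $P_{u,\ell}$. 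Your contradiction is in the spirit of Case~1 in the proof of Lemma~\ref{lem:non_ancestors_do_not_touch_the_same_vertex}, but the counting is organized differently: you get an explicit injection from the links containing $\apex(\ell)$ into the up-links met by $H_\ell$ and never need Lemma~\ref{lem:thinness_of_F_u}, whereas the paper's induction is shorter once Lemmas~\ref{lem:thinness_of_F_u} and~\ref{lem:non_ancestors_do_not_touch_the_same_vertex} are available; your version makes the ``at most one new link per $A_u$ crossed'' intuition constructive at the price of the block-property bookkeeping. Two trivial loose ends to state explicitly: if $\apex(\ell)=r$ then $e_\ell$ does not exist, but your own apex comparison shows the left-hand set is empty in that case; and when $s=1$ the link $m_{p+2}$ may be $\ell$ itself, which is harmless since you only use that $\apex(m_{p+2})$ is a (not necessarily strict) ancestor of $\apex(\ell)$.
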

\begin{proof}
Let $k\coloneqq |\{ u\in U \colon H_{\ell} \cap A_u \ne \emptyset \}|$. We prove the lemma by induction on $k$.
If $k=0$, then the link $\ell$ is the root of the arborescence $(C,A)$.
This implies by Lemma~\ref{lem:basic_arc_properties}~\ref{item:ancestor_relation_apex} that $\apex(\bar \ell)$ is a strict descendant of $\apex(\ell)$ for every $\bar \ell \in C\setminus \{\ell\}$. Hence, in this case $\{ \bar \ell \in C\setminus \{\ell\} \colon \apex(\ell) \in V_{\bar \ell} \} = \emptyset$, as desired.

Now suppose $k>0$. 
By Lemma~\ref{lem:non_ancestors_do_not_touch_the_same_vertex} (and Lemma~\ref{lem:basic_arc_properties}~\ref{item:ancestor_relation_apex}),
every link $\bar \ell \in C$ with $\apex(\ell)\in V_{\bar{\ell}}$
is an ancestor of $\ell$ in $(C,A)$.
Because $k>0$, the link $\ell$ has an incoming arc $a\in A$. 
Let $u\in U$ be the unique up-link with $a\in A_u$.
Let $\ell_{1}\in F_u$ be the first link on the directed path $(F_u,A_u)$ in the arborescence $(C,A)$.
Then for every ancestor $\bar \ell$ of $\ell$ in the arborescence $(C,A)$, and hence for every link $\bar \ell \in C$ with $\apex(\ell)\in V_{\bar{\ell}}$, we have either
\begin{enumerate}
\item\label{item:links_induction} $\bar \ell$ is a strict ancestor of $\ell_1$ in $(C,A)$, or
\item\label{item:link_F_u} $\bar \ell \in F_u$.
\end{enumerate}
Consider a strict ancestor $\bar \ell$ of $\ell_1$.
By Lemma~\ref{lem:basic_arc_properties}~\ref{item:ancestor_relation_apex}, $\apex(\bar \ell)$ is an ancestor of $\apex(\ell_1)$, which in turn is an ancestor of $\apex(\ell)$ in the tree $G$.
Therefore, if $\apex(\ell) \in V_{\bar{\ell}}$, then also $\apex(\ell_1) \in V_{\bar{\ell}}$. 
By the inductive hypothesis applied to $\ell_1$, this implies that there are at most $k-1$ links in $\bar \ell \in C\setminus \{\ell_1\}$ that fulfill both \ref{item:links_induction} and $\apex(\ell) \in V_{\bar{\ell}}$.
Thus, it suffices to show that there is at most one link $\bar \ell \in F_u\setminus \{\ell\}$ with $\apex(\ell)\in V_{\bar \ell}$.
This holds because by Lemma~\ref{lem:thinness_of_F_u}, there are at most two links $\bar \ell \in F_u$ with $\apex(\ell) \in V_{\bar \ell}$, one of which is $\ell$.
\end{proof}

Finally, we are ready to prove property~\ref{item:thinness} of the dependency graph.

\begin{lemma}\label{lem:thinness}
Let $k\in \mathbb{Z}_{\geq 0}$. If for every path in $(C,A)$ with arc set $H\subseteq A$, we have
\begin{equation*}
\left|\left\{ u\in U \colon H \cap A_u \ne \emptyset \right\}\right| \le k\enspace,
\end{equation*}
then $C$ is $(k+1)$-thin.
\end{lemma}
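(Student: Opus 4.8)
The plan is to fix an arbitrary vertex $v\in V$ and bound $|S_v|\le k+1$, where $S_v\coloneqq\{\ell\in C\colon v\in V_\ell\}$; since $v$ is arbitrary, this gives $(k+1)$-thinness of $C$. First I would note that any two links $\ell_1,\ell_2\in S_v$ satisfy $v\in V_{\ell_1}\cap V_{\ell_2}\ne\emptyset$, so by Lemma~\ref{lem:non_ancestors_do_not_touch_the_same_vertex} they have an ancestry relationship in the arborescence $(C,A)$. Hence $S_v$ is a chain in $(C,A)$ and all its links lie on a single directed path, from the topmost element of $S_v$ down to a bottommost element $\ell^*\in S_v$, i.e.\ $\ell^*$ is a descendant in $(C,A)$ of every link in $S_v$.

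The key geometric step is to show that $\apex(\ell^*)\in V_{\bar\ell}$ for every $\bar\ell\in S_v\setminus\{\ell^*\}$. Such a $\bar\ell$ is a strict ancestor of $\ell^*$ in $(C,A)$, so by Lemma~\ref{lem:basic_arc_properties}~\ref{item:ancestor_relation_apex} (applied along the arcs of the $\bar\ell$-$\ell^*$ path) the vertex $\apex(\bar\ell)$ is a strict ancestor of $\apex(\ell^*)$ in $G$. Since both $\apex(\bar\ell)$ and $\apex(\ell^*)$ are ancestors of $v$ (because $v\in V_{\bar\ell}\cap V_{\ell^*}$), the vertex $\apex(\ell^*)$ lies on the $\apex(\bar\ell)$-$v$ subpath of $G$; and that subpath is contained in $V_{\bar\ell}$ because $(V_{\bar\ell},P_{\bar\ell})$ is a path in the tree $G$ whose vertex set $V_{\bar\ell}$ contains both endpoints $\apex(\bar\ell)$ and $v$. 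Therefore $S_v\setminus\{\ell^*\}\subseteq\{\bar\ell\in C\setminus\{\ell^*\}\colon \apex(\ell^*)\in V_{\bar\ell}\}$.

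Now I would apply Lemma~\ref{lem:path_length_induction} with $\ell=\ell^*$. Letting $H_{\ell^*}$ be the arc set of the path from the root of $(C,A)$ to $\ell^*$, we get
\[
|S_v\setminus\{\ell^*\}|\ \le\ \bigl|\{\bar\ell\in C\setminus\{\ell^*\}\colon \apex(\ell^*)\in V_{\bar\ell}\}\bigr|\ \le\ \bigl|\{u\in U\colon H_{\ell^*}\cap A_u\ne\emptyset\}\bigr|.
\]
Since $H_{\ell^*}$ is the arc set of a directed path in $(C,A)$, the hypothesis of the lemma bounds the last quantity by $k$, so $|S_v|\le k+1$, which completes the proof.

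The main subtlety I anticipate is not in any calculation but in correctly choosing which vertex's apex to feed into Lemma~\ref{lem:path_length_induction}: one must take $\ell^*$ to be the $(C,A)$-descendant of all links in $S_v$ (equivalently, by Lemma~\ref{lem:basic_arc_properties}~\ref{item:ancestor_relation_apex}, the one with the deepest apex), and then the containment $\apex(\ell^*)\in V_{\bar\ell}$ hinges on the fact that the apexes of the ancestors of $\ell^*$ are \emph{strict} ancestors of $\apex(\ell^*)$ while also lying above $v$. Once $\ell^*$ is chosen this way, the argument reduces to Lemma~\ref{lem:non_ancestors_do_not_touch_the_same_vertex} (to get a chain) and Lemma~\ref{lem:path_length_induction} (to count), with no further work.
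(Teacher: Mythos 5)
Your proof is correct and follows essentially the same route as the paper: both arguments replace the vertex $v$ by the apex of a carefully chosen link and then invoke Lemma~\ref{lem:path_length_induction} together with the path hypothesis to get the bound $k+1$. The only difference is the choice of pivot link — the paper takes a link of $C$ whose apex is the deepest apex on the $r$-$v$ path (which need not contain $v$), avoiding the explicit appeal to Lemma~\ref{lem:non_ancestors_do_not_touch_the_same_vertex}, whereas you take the bottommost element of the chain of links through $v$; both choices work.
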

\begin{proof}
We need to show that, for every $v\in V$, there are at most $k+1$ links $\ell \in C$ with $v\in V_{\ell}$.
Let $z\in V$ be the last vertex on the $r$-$v$ path in $G$ that is the apex of some link in $C$, and let $\ell_{z}\in C$ be a link with $\apex(\ell_{z})=z$.
For every link $\ell\in C$ with $v \in V_{\ell}$, the vertex $\apex(\ell)$ is an ancestor of $v$. 
Thus, by the choice of $z$, we have $z \in V_\ell$ for each such link $\ell$.
Hence, it suffices to show that there are at most $k+1$ links $\ell\in C$ with $z\in V_{\ell}$.
By Lemma~\ref{lem:path_length_induction}, there are at most $k$ links $\ell\in C\setminus \{\ell_{z}\}$ with $z \in V_{\ell}$ and hence at most $k+1$ links $\ell \in C$ with $z \in V_{\ell}$.
\end{proof}

\subsection{Proof of the decomposition theorem}

We are now ready to complete the proof of the decomposition theorem, which we restate here for convenience. See Figure~\ref{fig:example_choice_R} for an illustration of the proof.

\decompositionTheorem*

\begin{proof}
Let $k\coloneqq \lceil\sfrac{1}{\epsilon}\rceil$.
We start by defining, independently for every connected component $(C,A)$ of the dependency graph of $U$, a labeling $c\colon A\to \mathbb{Z}_{\geq 0}$, where arcs contained in the same set $A_u$ with $u\in U$ will have the same label. (Hence, this can be interpreted as a labeling of the sets $A_u$ as we did in our brief description in Figure~\ref{fig:example_choice_R}.)
For all arcs $a$ in a path $(F_u,A_u)$ (with $u\in U$) that start at the root of the arborescence $(C,A)$, we set $c(a)\coloneqq 0$.
For a path $(F_u,A_u)$ that starts at a link $\ell$ that is not the root of $(C,A)$, let $j\in \mathbb{Z}_{\geq 0}$ be the label of the incoming arc of $\ell$. Then we set $c(a) = j+1$ for all $a \in A_u$.
Because $(C,A)$ is an arborescence and we can consider the paths $(A_u,F_u)$ in an order of increasing distance of their start point from the root of $(C,A)$, this indeed defines a labeling $c\colon A \to \mathbb{Z}_{\geq 0}$.

For $i\in \{0,\dots,k-1\}$, let $R_i \subseteq U$ be the set of up-links in $U$ for which the arcs in $A_u$ have a label $j$ with $j\equiv i \pmod{k}$.
Then $\{R_0,R_1,\dots,R_{k-1}\}$ is a partition of $U$.
Hence, there exists some $i\in \{0,\dots,k-1\}$ such that $w(R_i) \le \sfrac{w(U)}{k}  \le \epsilon \cdot w(U)$, and we set $R=R_i$.

This completes the construction of $R\subseteq U$ with $w(R) \le \epsilon \cdot w(U)$.
We choose the partition $\Cscr$ of $F$ to be the collection of the vertex sets of the connected components of the dependency graph of $U\setminus R$.
The dependency graph of $U\setminus R$ arises from the dependency graph of $U$ by deleting the arcs of each $k$-th label, starting with label $i$.
Therefore, by the construction of the labeling $c\colon A\to \mathbb{Z}_{\geq 0}$, every path in the dependency graph of $U\setminus R$ has nonempty intersection with $A_u$ for at most $k-1$ links in $U\setminus R$.
By Lemma~\ref{lem:thinness}, this implies that all elements of $\Cscr$ are $k$-thin. 
Finally, we observe that by the definition of $\Cscr$, we have that, for every $u\in U\setminus R$, there exists some $C\in \Cscr$ with $F_u\subseteq C$. This shows property~\ref{item:links_outside_R_covered}.
\end{proof}
\section{Finding optimal thin components}\label{sec:dynamic_program}

In this section we prove Lemma~\ref{lem:finding_optimal_thin_components}, i.e., we prove that, for any constant $k\in \mathbb{Z}_{\geq 0}$, we can efficiently find a $k$-thin set $C\subseteq L$ that minimizes $\sfrac{w(C)}{w(\Drop_U(C))}$. To this end, we compute the optimal ratio 
\begin{equation}
\rho^* \coloneqq \min \left\{\frac{w(C)}{w(\Drop_U(C))} : C\subseteq L \text{ is $k$-thin}\right\}\label{eq:rhoStarAsMinRatioProb}
\end{equation}
and a corresponding minimizer through a binary search procedure that relies on a dynamic program to decide whether some value $\rho\in \mathbb{R}$ is larger or smaller than $\rho^*$. (We recall that $\sfrac{w(C)}{w(\Drop_U(C))}$ is interpreted as $\infty$ whenever $w(\Drop_U(C))=0$, which, due to strictly positive link weights, is equivalent to $C=\emptyset$.)

\subsection{Reducing to slack maximization}

The question of whether some value $\rho \in \mathbb{R}$ is larger or smaller than $\rho^*$ naturally reduces to maximizing the following slack function $\slack_{\rho}(C)$ over all $k$-thin sets $C\subseteq L$:
\begin{equation*}
\slack_{\rho}(C) \coloneqq \rho \cdot w\bigl(\Drop_U(C)\bigr) - w(C)\enspace.
\end{equation*}
More precisely, we have the following simple yet very helpful equivalence, which immediately follows from the definition of $\slack_{\rho}$.
\begin{observation}\label{obs:basicSlackRatioRel}
Let $\rho\in \mathbb{R}$ and $C\subseteq L$ with $C\neq \emptyset$. Then $\slack_{\rho}(C)\geq 0$ if and only if $\frac{w(C)}{w(\Drop_U(C))} \leq \rho$.
\end{observation}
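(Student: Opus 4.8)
The plan is to unfold the definition of $\slack_\rho$ and reduce the claimed equivalence to an elementary inequality manipulation, treating the degenerate case $\Drop_U(C)=\emptyset$ separately so that the $\infty$ convention is respected.

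First I would rewrite the condition $\slack_\rho(C)\geq 0$ as $\rho\cdot w(\Drop_U(C))\geq w(C)$, which is immediate from the defining identity $\slack_\rho(C)=\rho\cdot w(\Drop_U(C))-w(C)$. Next I would split into two cases according to the sign of $w(\Drop_U(C))$. If $w(\Drop_U(C))>0$, then dividing both sides of $\rho\cdot w(\Drop_U(C))\geq w(C)$ by the positive quantity $w(\Drop_U(C))$ gives exactly $\rho\geq\sfrac{w(C)}{w(\Drop_U(C))}$; since division by a positive number is reversible, this yields the desired equivalence. If instead $w(\Drop_U(C))=0$, i.e.\ $\Drop_U(C)=\emptyset$, then I would use that $C\neq\emptyset$ together with strict positivity of the link weights to conclude $w(C)>0$; hence $\rho\cdot w(\Drop_U(C))=0<w(C)$, so $\slack_\rho(C)<0$, while on the other side the convention $\sfrac{w(C)}{w(\Drop_U(C))}=\infty>\rho$ means $\sfrac{w(C)}{w(\Drop_U(C))}\leq\rho$ fails as well. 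Both sides of the equivalence are thus false, and the statement holds vacuously.

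I do not expect any substantial obstacle here: the only subtlety is the degenerate case, which requires invoking both that $C$ is nonempty with strictly positive weights (giving $w(C)>0$) and the stated convention for a zero denominator. With the two cases in hand the argument is routine, and it is exactly this observation that will let the subsequent binary search certify whether a trial value $\rho$ lies above or below $\rho^*$ by testing the sign of the maximum of $\slack_\rho$ over $k$-thin sets.
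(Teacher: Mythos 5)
Your proof is correct and matches the paper's treatment: the paper regards this observation as immediate from the definition of $\slack_\rho$, and your argument simply spells out that unfolding, with the degenerate case $\Drop_U(C)=\emptyset$ handled exactly as the paper's convention (ratio interpreted as $\infty$, and $w(C)>0$ by nonemptiness and strictly positive weights) requires. No gaps.
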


Hence, the question how a given $\rho\in \mathbb{R}$ compares to $\rho^*$, which is what we need to apply binary search, reduces to maximizing $\slack_{\rho}(C)$ over nonempty $k$-thin sets $C\subseteq L$, as formalized below.
\begin{observation}\label{obs:binSearchToDP}
Let $\rho\in \mathbb{R}$. Then the following two statements are equivalent:
\begin{enumerate}
\item\label{item:rhoAtLeastRhoStar} $\rho \geq \rho^*$.
\item\label{item:DPProbAtLeastZero} $\max\{\slack_{\rho}(C) \colon C\subseteq L \text{ is $k$-thin and } C\neq \emptyset\} \geq 0$.
\end{enumerate}
\end{observation}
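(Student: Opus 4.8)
The plan is to read off the equivalence from Observation~\ref{obs:basicSlackRatioRel} together with the definition of $\rho^*$ as a minimum ratio, after one preliminary remark that disposes of ``useless'' sets. The remark: if $C$ is nonempty but $\Drop_U(C)=\emptyset$, then $\slack_{\rho}(C)=-w(C)<0$ because link weights are strictly positive. Hence such a $C$ can never attain a nonnegative slack, and so statement~\ref{item:DPProbAtLeastZero} holds if and only if there is a $k$-thin set $C$ with $\Drop_U(C)\neq\emptyset$ and $\slack_{\rho}(C)\geq 0$. (Note also that $\Drop_U(C)\neq\emptyset$ automatically forces $C\neq\emptyset$, since each $P_u$ is a nonempty edge set.)

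For the implication \ref{item:rhoAtLeastRhoStar}$\Rightarrow$\ref{item:DPProbAtLeastZero}, I would fix a $k$-thin minimizer $C$ of~\eqref{eq:rhoStarAsMinRatioProb}, which exists because there are only finitely many $k$-thin subsets of $L$. Assuming $\rho\geq\rho^*$, in particular $\rho^*<\infty$, so this minimizer satisfies $\Drop_U(C)\neq\emptyset$ and $\sfrac{w(C)}{w(\Drop_U(C))}=\rho^*\leq\rho$; Observation~\ref{obs:basicSlackRatioRel} then yields $\slack_{\rho}(C)\geq 0$, and since $C$ is a nonempty $k$-thin set it certifies that the maximum in statement~\ref{item:DPProbAtLeastZero} is nonnegative. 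For the converse \ref{item:DPProbAtLeastZero}$\Rightarrow$\ref{item:rhoAtLeastRhoStar}, I would take a nonempty $k$-thin $C$ with $\slack_{\rho}(C)\geq 0$; by the preliminary remark $\Drop_U(C)\neq\emptyset$, so Observation~\ref{obs:basicSlackRatioRel} gives $\sfrac{w(C)}{w(\Drop_U(C))}\leq\rho$, and since $C$ is feasible for~\eqref{eq:rhoStarAsMinRatioProb} this forces $\rho^*\leq\sfrac{w(C)}{w(\Drop_U(C))}\leq\rho$.

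I do not expect a real obstacle here. The only subtlety is the mismatch between the constraint ``$C\neq\emptyset$'' used in statement~\ref{item:DPProbAtLeastZero} and the condition ``$\Drop_U(C)\neq\emptyset$'' that is implicit in $\rho^*$ being finite; strict positivity of $w$ bridges this, exactly as in the preliminary remark. I would also briefly note that the degenerate case $\rho^*=\infty$ (no $k$-thin $C$ has $\Drop_U(C)\neq\emptyset$, e.g.\ when $U=\emptyset$) is consistent with the claim: then statement~\ref{item:rhoAtLeastRhoStar} fails for every real $\rho$, while every nonempty $k$-thin $C$ has $\slack_{\rho}(C)=-w(C)<0$, so statement~\ref{item:DPProbAtLeastZero} fails as well.
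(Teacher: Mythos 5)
Your proof is correct and takes essentially the same route as the paper's: both directions reduce to Observation~\ref{obs:basicSlackRatioRel} after noting that a $k$-thin set achieving ratio at most a finite $\rho$ must be nonempty (equivalently, that a nonempty $C$ with $\Drop_U(C)=\emptyset$ has strictly negative slack). Your extra remarks on the degenerate case $\rho^*=\infty$ and on $\Drop_U(C)\neq\emptyset$ forcing $C\neq\emptyset$ are just a more explicit version of the bookkeeping the paper does in one line.
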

\begin{proof}
We have $\rho \geq \rho^*$ if and only if there is a $k$-thin set $C\subseteq L$ with $\sfrac{w(C)}{w(\Drop_U(C))} \leq \rho$. Notice that $C$ must be nonempty, for otherwise we would have $\Drop_U(C)=\emptyset$ and $\sfrac{w(C)}{w(\Drop_U(C))}$ would have been interpreted as $\infty$, which violates finiteness of $\rho$. By Observation~\ref{obs:basicSlackRatioRel}, we thus obtain that $\rho \geq \rho^*$ if and only if there is a nonempty $k$-thin set $C\subseteq L$ with $\slack_{\rho}(C)\geq 0$, as desired.
\end{proof}

Hence, to compute the value of $\rho^*$ by binary search,
it suffices to have an algorithm for the maximization problem in point~\ref{item:DPProbAtLeastZero} of Observation~\ref{obs:binSearchToDP}.
In Section~\ref{sec:slack_max}, we describe a dynamic program that solves this maximization problem (or decides that the maximum $\slack_{\rho}(C)$ is negative).
The result of our dynamic program is summarized in the lemma below.
\begin{lemma}\label{lem:slack_maximization}
Let $k\in \mathbb{Z}_{\geq 0}$ be a constant. 
Given a number $\rho\in \mathbb{R}$, a WTAP instance $(G=(V,E),L,w)$, and $U\subseteq L_{\mathrm{up}}$  such that the edge sets $P_u$ for $u\in U$ are pairwise disjoint,
we can in polynomial time compute a $k$-thin set $\overline{C}\subseteq L$ that maximizes $\slack_{\rho}(C)$
over all $k$-thin sets $C\subseteq L$. Moreover, if there is a nonempty maximizer, then $\overline{C}\neq\emptyset$.
\end{lemma}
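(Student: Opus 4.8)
The plan is to prove Lemma~\ref{lem:slack_maximization} by a bottom-up dynamic program over the tree $G$, rooted at the fixed root $r$. The observation that makes this feasible is that $k$-thinness bounds the \emph{local} complexity of a feasible set $C$. If $e_v = \{v,p(v)\}$ is the edge joining a non-root vertex $v$ to its parent, then every link $\ell\in C$ with $e_v\in P_\ell$ has $v\in V_\ell$, so at most $k$ links of $C$ cross $e_v$; and writing $A_v \coloneqq \{\ell\in C : \apex(\ell)=v\}$, one checks easily that $\{\ell\in C : v\in V_\ell\} = A_v \cupp \{\ell\in C : e_v\in P_\ell\}$, so the $k$-thinness constraint at $v$ is exactly $|A_v| + |\{\ell\in C : e_v\in P_\ell\}| \le k$. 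Hence, when processing $G$ from the leaves toward $r$, the only information about the already-decided part of $C$ on which the remaining decisions depend is, for each boundary edge $e_v$: (a) the set $S\subseteq L$ of the at most $k$ links of $C$ crossing $e_v$; and (b), to handle the $\Drop_U$ term, one bit recording whether the unique up-link $u\in U$ with $e_v\in P_u$ (unique because the $P_u$ are pairwise disjoint) already has all of its edges inside the subtree of $v$ covered by $C$ (this bit is omitted when no such $u$ exists). This yields $|L|^{O(k)}$ states per edge, which is polynomially many since $k$ is constant.

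Concretely, I would define $D(v,S,\beta)$ to be the maximum, over all $k$-thin link sets $C'$ consisting of the links in $S$ (whose apexes are all strict ancestors of $v$) together with links whose apex lies in the subtree of $v$, and whose coverage bit at $e_v$ equals $\beta$, of
\begin{equation*}
\rho\cdot w\bigl(\{u\in U : t_u \text{ lies in the subtree of }v \text{ and } P_u\subseteq\textstyle\bigcup_{\ell\in C'}P_\ell\}\bigr)\ -\ w\bigl(\{\ell\in C' : \apex(\ell) \text{ lies in the subtree of }v\}\bigr),
\end{equation*}
where $t_u$ denotes the endpoint of $u$ closer to $r$, so that ``$t_u$ in the subtree of $v$'' is equivalent to $P_u$ lying entirely in that subtree. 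Links crossing $e_v$ are deliberately not charged in $D(v,\cdot,\cdot)$: they will be charged once the program reaches their apex. The value we want, $\max\{\slack_\rho(C) : C\subseteq L \text{ is $k$-thin}\}$, is then $D(r,\emptyset,\cdot)$, since no link crosses the (non-existent) edge above $r$ and both every apex and every $P_u$ lies in the subtree of $r$. For a leaf $v$ one has $A_v=\emptyset$ forcibly, $D(v,S,\beta)=0$ whenever $S$ is a set of at most $k$ links having $v$ as an endpoint and $\beta$ is the bit it induces on the possible $u$ with bottom $v$, and $D(v,S,\beta)=-\infty$ otherwise.

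The transition at an internal vertex $v$ with children $c_1,\dots,c_d$ combines the states at $e_{c_1},\dots,e_{c_d}$: one selects states $(S_i,\beta_i)$ for the children and a set $A_v$ of apex-$v$ links to add to $C$, subject to consistency (the links appearing in $\bigcup_i S_i$ are exactly $A_v$ together with the set $S_{e_v}$ of those whose apex is a strict ancestor of $v$; a link of $A_v$ with both endpoints below $v$ must appear in the states of exactly the two children on its path, and a link of $A_v$ with an endpoint equal to $v$ in the state of exactly the one relevant child) and to the thinness bound $|A_v|+|S_{e_v}|\le k$; one then charges $-w(A_v)$, computes $\beta_{e_v}$ from the $\beta_i$ and from whether $S_{e_v}\neq\emptyset$, and adds $\rho\cdot w(u)$ for each $u$ with $t_u=v$ whose coverage bit certifies that all of $P_u$ is covered. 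Because $v$ may have many children and an apex-$v$ link straddles up to two of their subtrees, I would perform this combination by scanning the children one by one, maintaining an intermediate state at $v$ that additionally records which apex-$v$ links have so far had one of their endpoints matched; since $\bigcup_i S_i = \{\ell\in C : v\in V_\ell\}$ has at most $k$ links, this intermediate state space is still $|L|^{O(k)}$, and the total running time is $|V|\cdot|L|^{O(k)}$, i.e.\ polynomial. An optimal $C$ is recovered by standard backtracking. For the ``moreover'' part, I would run the same dynamic program augmented with a boolean flag ``at least one link has been selected'', obtaining both the overall maximum $M$ and the maximum $M'$ over \emph{nonempty} $k$-thin sets; since $\emptyset$ is $k$-thin with $\slack_\rho(\emptyset)=0$ we have $M\ge 0$ and $M\ge M'$, and we output the nonempty set achieving $M'$ if $M'=M$, and $\emptyset$ otherwise (in which case $M=0$ and $\emptyset$ is the unique maximizer).

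The step I expect to be the main obstacle is getting the transition at a high-degree vertex fully correct: matching each apex-$v$ link to its one or two child subtrees, keeping the thinness count at $v$ exact while links both enter from children and continue upward, and propagating the $\Drop_U$ coverage bit for the at-most-one $P_u$ through each edge without double counting when $v=t_u$. Everything else --- the state definition, the reduction to slack maximization already set up in Observation~\ref{obs:binSearchToDP}, the leaf base cases, and the backtracking --- is routine once the ``at most $k$ links per vertex'' bookkeeping is pinned down.
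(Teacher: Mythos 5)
Your proposal follows essentially the same route as the paper: a leaves-to-root dynamic program whose state for the subtree below a vertex $v$ consists of the at most $k$ solution links crossing the parent edge of $v$ together with a single bit recording whether the portion inside the subtree of the unique up-link crossing that edge is already covered, maximizing the partial slack (costs charged inside the subtree only), wrapped in the binary-search reduction; this is exactly the paper's table indexed by triples $(v,Y,x)$. The one step that, taken literally, would fail is your rule for propagating the coverage bit and for crediting $\rho\cdot w(u)$ when $t_u=v$: if $b_u$ lies strictly below $v$ in the subtree of the child $c_j$, then coverage of all of $P_u$ (respectively of $P_u[D_v]$) requires, besides the child's bit $\beta_j$, that the connecting edge $\{c_j,v\}$ be covered, which is equivalent to the \emph{child's} crossing set $S_j$ being nonempty --- not to $S_{e_v}\neq\emptyset$, which concerns the edge above $v$ and is irrelevant here; the bit $\beta_j$ alone can never ``certify that all of $P_u$ is covered''. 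This is precisely the point the paper enforces via the condition $Y_i\neq\emptyset$ in its Case~2 and via the fourth defining property of the family $\mathcal{Y}$. Since you explicitly flagged this transition as the delicate step, it is a fixable bookkeeping slip rather than a wrong approach: once the bit update and the drop credit are stated with the child's crossing set (and links crossing $e_v$ whose lower endpoint equals $v$ are allowed to enter the state at $v$ even though they occur in no child state), your transition coincides with the paper's case analysis and the rest of the argument, including the running-time bound and the nonemptiness flag for the ``moreover'' part, goes through.
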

Note that Lemma~\ref{lem:slack_maximization} indeed implies that, for any $\rho\in \mathbb{R}$, we can decide in polynomial time whether $\max\{\slack_{\rho}(C)\colon C\subseteq L \text{ is $k$-thin and } C\neq \emptyset\}\geq 0$, due to the following. Let $\overline{C}$ be a $k$-thin set as described in Lemma~\ref{lem:slack_maximization}. If $\overline{C} \neq \emptyset$, then $\slack_{\rho}(\overline{C}) = \max\{\slack_{\rho}(C)\colon C\subseteq L \text{ is $k$-thin and }C\neq \emptyset\}$ because $\overline{C}$ maximizes $\slack_{\rho}(\overline{C})$ over all $k$-thin sets $C\subseteq L$. Otherwise, if $\overline{C}=\emptyset$, then the maximum value of $\slack_{\rho}(C)$ over all $k$-thin sets is $\slack_{\rho}(\overline{C})=0$, and because Lemma~\ref{lem:slack_maximization} would have returned a nonempty maximizer if there had been one, we have $\max\{\slack_{\rho}(C)\colon C\subseteq L \text{ is $k$-thin and }C\neq \emptyset\}<0$.

Before expanding on our dynamic program, we observe that using Lemma~\ref{lem:slack_maximization} to perform binary search over $\rho$ readily implies Lemma~\ref{lem:finding_optimal_thin_components}, which we restate below for convenience.

\LemmaOptimalComponents*
\begin{proof}
We may assume without loss of generality that $w\colon L \to \mathbb{Z}_{>0}$ is integral by scaling up the weights if necessary. Moreover, we assume $U \neq \emptyset$, as the problem is trivial otherwise.
First, observe that $0\le \rho^* \le 1$ because our algorithm can always choose $C = \{u\}$ for any $u\in U$.
As discussed, for any $\rho\in [0,1]$, we can use Lemma~\ref{lem:slack_maximization} together with Observation~\ref{obs:binSearchToDP} to decide in polynomial time whether $\rho \geq \rho^*$ or $\rho < \rho^*$. 
Moreover, if $\rho \geq \rho^*$, we obtain a nonempty $k$-thin set $C \subseteq L$ with $\slack_{\rho}(C) \geq 0$. By Observation~\ref{obs:basicSlackRatioRel}, we then have $\sfrac{w(C)}{w(\Drop_U(C))}\leq \rho$.
Thus, using binary search, we can in polynomial time determine an interval $[a,b]$ with $\rho^* \in [a,b]$ and $b-a < \sfrac{1}{w(U)^2}$, together with a $k$-thin set $C\subseteq L$ satisfying $\sfrac{w(C)}{w(\Drop_U(C))} \leq b$.

We claim that this component $C$ is a minimizer of $\sfrac{w(C)}{w(\Drop_U(C))}$ among all $k$-thin sets, as desired, i.e., $\sfrac{w(C)}{\Drop_U(C)}=\rho^*$.
We suppose $\sfrac{w(C)}{\Drop_U(C)}>\rho^*$ and derive a contradiction.
Let $C^*\subseteq L$ be such that $\sfrac{w(C^*)}{\Drop_U(C^*)}=\rho^*$.
Because $w$ is integral, we obtain the following contradiction:
\[ \frac{1}{w(U)^2}\ >\ b-a\ \ge\
\frac{w(C)}{w(\Drop_U(C))} - \frac{w(C^*)}{w(\Drop_U(C^*))}
\ \ge\ \frac{1}{w(\Drop_U(C)) \cdot w(\Drop_U(C^*))} \ \ge\ \frac{1}{w(U)^2} \enspace,
\]
where the penultimate fraction in the above chain of inequalities has a non-zero denominator because $\sfrac{w(C)}{w(\Drop_U(C))} - \sfrac{w(C^*)}{w(\Drop_U(C^*))} = \sfrac{w(C)}{w(\Drop_U(C))} - \rho^* > 0$ by assumption.
Hence, the component $C$ that we computed fulfills $\sfrac{w(C)}{\Drop_U(C)}=\rho^*$.
\end{proof}

We remark that instead of binary search, as used above, one could also employ \citeauthor{megiddo_1979_combinatorial}'s~\cite{megiddo_1979_combinatorial} parametric search technique to find the value of $\rho^*$ in strongly polynomial time. This works out because our dynamic program to obtain $\overline{C}$ as described in Lemma~\ref{lem:slack_maximization} is a strongly polynomial time algorithm, where the value of $\rho$ appears linearly in each comparison that we perform during the algorithm.

\subsection{Proving Lemma~\ref{lem:slack_maximization} by dynamic programming}\label{sec:slack_max}

We now discuss a dynamic programming algorithm that computes in polynomial time a maximizer $\overline{C}\subseteq L$ of $\slack_{\rho}(C)$ over all $k$-thin sets $C\subseteq L$, with the additional property that $\overline{C}\neq \emptyset$ if there is a nonempty maximizer, thus implying Lemma~\ref{lem:slack_maximization}.

The dynamic program follows the canonical approach of going from leaves toward the root $r$ to build such a $k$-thin maximizer $\overline{C}\subseteq L$. More precisely, It computes $k$-thin link sets in subtrees of $G$ that successively get combined to eventually obtain $\overline{C}$. To formalize this approach, we use the following notation to deal with subtrees. To refer to the vertices of a subtree with root $v\in V$, we denote by $D_v \subseteq V$ the set of all descendants of $v$ in $G$.
Moreover, to refer to links (or edges) contained in a subtree, we write, for any set $X$ of links (or edges), $X[D_v]\subseteq X$ to denote all links (or edges) in $X$ with both endpoints in $D_v$.
Additionally, $\delta_X(D_v) \subseteq X$ denotes the set of links (or edges) in $X$ with exactly one endpoint in $D_v$.
Recall that the up-links $U$ have disjoint edge sets $P_u$ for $u\in U$. Hence, for any $v\in V$, the set $\delta_U(D_v)$ contains at most one up-link, and, if it does, then this up-link covers the last edge of the $r$-$v$ path in $G$.

To build up some intuition for the dynamic program, consider a vertex $v\in V$ and the subtree below this vertex, i.e., the one with vertices $D_v$. 
To better understand how partial solutions for this subtree can get extended to larger subtrees, let us first consider a $k$-thin set of links $Q\subseteq L$ such that each link $\ell\in Q$ interacts with the subtree below $v$, i.e., $\ell$ has at least one endpoint in $D_v$. 
To later extend $Q$ to a bigger subtree (i.e., to use $Q$ in the \emph{propagation step} of a dynamic program), there are only few things we need to know about $Q$. 
More precisely, let us partition $Q$ into the links $C=Q[D_v]$ with both endpoints in $D_v$ and the links $Y=\delta_U(D_v)$ with a single endpoint in $D_v$. 
See Figure~\ref{fig:table_entries_dp}.
The crucial characteristics of $Q$ that we need to know for propagation are:
\begin{itemize}
\item the set $Y$, which satisfies $|Y|\leq k$ because $Q$ is $k$-thin;
\item the slack when considering only those covered up-links that are contained in $U[D_v]$ and only accounting for the cost of links in $C$; we denote this slack by
\begin{equation*}
\slack_{\rho}(C, Y, v) \coloneqq \rho \cdot w\Bigl(\Drop_{U[D_v]}(C\cup Y)\Bigr) - w(C)\enspace;
\end{equation*}
\item if there is a link $u\in \delta_U(D_v)$, then we need to know whether the edges $P_u[D_v]$ are covered by $Q=C\cup Y$.
This information is needed to decide whether $u$ can later be dropped if $Q$ becomes part of a larger link set that covers the edges in $P_u\setminus P_u[D_v]$.
\end{itemize}

\begin{figure}[!ht]
\begin{center}
\begin{tikzpicture}[yscale=0.75, xscale=0.6]

\tikzset{
lks/.style={line width=2pt, densely dashed}
}

\tikzset{
ulks/.style={line width=1pt, densely dashed}
}

\draw[ultra thick, fill=black!20!white, opacity=0.3] plot [smooth cycle] coordinates { (3,3.5) (7.5,-1.5) (-1.2,-1)};

\node[black!60!white] () at (8.2,-1) {$D_v$};

\begin{scope}[every node/.style={thick,draw=black,fill=black,circle,minimum size=0pt, inner sep=1.3pt, outer sep=1.5pt}]

\node (1) at (6,6) {};
\node (1a) at (7,5) {};
\node (1b) at (8,4) {};
\node (1c) at (7,3) {};
\node (1d) at (8,3) {};
\node (1e) at (9,3) {};

\node (2) at (5,5) {};
\node (3) at (4,4) {};

\node (4) at (3,3) {};
\node (4a) at (4,2) {};
\node (4b) at (5,1) {};
\node (4c) at (4.4,0) {};
\node (4d) at (6,0) {};
\node (4e) at (5.4,-1) {};
\node (4f) at (7,-1) {};

\node (5) at (2,2) {};
\node (5a) at (3,1) {};
\node (6) at (1,1) {};
\node (7) at (-0.3,-0.5) {};
\node (8) at (1,-0.7) {};
\node (9) at (2.3,-0.5) {};
\end{scope}

\node[above=1pt] (r) at (1) {$r$};

\begin{scope}[very thick]
\draw (1)--(1a);
\draw (1a) -- (1b);
\draw (1b) -- (1c);
\draw (1b) -- (1d);
\draw (1b) -- (1e);

\draw (4) -- (4a);
\draw (4a) -- (4b);
\draw (4b) -- (4c);
\draw (4b) -- (4d);
\draw (4d) --(4e);
\draw (4d) -- (4f);

\draw (5)--(5a);

\draw (1) --(2);
\draw (2) -- (3);
\draw (3) --(4);
\draw (4) --(5) --(6);

\draw (6) -- (7);
\draw (6) -- (8);
\draw (6) -- (9);

\end{scope}

\node[left] (v) at (4) {$v$};
\node[right=5pt, darkred] (u) at (4,3) {$u$};

\begin{scope}[ulks,orange!90!black]
\draw[bend right=30] (6) to (7);
\draw[out=-110,in=130] (6) to (8);
\draw[bend right] (4f) to (4b);
\draw (5a) -- (4);
\draw (4a) -- (4c);
\end{scope}

\begin{scope}[ulks, darkred]
\draw (2) -- (4a);
\draw (5) -- (9);
\draw [bend left] (1a) to (1e);
\draw [bend left=30] (1c) to (1b);
\end{scope}

\begin{scope}[lks, green!60!black]
\draw[out=110,in=-175, looseness=1] (7) to (3);
\draw (4f) -- (1e);
\end{scope}

\begin{scope}[lks, blue]
\draw[bend right] (7) to (8);
\draw[bend right=10] (5a) to (4c);
\end{scope}

\begin{scope}[shift={(12,3)}]%
\def\ll{30mm} %
\def\vs{8mm} %

\node[right,green!50!black] at (0,-\vs) {$Y\ \textcolor{black}{\subseteq \delta_L(D_v)}$};
\node[right,blue] at (0,-2*\vs) {$C\ \textcolor{black}{\subseteq L[D_v]}$};
\node[right,orange!90!black] at (0,-3*\vs) {$\Drop_{U[D_v]}(C\cup Y) \ \textcolor{black}{\subseteq U[D_v]}$};
\node[right,darkred] at (0,-4*\vs) {$U\setminus\Drop_{U[D_v]}(C\cup Y)$} ;
\end{scope}

\end{tikzpicture}
 \end{center}
\caption{Illustration of the partial solutions we compute in the dynamic program. The blue links are those in $C$ and the green ones those in $Y$.
Up-links in $U$ are drawn in orange and red, where the orange ones are those in $\Drop_{U[D_v]}(C\cup Y)$.
In this example there is a link $u\in \delta_U(D_v)$. Moreover, the edges in $P_u[D_v]$ are covered by the links in $C\cup Y$.
Thus, in this example, the set $C$ corresponds to the triple $(v,Y,+)$, assuming $k\ge 3$.}\label{fig:table_entries_dp}
\end{figure}
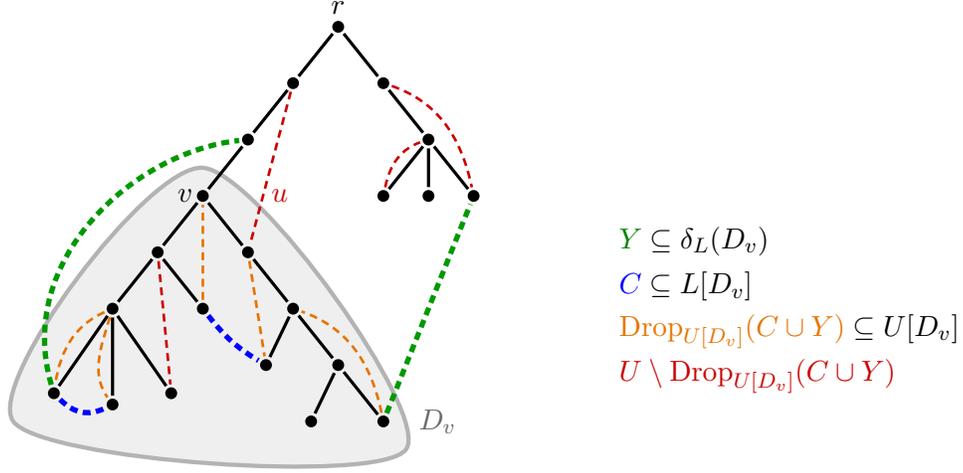

\noindent Our dynamic program will therefore construct link sets $C$ for triples $(v,Y,x)$ consisting of
\begin{enumerate}[label=(\alph*)]
\item\label{item:triplePropV} a vertex $v\in V$,
\item\label{item:triplePropY} a set $Y \subseteq \delta_L(D_v)$ with $|Y|\le k$, and
\item\label{item:triplePropX} $x\in \{+,-\}$.
\end{enumerate}
The value of $x$ being $+$ indicates that there is an up-link $u\in \delta_U(D_v)$ and the triple $(v,Y,x)$ represents a solution $C\cup Y$ that covers all edges of $P_u[D_v]$; otherwise, $x$ should equal $-$. We denote by $\mathcal{T}\subseteq V \times 2^L \times \{+,-\}$ all triples $(v,Y,x)$ fulfilling~\ref{item:triplePropV}--\ref{item:triplePropX}.
We now define formally, when a link set $C\subseteq L[D_v]$ corresponds to the triple $(v,Y,x)$, i.e., it complies with the above-mentioned interpretation of a triple.
\begin{definition}[link set corresponding to $(v,Y,x)$]
Let $(v,Y,x)\in \mathcal{T}$. A link set $C\subseteq L[D_v]$ \emph{corresponds} to the triple $(v,Y,x)$ if $C\cup Y$ is $k$-thin and the following holds. If $x$ equals $+$, then for $C$ to correspond to $(v,Y,x)$ we require that there exists an up-link $u\in \delta_U(D_v)$ and that $P_u[D_v]\subseteq \cup_{\ell\in C\cup Y} P_\ell$.
\end{definition}

We call a triple $(v,Y,x)\in \mathcal{T}$ \emph{feasible} if there exists a link set $C\subseteq L[D_v]$ that corresponds to $(v,Y,x)$. Otherwise, we call $(v,Y,x)$ \emph{infeasible}. By the above definition, this can be rephrased as follows.
\begin{observation}
A triple $(v,Y,x)\in \mathcal{T}$ is \emph{infeasible} if $x$ equals $+$ and either 
\begin{itemize}
\item $\delta_U(D_v) = \emptyset$, or 
\item there is a (single) up-link $u\in \delta_U(D_v)$, but there is no link set $C\subseteq L[D_v]$ such that $C\cup Y$ is $k$-thin and covers $P_u[D_v]$.
\end{itemize}
Otherwise, the triple $(v,Y,x)$ is called \emph{feasible}.
\end{observation}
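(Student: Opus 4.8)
The plan is to verify the claimed characterization by simply unfolding the definition of ``link set corresponding to a triple'' and splitting into the two cases $x=-$ and $x=+$; no real argument beyond bookkeeping is needed.

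First I would dispatch the case $x=-$. Here the definition imposes no requirement on $C$ other than that $C\cup Y$ be $k$-thin, so I would exhibit $C=\emptyset$ as a witness: since $|Y|\le k$ by property~\ref{item:triplePropY}, the set $\emptyset\cup Y=Y$ is automatically $k$-thin, so $\emptyset$ corresponds to $(v,Y,-)$ and hence every triple with $x=-$ is feasible. This is consistent with the Observation, which only ever declares triples with $x=+$ infeasible.

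Next I would treat $x=+$. By the definition, a set $C\subseteq L[D_v]$ corresponds to $(v,Y,+)$ exactly when $C\cup Y$ is $k$-thin, some up-link $u\in\delta_U(D_v)$ exists, and $P_u[D_v]\subseteq\bigcup_{\ell\in C\cup Y}P_\ell$. Here I would invoke the fact recorded just before the definition --- namely that disjointness of the edge sets $P_u$ forces $|\delta_U(D_v)|\le 1$ --- to split into two subcases. If $\delta_U(D_v)=\emptyset$, then the clause ``there exists $u\in\delta_U(D_v)$'' fails no matter what $C$ is, so no $C$ corresponds to $(v,Y,+)$ and the triple is infeasible; this is the first bullet. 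If instead $\delta_U(D_v)=\{u\}$ for the (unique) up-link $u$, then a corresponding $C$ exists if and only if there is a set $C\subseteq L[D_v]$ with $C\cup Y$ being $k$-thin and $P_u[D_v]\subseteq\bigcup_{\ell\in C\cup Y}P_\ell$, i.e.\ exactly the second bullet. Combining the two cases gives the stated equivalence.

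I do not expect any genuine obstacle: the Observation is a direct rephrasing of the definition of feasibility. The only two points that require (minor) care are that $|Y|\le k$ already guarantees $k$-thinness of $Y$ (so the $x=-$ case can never be obstructed), and that $\delta_U(D_v)$ contains at most one up-link (so the existential clause ``there exists $u\in\delta_U(D_v)$ with $\dots$'' can be replaced by a condition on the single candidate up-link, matching the way the second bullet is phrased).
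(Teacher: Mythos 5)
Your proposal is correct and matches the paper's treatment: the paper offers no separate proof, presenting the observation as a direct rephrasing of the definition of a link set corresponding to a triple, which is exactly the unfolding you carry out (with the same two supporting facts, namely that $|Y|\le k$ makes $Y$ itself $k$-thin so $C=\emptyset$ witnesses feasibility when $x=-$, and that disjointness of the paths $P_u$ gives $|\delta_U(D_v)|\le 1$ so the existential clause reduces to a condition on the unique candidate up-link).
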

For each triple $(v,Y,x)\in \mathcal{T}$, our dynamic program decides whether it is infeasible and, if not, computes a set $C\subseteq L[D_v]$ with the following properties:
\begin{enumerate}[topsep=3pt,itemsep=1pt]
\item\label{item:dp1} $C\cup Y$ is $k$-thin;
\item\label{item:dp2} if $x$ equals $+$, in which case there exists a link $u\in \delta_U(D_v)$ because the triple $(v,Y,x)$ is feasible, we have $P_{u}[D_v] \subseteq \bigcup_{\ell\in C\cup Y} P_{\ell}$;
\item\label{item:dp3} $C$ maximizes $\slack_{\rho}(C,Y,v)$ among all link sets $C\subseteq L[D_v]$ satisfying~\ref{item:dp1} and~\ref{item:dp2}. Moreover, if there is a nonempty maximizer, then $C\neq \emptyset$.
\end{enumerate}
See Figure~\ref{fig:table_entries_dp}.
We denote the set $C$ that we compute for a feasible triple $(v,Y,x)\in \mathcal{T}$ by $C(v,Y,x)$. Following standard terminology for dynamic programs, the set $C(v,Y,x)$ is called the \emph{table entry} for the feasible triple $(v,Y,x)$. If $(v,Y,x)\in \mathcal{T}$ is infeasible, then we simply store in the table entry that this triple is infeasible.

Note that being able to efficiently compute, for all feasible triples $(v,Y,x)\in \mathcal{T}$, a link set $C(v,Y,x)$ satisfying the properties~\ref{item:dp1}--\ref{item:dp3}, implies Lemma~\ref{lem:slack_maximization}. Indeed, because $U[D_r]=U$ and $L[D_v]=L$, the set $C=C(r,\emptyset,-)$ maximizes $\slack_{\rho}(C) = \slack_{\rho}(C,\emptyset,r)$ among all $k$-thin sets $C\subseteq L$. Moreover, if there is a nonempty maximizer, then, due to property~\ref{item:dp3}, the computed maximizer $C(r,\emptyset,-)$ is nonempty.

\smallskip

We now discuss how to compute the table entries by starting from the leaves and propagating them up to the root. The table entries for leaves are trivial to compute and hence we focus on the propagation step of the dynamic program.
 More precisely, we discuss how to compute a table entry for a triple $(v,Y,x)\in \mathcal{T}$, assuming that we already computed the table entries for all triples $(v',Y',x')\in \mathcal{T}$, where $v'$ is a child of $v$ in $G$. 

Let $(v,Y,x)\in \mathcal{T}$ and let $v_1,\dots,v_m$ be the children of $v$ in $G$. First, if $x$ equals $+$ and $\delta_U(D_v) = \emptyset$, then the triple $(v,Y,x)$ is infeasible and we save this information as the table entry. Hence, in what follows, assume that there is a (single) link in $\delta_U(D_v)$ if $x$ equals $+$.
To compute the table entry $C(v,Y,x)$, we use the following observation about how any link set $C\subseteq L[D_v]$ that corresponds to the triple $(v,Y,x)$ naturally decomposes into link sets contained in the subtrees of the children of $v$ and constantly many further links. (Think of $C$ as a maximizer $C(v,Y,x)$ we try to find.) 
More precisely, $C\cup Y$ can be partitioned into the sets
\begin{itemize}[topsep=0pt,itemsep=1pt]
\item $C_i \coloneqq C\cap L[D_{v_i}]$ for $i\in \{1,\ldots, m\}$, and
\item $\overline{Y} \coloneqq Y \cup \{\ell \in C \colon v\in V_\ell \}$.
\end{itemize}
Then $C= (\overline{Y} \setminus Y) \cup \bigcup_{i=1}^m C_i$.
See Figure~\ref{fig:combining_partial_solutions}. 
\begin{figure}[!ht]
\begin{center}
\begin{tikzpicture}[scale=0.8]

\tikzset{
lks/.style={line width=1.5pt, densely dashed}
}

\clip (-2,-2.5) rectangle (11,4.5);

\draw[ultra thick, fill=black!20!white, opacity=0.3] plot [smooth cycle] coordinates { (-1,-1.5) (0,1.3) (1,-1.5)};
\draw[ultra thick, fill=black!20!white, opacity=0.3] plot [smooth cycle] coordinates { (2,-1.5) (3,1.3) (4,-1.5)};
\draw[ultra thick, fill=black!20!white, opacity=0.3] plot [smooth cycle] coordinates { (5,-1.5) (6,1.3) (7,-1.5)};
\draw[ultra thick, fill=black!20!white, opacity=0.3] plot [smooth cycle] coordinates { (8,-1.5) (9,1.3) (10,-1.5)};

\begin{scope}[every node/.style={thick,draw=black,fill=black,circle,minimum size=0pt, inner sep=1.3pt, outer sep=2pt}]
\node (0) at (4.5,3) {};
\node (1) at (0,1) {};
\node (2) at (3,1) {};
\node (3) at (6,1) {};
\node (4) at (9,1) {};
\end{scope}

\node[above left] (v) at (0) {$v$};
\node[left=4pt] (v1) at (1) {$v_1$};
\node[right=4pt] (v2) at (2) {$v_2$};
\node[right=4pt] (v3) at (3) {$v_3$};
\node[right=4pt] (v4) at (4) {$v_4$};

\begin{scope}[very thick]
\draw (0) -- (1);
\draw (0) -- (2);
\draw (0) -- (3);
\draw (0) -- (4);
\draw (4.5,4.5) -- (0);
\end{scope}

\begin{scope}[lks, green!60!black]
\draw[out=30, in=-50, looseness=1.5] (9.2,-0.3) to (4.7,4.5);
\draw[out=120, in=-120] (2.9, 0.0) to (4.3,4.5);
\end{scope}

\begin{scope}[lks, brown!80!black]
\draw[in=140,out=40, looseness=1.3] (-0.3,-0.1) to (8.9,-0.1);
\draw[out=60,in=120,looseness=1.8] (0.2,-0.3) to (2.7,-0.3);
\draw[in=-35, out=150, looseness=0.8] (9.2,0.5) to (0);
\end{scope}

\begin{scope}[lks, blue]
\draw (-0.3,-1.2) to (0.7,-1.2);
\draw (-0.7,-1.1)  -- (-0.1,-0.4);
\draw (2.5,-1) -- (3.4,-0.6);
\draw (5.4, -1) to (6.4,-0.4);
\draw (5.5,-0.3) to (3);
\draw  (8.5,-1) -- (9.5,-0.7);
\end{scope}

\node[green!60!black] () at (7.5,3) {$Y$};
\node[brown!80!black] () at (1.5,2.3) {$\overline{Y}\setminus Y$};
\node[blue] () at (0,-2.2) {$C_1$};
\node[blue] () at (3,-2.2) {$C_2$};
\node[blue] () at (6,-2.2) {$C_3$};
\node[blue] () at (9,-2.2) {$C_4$};

\end{tikzpicture}
 \end{center}
\caption{Illustration of the sets $C_i$, $Y$, and $\overline{Y}$.
}
\label{fig:combining_partial_solutions}
\end{figure}

Note that $|\overline{Y}|\le k$ because $\overline{Y}$ is a subset of the $k$-thin set $C\cup Y$ and all links $\ell\in \overline{Y}$ fulfill $v\in V_{\ell}$. 
Thus, $\overline{Y}$ has the following properties:
\begin{itemize}[topsep=3pt,itemsep=1pt]
\item $\overline{Y}\subseteq \{\ell\in L \colon v\in V_{\ell}\}$;
\item $\overline{Y}\cap \delta_L(D_v) = Y$;
\item $|\overline{Y}|\leq k$; 
\item if $x$ equals $+$ and if the link $u\in \delta_U(D_v)$ satisfies $u\not\in \delta_U(v)$, then $u\in \delta_U(D_{v_i})$ for some child $v_i$ of $v$ in $G$ and we have $\overline{Y}\cap \delta_L(D_{v_i}) \neq \emptyset$.
\end{itemize}
Let $\mathcal{Y}$ denote the family of all sets $\overline{Y}\subseteq L$ with these four properties. 
Because $|\overline{Y}|\leq k$ for all $\overline{Y}\in\mathcal{Y}$ and $k$ is constant, the family $\mathcal{Y}$ has only polynomial size. 

So far we have shown that for any set $C$ corresponding to the triple $(v,Y,x)$, the set $C\cup Y$ can be partitioned into sets $\overline{Y}\in \mathcal{Y}$ and sets $C_i \subseteq L[D_{v_i}]$ for $i=1,\dots,m$. 
Thus, to maximize $\slack_{\rho}(C,Y,v)$ over links sets $C\subseteq L[D_v]$ corresponding to the triple $(v,Y,x)$, we can proceed as follows.
First we enumerate over $\overline{Y}$, which can be done efficiently because $\mathcal{Y}$ has only polynomially many elements.
Then we find, for each $\overline{Y}\in \mathcal{Y}$, sets $C_i\subseteq L[D_{v_i}]$ for $i\in \{1,\ldots, m\}$ that maximize $\slack_{\rho}(C,Y,v)$ for $C\coloneqq (\overline{Y}\setminus Y)\cup \bigcup_{i=1}^m C_i$. 

\smallskip

We now discuss how we can find the sets $C_i$ for a fixed $\overline{Y} \in \mathcal{Y}$.
More precisely, we show how to efficiently find sets $C_i\subseteq L[D_{v_i}]$ for $i\in \{1,\ldots, m\}$ such that 
\begin{equation}\label{eq:COverlineY}
C_{\overline{Y}} \coloneqq (\overline{Y} \setminus Y) \cup \bigcup_{i=1}^m C_i
\end{equation} 
is $k$-thin, corresponds to the triple $(v,Y,x)$, and maximizes $\slack_{\rho}(C_{\overline{Y}},Y,v)$ among all such sets $C_{\overline{Y}}$.
If there is a non-empty maximizer, then the set $C_{\overline{Y}}$ we compute is nonempty.
Moreover, if there are no sets $C_i$ such that the resulting set $C_{\overline{Y}}$ as defined in~\eqref{eq:COverlineY} is $k$-thin and corresponds to the triple $(v,Y,x)$, then we will detect this.

This is all that remains to be done, because if $(v,Y,x)$ is feasible, then any set $C_{\overline{Y}}$ that maximizes $\slack_{\rho}(C_{\overline{Y}},Y,v)$ among all $\overline{Y}\in \mathcal{Y}$ is an optimal entry for the triple $(v,Y,x)\in \mathcal{T}$ (where we choose $C_{\overline{Y}}\ne \emptyset$ if a non-empty maximizer exists).
Otherwise, the triple $(v,Y,x)$ is infeasible. We detect this because we cannot find a set $C_{\overline{Y}}$ corresponding to $(v,Y,x)$ for any $\overline{Y}\in \mathcal{Y}$.

\smallskip

To find optimal sets $C_i$ for a fixed $\overline{Y}\in \mathcal{Y}$, we first observe that, for any sets $C_i\subseteq L[D_{v_i}]$ for $i\in \{1,\ldots, m\}$, we have
\begin{equation}\label{eq:slack_extension}
\slack_{\rho}(C_{\overline{Y}}, Y,v) = \sum_{i=1}^m \slack_{\rho}\Bigl(C_i,\ \overline{Y}\cap \delta_L(D_{v_i}),\ v_i\Bigr)  + \rho \cdot \hspace{-2em}\sum_{\substack{i\in I^+:\\ u_i \in \Drop_{U}(C_i\cup \overline{Y})}} \hspace{-2em} w(u_i) - w(\overline{Y} \setminus Y)\enspace,
\end{equation}
where $I^+ \subseteq \{1,\dots,m\}$ is the set of indices $i\in \{1,\ldots, m\}$ for which there is an up-link $u_i\in \delta_U(v) \cap \delta_U(D_{v_i})$.
Moreover, $C_{\overline{Y}}\cup Y$ is $k$-thin if and only if the sets $C_i$ are chosen such that $C_i \cup (\overline{Y}\cap \delta_L(D_{v_i}))$ is $k$-thin for all $i\in\{1,\dots, m\}$.

For each $i\in\{1,\dots,m\}$ let $Y_i\coloneqq (\overline{Y}\cap \delta_L(D_{v_i}))$.
Note that $u_i \in \Drop_U(C_i \cup \overline{Y})$ if and only if $u_i \in \Drop_U(C_i \cup Y_i)$.
Because of~\eqref{eq:slack_extension}, finding optimal sets $C_i$ reduces to finding, for each $i\in \{1,\ldots, m\}$, a set $C_i\subseteq L[D_{v_i}]$ that maximizes
\begin{equation*}
\begin{cases}
\slack_{\rho}(C_i, Y_i, v_i) + \rho\cdot w(u_i) &\text{if $i\in I^+$ and $u_i \in \Drop_{U}(C_i\cup Y_i)$}, \\
\slack_{\rho}(C_i, Y_i, v_i)                    &\text{otherwise}
\end{cases}
\end{equation*}
among all sets $C_i\subseteq L[D_{v_i}]$ for which $C_i \cup Y_i$ is $k$-thin. 
For each $i\in\{1,\dots,m\}$, such a maximizer $C_i$ is obtained as follows.
Let $C_i^-\coloneqq C(v_i,Y_i,-)$. Moreover, if the triple $(v_i,Y_i,+)$ is feasible, then we also define $C_i^+\coloneqq C(v_i,Y_i,+)$.
We choose $C_i$ to be either $C_i^-$ or, if $(v_i,Y_i,+)$ is feasible, possibly $C_i^+$, as described in the below case distinction. (See Figure~\ref{fig:cases_subtrees} for an illustration of the cases.)

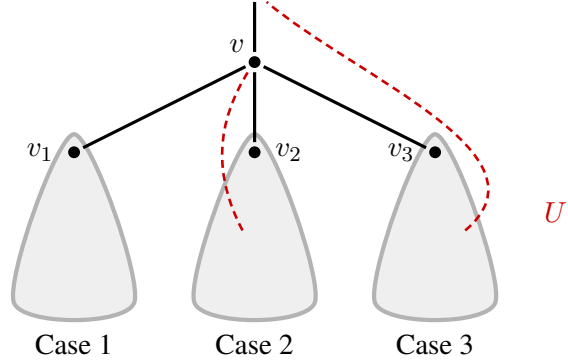
\begin{figure}[!ht]
\begin{center}
\begin{tikzpicture}[scale=0.8]

\tikzset{
lks/.style={line width=2pt, densely dashed}
}

\tikzset{
ulks/.style={line width=1pt, densely dashed}
}

\draw[ultra thick, fill=black!20!white, opacity=0.3] plot [smooth cycle] coordinates { (-1,-1.5) (0,1.3) (1,-1.5)};

\draw[ultra thick, fill=black!20!white, opacity=0.3] plot [smooth cycle] coordinates { (2,-1.5) (3,1.3) (4,-1.5)};

\draw[ultra thick, fill=black!20!white, opacity=0.3] plot [smooth cycle] coordinates { (5,-1.5) (6,1.3) (7,-1.5)};

\begin{scope}[every node/.style={thick,draw=black,fill=black,circle,minimum size=0pt, inner sep=1.3pt, outer sep=2pt}]

\node (0) at (3,2.5) {};
\node (1) at (0,1) {};
\node (2) at (3,1) {};
\node (3) at (6,1) {};

\end{scope}

\node (1a) at (0,-2.2) {Case 1};
\node (2a) at (3,-2.2) {Case 2};
\node (3a) at (6,-2.2) {Case 3};

\node[above left] (v) at (0) {$v$};
\node[left=4pt] (v1) at (1) {$v_1$};
\node[right=4pt] (v2) at (2) {$v_2$};
\node[left=4pt] (v3) at (3) {$v_3$};

\begin{scope}[very thick]
\draw (0) -- (1);
\draw (0) -- (2);
\draw (0) -- (3);
\draw (3,3.5) -- (0);
\end{scope}

\begin{scope}[ulks, darkred]
\draw[bend left] (2.8,-0.3) to (0);
\draw[out=40, in=-40] (6.5,-0.3) to (3.2,3.5);
\end{scope}

\node[darkred] () at (8,0) {$U$};

\end{tikzpicture}
 \end{center}
\caption{Illustration of the three cases for choosing $C_i\in \{ C_i^-,C_i^+\}$.
}
\label{fig:cases_subtrees}
\end{figure}

\begin{description}
\item[Case 1:] The set $\delta_U(D_{v_i})$ is empty.\\[1mm]
In this case we set $C_i \coloneqq C^-_i$.

\item[Case 2:] An up-link $u_i\in \delta_U(D_{v_i}) \cap \delta_U(v)$ exists. (Equivalently, $i\in I^+$.) \\[1mm]
If $Y_i \coloneqq\overline{Y} \cap \delta_L(D_{v_i}) = \emptyset$ or $(v_i, Y_i, +)$ is infeasible, we set $C_i \coloneqq C_i^-$.
Otherwise,
\begin{equation*}
C_i \coloneqq \begin{cases}
C_i^+ & \text{ if }
\slack_{\rho}\Bigl(  C_i^+,\ Y_i,\ v_i\Bigr) + \rho \cdot w(u_i) \ge 
\slack_{\rho}\Bigl(C_i^-,\ Y_i,\ v_i\Bigr)\enspace,
\\
C_i^- &\text{ otherwise}\enspace.
\end{cases}
\end{equation*}

\item[Case 3:] An up-link $u_i\in \delta_U(D_{v_i}) \cap \delta_U(D_v)$ exists.\\[1mm]
If $x$ equals $+$ and the tuple $(v_i,Y_i,+)$ is infeasible, then there are no sets $C_i \subseteq L[D_{v_i}]$ such that $C_{\overline{Y}}$ as defined in~\eqref{eq:COverlineY} corresponds to the triple $(v,Y,x)$.
Otherwise, we set $C_i \coloneqq C^x_i$.
\end{description}

One can easily check that, for a feasible triple $(v,Y,x)$ and fixed set $\overline{Y}\in \mathcal{Y}$, the above choice of $C_i$ leads to a $k$-thin set $C_{\overline{Y}}$ as defined in~\eqref{eq:COverlineY} that corresponds to $(v,Y,x)$ and, among all possible choices for the sets $C_i$, maximizes $\slack_{\rho}(C_{\overline{Y}},Y,v)$. Moreover, if there is no choice of the $C_i$ that leads to a set $C_{\overline{Y}}$ that corresponds to the triple $(v,Y,x)$, then this will be correctly detected in the third case above. Finally, the set $C_{\overline{Y}}$ is nonempty whenever there is a nonempty maximizer, because it is composed of sets $C_i$ that are nonempty whenever possible.

\smallskip

It remains to analyze the running time of the algorithm.
The number of triples $(v,Y,x)$ we consider is no more than $|\mathcal{T}|\leq |V| \cdot |V|^{2k} \cdot 2$. (Note that $|V|^{2k}$ is an upper bound on the number of subset $Y$ of $L\subseteq \left(\!\begin{smallmatrix} V\\ 2 \end{smallmatrix}\!\right)$ of up to $k$ links.) For each of these, the number of sets $\overline{Y}$ we enumerate can be bounded by $|\mathcal{Y}|\leq |V|^{2k}$ because $|\overline{Y}|\leq k$ for all $\overline{Y}\in\mathcal{Y}$. Because the number of children of a vertex $v$ is at most $|V|$, computing $C_{\overline{Y}}$ for a fixed set $\overline{Y}$ (and a fixed triple $(v,Y,x)\in \mathcal{T}$) takes time polynomially bounded in $|V|$.
Thus, the algorithm takes $|V|^{O(k)}$ time altogether, which is polynomial for constant $k$.
This completes the proof of Lemma~\ref{lem:slack_maximization}.
 
\printbibliography

\end{document}